\documentclass{article}
\usepackage[margin=1in]{geometry}
\usepackage[english]{babel}
\usepackage[utf8]{inputenc}
\usepackage{subcaption}
\usepackage{amsmath}
\usepackage{mathtools}
\usepackage{amssymb}
\usepackage{amsfonts}
\usepackage{amsthm}
\usepackage{mathrsfs}
\usepackage[all]{xy}
\usepackage{graphicx}
\usepackage{color}
\usepackage{cite}
\usepackage{url}
\usepackage[labelfont=bf,labelsep=period,justification=raggedright]{caption}
\usepackage[english]{babel}
\usepackage[utf8]{inputenc}
\usepackage{hyperref}
\usepackage[colorinlistoftodos]{todonotes}
\usepackage{tkz-fct}
\usepackage{tikz}
\usepackage{authblk}
\usetikzlibrary{calc}

\title{Non-Commutativity in Geodesic Congruences}
\author[1]{Aarav Shah\thanks{shahaarav103@zohomail.in}}
\author[2]{Louis H. Kauffman \thanks{loukau@gmail.com}}

\affil[1]{Virtual Institute of Astroparticle Physics, 75018 Paris, France}

\affil[2]{Department of Mathematics, Statistics and Computer Science University of Illinois at Chicago, 851 South Morgan Street, Chicago, IL, 60607-7045 
}

\date{\today}

\theoremstyle{plain}
\newtheorem{thm}{Theorem}
\newtheorem{lemma}[thm]{Lemma}

\newtheorem{prop}[thm]{Proposition}

\theoremstyle{definition}
\newtheorem{defn}[thm]{Definition}

\theoremstyle{remark}

\numberwithin{equation}{section}
\numberwithin{thm}{section}

\begin{document}

\maketitle
\begin{abstract}
   We propose a geometric realization of spacetime non-commutativity where the spacetime manifold remains smooth while the commutation relation between tangent and separation vectors of geodesic congruences is generalized via Frobenius' theorem. This non-commutative kinematic deformation leads to a modified Raychaudhuri equation with a shifted expansion scalar $\tilde \theta=\theta+\theta^{\rm NC}$. We introduce a timelike and null-like Non-Commutative Energy Condition (NCEC) under which focusing and geodesic incompleteness can, in some regimes, persist even when the standard strong energy condition is violated. Using this, we generalize the key focusing lemmas, conjugate-point propositions, and singularity theorems for both timelike geodesics orthogonal to a Cauchy surface and null geodesics orthogonal to a trapped surface. Applied to a FLRW spacetime, we argue that $\theta^{\rm NC}$ modifies both the Friedmann acceleration equation and the continuity equations. The non-commutative corrections separate into a universal modification of the dilution of the standard cosmological fluids and another contribution to the dark energy sector that is non-linear in energy density. Crucially, this provides a purely geometric origin for the evolution of dark energy, without introducing exotic dark energy fields or modifying Einstein's field equations.
\end{abstract}
\section{Introduction}
Understanding the microscopic structure of spacetime remains one of the central open problems in modern theoretical physics \cite{Carlip_2001,rovelli1998stringsloopsotherscritical,Blau:2009fzi}. While General Relativity \cite{Wald1984,HawkingEllis1973,oneill1983semiriemannian,joshi1993global,MisnerThorneWheeler1973,carroll2003spacetime} provides an exceptionally successful classical description of gravitation and the large-scale dynamics of the Universe, it is widely expected to break down at sufficiently high energies, where quantum effects become significant \cite{Carlip_2001,rovelli1998stringsloopsotherscritical,Donoghue_1994}. One of the recurring themes emerging from attempts to formulate a theory of quantum gravity is that the fundamental geometry of space time may require modification \cite{Carlip_2001,rovelli1998stringsloopsotherscritical}. Among the various possibilities that have been explored, spacetime non-commutativity, whose first published Lorentz-invariant construction dates back to Snyder \cite{Snyder:1946qz}\footnote{The underlying idea predates Snyder: in a 1930 letter to Rudolf Peierls, Heisenberg considered noncommuting position coordinates as a possible way of addressing short-distance singularities. The idea subsequently circulated through Peierls and Pauli to Oppenheimer and ultimately to his student Hartland Snyder, who provided its first published realization.}, has attracted considerable attention, motivated by developments in string theory \cite{Seiberg_1999,Ishibashi_1997}, non-commutative quantum field theory \cite{Douglas_2001,Szabo_2003}, deformation quantization \cite{Bayen:1977ha,Bayen:1977hb,10.4310/jdg/1214455536}, matrix models \cite{Connes1994,steinacker2013noncommutativegeometrymatrixmodels}, and non-commutative geometry \cite{Connes1994,Chamseddine_1997,Chamseddine_2007}.
\\
\\
In most existing approaches, however, spacetime non-commutativity is introduced by replacing the algebra of smooth functions on the spacetime manifold with a non-commutative algebra \cite{Connes1994,Bayen:1977ha,Bayen:1977hb,Douglas_2001,Szabo_2003}. In formulations based upon Moyal deformations \cite{Groenewold:1946kp,Moyal1949,Bayen:1977ha,Bayen:1977hb,Douglas_2001,Szabo_2003} or Connes' non-commutative geometry \cite{Connes1994,Chamseddine_1997,Chamseddine_2007}, the emphasis is therefore placed upon the non-commutativity of the spacetime coordinates themselves. Although these approaches have provided deep mathematical insights and have found numerous applications in high-energy physics \cite{Connes1994,Seiberg_1999,Douglas_2001,Szabo_2003}, they also replace, or substantially modify, the conventional geometric description of spacetime, making the connection with classical gravitational physics less direct.
\\
\\
Nevertheless, this need not be the only realization of spacetime non-commutativity. More generally, one may ask whether non-commutative effects can emerge while preserving the smooth differentiable manifold \cite{lee2012introduction,lee2018introduction,doCarmo1992Riemannian,spivak1999comprehensive,warner1983foundations} of General Relativity \cite{Wald1984,HawkingEllis1973,oneill1983semiriemannian}. Rather than modifying the underlying spacetime itself, it is natural to consider the possibility that the non-commutative structure is encoded in the kinematics of geodesic congruences \footnote{A related recent work \cite{Matt_Gregory_2026} has explored a field-theoretic perspective to the geodesic equation in a non-commutative space. However, it adopts a different underlying realization of non-commutativity and has different aims.}. Such a framework retains the standard differential-geometric setting while relaxing the assumption that the tangent vector field of a geodesic congruence commutes with the separation vector connecting neighboring geodesics. As follows from the Frobenius theorem \cite{frobenius1877ueber,warner1983foundations}, non-vanishing commutators between these vector fields arise naturally whenever the associated two-dimensional distribution is not integrable.
\\
\\
This relaxation of the commutativity assumption has direct consequences for the classical singularity theorems, which constitute one of the principal achievements of mathematical General Relativity. The Penrose-Hawking singularity theorems heavily rely on the focusing behavior of the expansion $\theta$ under the Raychaudhuri equation \cite{Raychaudhuri1957SingularSI,1955PhRv...98.1123R,HawkingEllis1973,Witten_2020}, given the strong energy condition on the Ricci tensor and a suitable initial convergence of a geodesic congruence, the equation forces $\theta\rightarrow -\infty$ within a finite affine or proper time, signaling the development of a conjugate point \cite{PhysRevLett.14.57,HawkingEllis1973,Witten_2020}. Combined with the global causal structure of spacetime, the existence of such conjugate points leads to geodesic incompleteness under the hypotheses of the singularity theorems \cite{PhysRevLett.14.57,PhysRevLett.17.444,Hawking:1970zqf,HawkingEllis1973,Witten_2020,Senovilla_2022}. Since this argument depends sensitively on the precise form of the Raychaudhuri equation and on the energy condition invoked, any modification to the underlying congruence kinematics necessarily alters the conditions under which geodesic incompleteness can be established. In the framework developed here, the physically relevant expansion is no longer $\theta$ itself but the shifted quantity $\tilde \theta=\theta+\theta^{\rm NC}$, and the Raychaudhuri equation for $\tilde \theta$ acquires additional terms proportional to $\theta^{\rm NC}$ and its proper-time derivative. We show that the ordinary strong energy conditions remain sufficient to guarantee focusing of $\tilde \theta$ provided $\theta^{\rm NC}$ stays finite along the geodesic up to the would-be focusing point, however such a stirng condition is no longer necessary in certian regimes and we can obtain weaker energy conditions due to the presence of non-commutativity. To capture the more general case we introduce a timelike and null-like non-commutative energy condition (NCEC), Crucially, in certain regimes, the NCEC can be satisfied even when the strong energy condition is violated outright, so that focusing, and consequently geodesic incompleteness, can persist in regimes where the classical Penrose-Hawking  \cite{Hawking:1967ju,Hawking:1970zqf,HawkingEllis1973} wouldn't apply. We use this modified criterion to establish non-commutative generalizations of the principal conjugate-point lemmas \cite{HawkingEllis1973,Witten_2020}, the extremal-length characterization of geodesics \cite{Wald1984,HawkingEllis1973,oneill1983semiriemannian,joshi1993global}, and ultimately of the two central singularity theorems: incompleteness of past-directed timelike geodesics orthogonal to a Cauchy surface of sufficiently negative extrinsic curvature \cite{Hawking:1967ju}, and incompleteness of null geodesics orthogonal to a trapped surface in a globally hyperbolic spacetime \cite{PhysRevLett.14.57}. In both cases, the logical structure of the classical proof is preserved essentially unchanged. 
\\
\\
Non-Commutativity of geodesic congruences is also attractive from a cosmological standpoint \cite{mukhanov2005physical,2003moco.book.....D,weinberg2008cosmology}. The standard cosmological model successfully describes a broad range of observations \cite{2020}, yet fundamental questions remain unresolved, including the physical origin of dark energy \cite{Perlmutter_1999,Riess_1998,Frieman_2008,Huterer_2017,Joyce_2015}, the Hubble tension \cite{Perivolaropoulos_2022,Di_Valentino_2021}, and the possibility that additional geometric effects become relevant on cosmological scales. Intriguingly, recent observational analyses, such as those from the Dark Energy Spectroscopic Instrument (DESI) \cite{Adame_2025,Gialamas_2025,Giar_2025} hint at a dynamical dark energy \cite{Ratra:1987rm,Wetterich_1988,Caldwell_2002,COPELAND_2006} equation of state over a static cosmological constant, renewing the quest for a natural mechanism behind dynamical dark energy. 
Since the Raychaudhuri equation \cite{1955PhRv...98.1123R,HawkingEllis1973} governs the evolution of geodesic congruences and therefore underlies the dynamical evolution of homogeneous and isotropic cosmologies, modifying its geometric foundations provides a natural avenue for investigating whether non-commutativity alone can generate new cosmological phenomena without introducing additional exotic matter scalar fields\footnote{Here, by the term “matter scalar field”, ” we refer to a scalar field representing an independent physical quantity and degree of freedom, distinct from the underlying geometric structure of spacetime.} or modifying Einstein's field equations. In contrast, previous field-theoretic approaches deriving geodesic equations for  Moyal-type non-commutative spacetimes, such as that of the authors of \cite{Matt_Gregory_2026}, do not lead to significant modifications of the underlying cosmological dynamics in an FLRW spacetime, at least at leading order.
\\
\\
The remainder of this paper is organized as follows. In Sec.\ref{Section 2} we review the standard construction of geodesic congruences, the derivation of the Raychaudhuri equation for both timelike and null congruences, and the classical machinery of conjugate points, extremal length, and the Penrose-Hawking singularity theorems, establishing the notation and results that will be generalized in the sections that follow. In Sec. \ref{Section 3} we introduce the non-commutative deformation of the congruence kinematics motivated by the Frobenius theorem \cite{frobenius1877ueber,warner1983foundations}, derive the resulting effective deformation tensor and modified Raychaudhuri equation, and construct the NCEC. We use this to establish generalized focusing lemmas, conjugate-point propositions, and extremal-length theorems, culminating in non-commutative versions of the singularity theorems for both timelike geodesics orthogonal to a Cauchy surface and null geodesics orthogonal to a trapped surface. In Sec.\ref{Section 4} we specialize the modified Raychaudhuri equation to a homogeneous and isotropic FLRW spacetime, derive the corresponding modification to the continuity equations for matter, radiation, and curvature, and identify another contribution that is non-linear in energy density, with the dark-energy sector. We then conclude
our work in Sec.\ref{Section 5}.
\section{Singularities}\label{Section 2}
This section reviews standard results on geodesic congruences and the singularity theorems, following Wald \cite{Wald1984} as the main reference.
\\
\\
Let $\mathcal{M}$ be a manifold and let $O\subset \mathcal{M}$ be open. A congruence in $O$ is a family of curves such that through each $p\in O$ there passes precisely one curve in this family. Thus, the tangents to a congruence yield a vector field in $O$, and, conversely, every continuous vector field generates a congruence of curves. The congruence is said to be smooth if the corresponding vector field is smooth. 
\\
\\
Consider, first, a smooth congruence of timelike geodesics \cite{Wald1984,ellis2012relativistic}. Without loss of generality, we may assume that the geodesics are parametrized by proper time $\tau$, so that the vector field $\xi^a$, of tangents, is normalized to unit length, $U^\mu U_\mu=-1$. Then, the tensor field, $B_{\mu\nu}$, defined by
\begin{equation}
    B_{\mu\nu}=\nabla_{\nu}U_{\mu},
\end{equation}
will be purely spatial, i.e,
\begin{equation}
    B_{\mu\nu}U^\mu=B_{\mu\nu}U^\nu=0.
\end{equation}
The physical interpretation of $B_{\mu\nu}$ may be seen from the following considerations. Consider a smooth one-parameter subfamily $\gamma_s(\tau)$ of geodesics in the congruence, and let $V^a$ be the orthogonal deviation vector from $\gamma_0$ for this subfamily. Then $V^\mu$ represents an infinitesimal spatial displacement from $\gamma_0$ to a nearby geodesic in the subfamily. In the conventional formalism, the tangent and deviation vectors commute. In other words, the lie derivative \cite{schutz1980geometrical,lovelock1989tensors,lee2012introduction} of the deviation vector with respect to the tangent vector vanishes,
\begin{equation}
    \mathcal{L}_{U}V^\mu=[U,V]^{\mu}=0.
\end{equation}
Thus,
\begin{equation}
    U^\mu\nabla_\mu V^\nu=V^\nu\nabla_\mu U^\mu=B^\nu_\mu V^\mu,
\end{equation}
and hence, $B^\mu_\nu$ measures the failure of $V^\nu$ to be parallelly transported. An observer on the geodesic $\gamma_0$ would find the nearby geodesics surrounding him to be stretched and rotated by the linear map $B^a_b$.
\\
\\
We define the spatial metric $P_{\mu\nu}$ by
\begin{equation}
    P_{\mu\nu}=g_{\mu\nu}+U_aU_b.
\end{equation}
Thus, $P^\mu_\nu=g^{\mu\rho}h_{\rho \nu}$ is the projection operator onto the subspace of the tangent space perpendicular to $U^a$. We define the expansion $\theta$, shear $\sigma_{\mu\nu}$, and twist $\omega_{\mu\nu}$ of the congruence by
\begin{equation}
    \theta=B^{\mu\nu}h_{\mu\nu},
\end{equation}
\begin{equation}
    \sigma_{\mu\nu}=B_{(\mu\nu)}-\frac{1}{3}\theta P_{\mu\nu},
\end{equation}
\begin{equation}
    \omega_{\mu\nu}=B_{[\mu\nu]}.
\end{equation}
Thus, $B_{\mu\nu}$ is decomposed as
\begin{equation}
    B_{\mu\nu}=\frac{1}{3}\theta P_{\mu\nu}+\sigma_{\mu\nu}+\omega_{\mu\nu}.
\end{equation}
It is clear that the geodesic equation,
\begin{equation}
    U^{\nu}\nabla_{\nu}U^\mu=0,
\end{equation}
will yield equations for the rate of change of $\theta,\sigma_{\mu\nu},\omega_{\mu\nu}$. We have
\begin{equation}\label{geodesic deviaiton equation}
\begin{split}
U^{\rho}\nabla_{\rho}B_{\mu\nu}=U^\rho\nabla_\rho\nabla_\nu U_\mu=U^\rho\nabla_\nu\nabla_\rho U_\mu+R_{\rho\nu\mu}^{\sigma}U^{\rho}U_{\sigma}\\
=\nabla_{\nu}(U^\rho\nabla_\rho U_\mu)-(\nabla_\nu U^\rho)(\nabla_\rho U_\mu)+R_{\rho\nu\mu}^{\sigma}U^{\rho}U_{\sigma}\\
=-B^\rho_\nu B_{\mu \rho}+R_{\rho\nu\mu}^{\sigma}U^{\rho}U_{\sigma}.
\end{split}
\end{equation}
Taking the trace of eqn.\eqref{geodesic deviaiton equation}, we obtain
\begin{equation}\label{Raychaudhuri's equation}
    U^\rho \nabla_\rho \theta=\frac{d\theta}{d\tau}=-\frac{1}{3}\theta^2-\sigma_{\mu\nu}\sigma^{\mu\nu}+\omega_{\mu\nu}\omega^{\mu\nu}-R_{\rho\sigma}U^\rho U^\sigma .
\end{equation}
The trace-free, symmetric part of eqn.\eqref{geodesic deviaiton equation} yields
\begin{equation}
   U^\rho \nabla_\rho \sigma_{\mu\nu}=-\frac{2}{3}\theta \sigma_{\mu\nu}-\sigma_{\mu\rho}\sigma^{\rho}_{\nu}-\omega_{\mu\rho}\omega^{\rho}_{\nu}+\frac{1}{3}h_{\mu\nu}(\sigma_{\rho\sigma}\sigma^{\rho\sigma}-\omega_{\rho\sigma}\omega^{\rho\sigma})+C_{\rho\nu\mu\sigma}U^\rho U^\sigma +\frac{1}{2}\tilde R_{\mu\nu},
\end{equation}
where $\tilde R_{\mu\nu}$ is the spatial, trace-free part of $R_{\mu\nu}$, i.e,
\begin{equation}
    \tilde R_{\mu\nu}=h_{\mu\rho}h_{\nu\sigma}R^{\rho\sigma}-\frac{1}{3}h_{\mu\nu}h_{\rho\sigma}R^{\rho\sigma}.
\end{equation}
Finally, the antisymmetric part of eqn.\eqref{geodesic deviaiton equation}
\begin{equation}
    U^{\rho}\nabla_{\rho}\omega_{\mu\nu}=-\frac{2}{3}\theta\omega_{\mu\nu}-2\sigma^\rho_{[\nu}\omega_{\mu]\rho}.
\end{equation}
Eqn.\eqref{Raychaudhuri's equation} is known as Raychaudhuri's equation \cite{1955PhRv...98.1123R,Raychaudhuri1957SingularSI} and is the key equation used in the proof of the singularity theorems. We turn our attention, now, to investigate the positivity property of the last term on the right-hand side of this equation. Using Einstein's equations, we may write this term as
\begin{equation}\label{einsten equation ricci tnesor term}
    R_{\mu\nu}U^\mu U^\nu=8\pi\left[T_{\mu\nu}-\frac{1}{2}Tg_{\mu\nu}\right]U^\mu U^\nu=8\pi\left[T_{\mu\nu}U^{\mu}U^{\nu}+\frac{1}{2}T\right]
\end{equation}
Now, $T_{\mu\nu}U^{\mu}U^{\nu}$ physically represents the energy density of matter as measured by an observer whose 4-velocity is $U^\mu$. It is generically believed that for all physically reasonable classical matter this energy density is nonnegative, i.e,
\begin{equation}
    T_{\mu\nu}U^{\mu}U^{\nu}\geq 0
\end{equation}
for all timelike $U^\mu$. This assumption is known as the weak energy condition. However, it also seems physically reasonable that the stresses of matter will not become so large and negative as to make the right-hand side of eqn.\eqref{einsten equation ricci tnesor term} negative. This assumption that
\begin{equation}
    T_{\mu\nu}\geq -\frac{T}{2}
\end{equation}
for all unit timelike $U^\mu$ is known as the strong energy condition \cite{HawkingEllis1973,Wald1984,Curiel_2017}. 
\\
\\
Let us now return to Raychaudhuri's equation \eqref{Raychaudhuri's equation}. As discussed above, if Einstein's equation holds and the strong energy condition is satisfied by $T_{\mu\nu}$, then the last term of the right-hand side of eqn.\eqref{Raychaudhuri's equation} will be negative. This can be interpreted, physically, as a manifestation of the attractiveness of gravity. If the congruence is hypersurface orthogonal, we have $\omega_{\mu\nu}=0$, so the third term vanishes. The second term, $-\sigma_{\mu\nu}\sigma^{\mu\nu}$, is manifestly non-positive. Thus, under these assumptions, we find
\begin{equation}
    \frac{d\theta}{d\tau}+\frac{1}{3}\theta^2\leq0,
\end{equation}
which implies
\begin{equation}
    \frac{d}{d\tau}(\theta^{-1})\geq \frac{1}{3},
\end{equation}
and hence
\begin{equation}
    \theta^{-1}(\tau)\geq \theta_0^{-1}+\frac{1}{3}\tau,
\end{equation}
where $\theta_0$ is the initial value of $\theta$. Suppose, now, that $\theta_0$ is negative, i.e, the congruence is initially converging. Then eqn.\eqref{Raychaudhuri's equation} implies that $\theta^{-1}$ must pass through zero, i.e, $\theta\rightarrow-\infty$, within a proper time $\tau\leq 3/|\theta_0|$. Thus, we have proven the following lemma.
\begin{lemma}\label{Focusing within proper time}
    Let $U^\mu$ be the tangent field of a hypersurface orthogonal timelike geodesic congruence. Suppose $R_{\mu\nu}U^\mu U^\nu \geq0$, as will be the case if Einstein's equations hold in the spacetime and the strong energy condition is satisfied by the matter. If the expansion $\theta$ takes negative value $\theta_0$ at any point on the geodesic congruence, then $\theta$ goes to $-\infty$ along that geodesic within proper time $\tau\leq 3/|\theta_0|$.
\end{lemma}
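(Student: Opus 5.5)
The argument is a differential inequality for $\theta$ along a single geodesic, drawn from Raychaudhuri's equation \eqref{Raychaudhuri's equation}. The first step is to use the hypotheses to fix the sign of every term except $-\frac{1}{3}\theta^2$. Because the congruence is hypersurface orthogonal, Frobenius' theorem gives $\omega_{\mu\nu}=0$, so the twist term drops out. The shear $\sigma_{\mu\nu}$ is purely spatial, since $\sigma_{\mu\nu}U^\mu=0$, and $P_{\mu\nu}$ is positive definite on the orthogonal complement of $U^\mu$; hence $\sigma_{\mu\nu}\sigma^{\mu\nu}\geq 0$. Finally, $R_{\mu\nu}U^\mu U^\nu\geq 0$ holds by assumption, or follows from Einstein's equations together with the strong energy condition through \eqref{einsten equation ricci tnesor term}. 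Together these give
\begin{equation*}
    \frac{d\theta}{d\tau}\leq -\frac{1}{3}\theta^2
\end{equation*}
along the geodesic, wherever $\theta$ is finite.

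The second step integrates this inequality. Parametrize the geodesic so that $\tau=0$ at the point where $\theta=\theta_0<0$. From $\dot\theta\leq -\theta^2/3\leq 0$, $\theta$ is non-increasing, so it stays negative, and in particular nonzero, for all $\tau\geq 0$ for which it is defined. Dividing by $\theta^2>0$ then gives $\frac{d}{d\tau}(\theta^{-1})=-\dot\theta/\theta^2\geq \frac13$, and integrating from $0$ to $\tau$ yields
\begin{equation*}
    \theta^{-1}(\tau)\geq \theta_0^{-1}+\frac{\tau}{3}=-\frac{1}{|\theta_0|}+\frac{\tau}{3}.
\end{equation*}

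The last step is the interpretation, and it is the only point needing care: dividing by $\theta$ requires $\theta\neq 0$, and the conclusion has to be phrased as blow-up. Since $\theta<0$, we have $\theta^{-1}<0$ throughout, which is compatible with the bound only while $\tau<3/|\theta_0|$. So $\theta$ cannot remain a finite smooth function on $[0,3/|\theta_0|]$. More precisely, $\theta^{-1}$ increases at rate at least $\frac13$ and lies in $(-1/|\theta_0|,0)$, so it must reach $0^-$ at some $\tau_*\leq 3/|\theta_0|$. This means $\theta\to-\infty$ within proper time $3/|\theta_0|$, provided the geodesic extends that far. I would add a remark that the smoothness of the congruence must break down there, which is the usual reading as a caustic or conjugate point.
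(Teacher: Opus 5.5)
Your proof is correct and follows the paper's argument essentially verbatim. Hypersurface orthogonality removes the twist, and the shear and Ricci terms are non-positive, so $\dot\theta+\theta^2/3\le 0$; this integrates to $\theta^{-1}(\tau)\ge\theta_0^{-1}+\tau/3$, which forces $\theta^{-1}\to 0^-$ by $\tau=3/|\theta_0|$. Your additional remarks (that $\theta$ is non-increasing and so stays negative, which justifies dividing by $\theta^2$, and the squeeze argument for the blow-up) only make explicit steps the paper leaves implicit.
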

In general, the singularity in $\theta$ implied by Lemma \ref{Focusing within proper time}, of course, represents merely a singularity in the congruence, not a singularity in the structure of spacetime.
\\
\\
We turn our attention, now, to null geodesic congruences. Again, we parameterize the geodesics by an affine parameter $\lambda$, but, unlike the timelike case, we now have no natural way of normalizing the tangent field $k^\mu$ and thereby adjusting the scale of $\lambda$ on different geodesics. In the timelike case, we restricted consideration to deviation vectors $V^\mu$ orthogonal to $U^\mu$. There actually were two independent reasons for doing so. 
\begin{itemize}
\item We have $U^\mu\nabla_\mu(U_\nu V^\nu)=U^\mu V_\nu\nabla_\mu V^\nu=U_\nu\mathcal{L}_U V^\nu+V^\mu U_\nu \nabla_\mu U^\nu=0$ provided that $U_\nu U^\nu$. Thus, $U_\mu V^\mu$ is constant along each geodesic, and the behavior of the “nonorthogonal” part of $V^\mu$ is uninteresting. 
\item Deviation vectors which differ only by a multiple of $U^\mu$ represent a displacement to the same nearby geodesic. Orthogonality fixes a natural “gauge condition” on $V^\mu$.
\end{itemize}
In the case of a null geodesic congruence, the above reasons for restricting the choice of the deviation vector still apply, but now they lead to two independent restrictions. First, for any deviation vector $V^\mu$, we may again have $k^\mu\nabla_\mu(k_\nu V^\nu)=k_\nu\mathcal{L}_kV^\nu+V^\mu k_\mu\nabla_\mu k^\mu=0$, so $k_\mu V^\mu$ doesn't vary along each geodesic. This implies that an arbitrary deviation vector $V^\mu$ may be written as the sum of a vector not orthogonal to $k^\mu$ which is parallelly propagated along the geodesic, plus a vector perpendicular to $k^\mu$. Thus, the behavior of the “nonorthogonal” part of $V^\mu$ again is uninteresting, and we may restrict consideration to deviation vectors satisfying $V_\mu k^\mu=0$. Second, deviation vectors which differ only by a multiple of $k^\mu$ again represent a displacement to the same nearby geodesic. Thus, the physically interesting quantity is really the equivalence class of deviation vectors, where two deviation vectors are considered equivalent if their difference is a multiple of $k^\mu$. Since $k^\mu$ is null and thus orthogonal to itself, this second restriction is independent of the first restriction, and it reduces the physically interesting class of deviation vectors to a two-dimensional subspace, as we shall now describe with more precision.
\\
\\
Let $V_p$ denote the tangent space at a point $p\in \mathcal{M}$. The tangent vectors in $V_p$ which are orthogonal to $k^a$ form a three-dimensional subspace, which we shall denote by $\tilde{V}_p$. We define $\tilde V_p$ to be the vector space of equivalence classes of vectors in $\tilde V_p$, where $x^a,y^a\in \tilde V_p$ are defined to be equivalent if there exists a number $c\in\mathbb{R}$ such that $x^\mu-y^\mu=ck^\mu$. Then $\hat{V}_p$ is a two-dimensional vector space, although it can be identified as a subspace of $V_p$ in a natural way. As explained above, the deviation vectors of interest live in the vector space $\hat{V}_p$.
\\
\\
A tangent vector $t^\mu\in V_p$ does not naturally give rise to a tangent vector space in $\tilde V_p$, because there is no natural way of decomposing it as a sum of a vector in $\tilde V_p$ and a vector not lying in $\tilde{V}_p$. However, if $t^\mu\in\tilde{V}_p$ (i.e., if $t^\mu k_\mu=0$), then $t^\mu$ naturally gives rise to a vector $\hat t^\mu$ in $\hat V_p$ taking its equivalence class. On the other hand, a dual vector $\mu_\nu\in V_p^*$ naturally gives rise to a dual vector $\tilde \mu_a \in (\tilde V_p)^*$ by restricting its action to vectors in $\tilde V_p$. However, $\tilde \mu_\nu$ naturally gives rise to a dual vector $\hat \mu _\nu \in (\hat V_p)^*$ if and only if $\tilde \mu_\nu k^\nu =\mu_\nu k^\nu =0$. More generally, a tensor $T^{\mu_1\dots \mu_n}_{\nu_1\dots \nu_n}$ over $V_p$ naturally gives rise to a tensor  $\hat T^{\mu_1\dots \mu_n}_{\nu_1\dots \nu_n}$ over $\hat V_p$ if and only if the result of contracting any one of its indices with $k_\mu$ or $k^\mu$ and contracting its remaining indices with vectors or dual vectors which give rise to elements of $\hat V_p$ or $(\hat V_p)^*$ is always zero. For tensors satisfying this property, the operation of taking outer products commutes with the "projection" into tensors over $\hat V_p$. Furthermore, a tensor satisfying the stronger property that contraction over any one of its indices into $k^\mu$ or $k_\mu$ gives zero, then projection into $\hat V_p$ commutes with contraction.
\\
\\
The spacetime metric $g_{\mu\nu}$ satisfying the above property and this gives rise to a tensor over $\hat V_p$, which we shall denote as $\hat h_{\mu\nu}$. It is not difficult to see that $\hat h_{\mu\nu}$ is a positive definite (i.e, signature $+ \space +$) metric on $\hat V_p$. (On the other hand, the tensor $\tilde h_{\mu\nu}$ over $\hat V_p$ obtained by restricting the action of $g_{\mu\nu}$ to vectors in $\hat V_p$ is degenerate, since $\tilde h_{\mu\nu} k^\nu=0$, so $\tilde h_{\mu\nu}$ is not a metric on $\tilde V_p$). Note that the inverse metric $\hat h^{\mu\nu}$ is just the “hatted” tensor \cite{Sachs:1962wk,Sachs1961_VI} associated with $g^{\mu\nu}$.
\\
\\
Consider, now, a congruence of null geodesics with tangent field $k^\mu$. The tensor field
\begin{equation}
    B_{\mu\nu}=\nabla_\nu k_\mu,
\end{equation}
also satisfies the above property and thus gives rise to a “hatted” tensor field $\hat B_{\mu\nu}$. We can decompose $\hat B_{\mu\nu}$ as
\begin{equation}\label{Hatted decomposition}
    \hat B_{\mu\nu}=\frac{1}{2}\theta \hat h_{\mu\nu}+\hat \sigma_{\mu\nu}+\hat \omega_{\mu\nu},
\end{equation}
with
\begin{equation}
    \theta =\hat P^{\mu\nu}\hat B_{\mu\nu},
\end{equation}
\begin{equation}
    \hat \sigma_{\mu\nu}=\hat B_{(\mu\nu)}-\frac{1}{2}\theta \hat P_{\mu\nu},
\end{equation}
\begin{equation}
    \hat \omega_{\mu\nu}=\hat B_{[\mu\nu]},
\end{equation}
so that $\theta,\hat\sigma_{\mu\nu},\hat \omega_{\mu\nu}$ again have the physical interpretation as, respectively, the expansion, shear, and twist of the congruence. The change in the numerical factor in the term $\frac{1}{2}\theta \hat P_{\mu\nu}$ (as compared with $\frac{1}{3}\theta \hat P_{\mu\nu}$ in the timelike case) arises simply because the relevant vector space is now two-dimensional rather than three-dimensional. The same derivation that led to eq.\eqref{geodesic deviaiton equation} in the timelike case now yields
\begin{equation}
    k^\rho\nabla_\rho B_{\mu\nu}+B^{\rho}_\nu B_{\mu\rho}=R_{\rho\nu\mu}^{\sigma}k_\sigma k^\rho .
\end{equation}
“Hatting” this equation, we obtain
\begin{equation}
    k^\rho \nabla_\rho \hat B_{\mu\nu}+\hat B^\rho_\nu \hat B_{\mu\rho}=\widehat{ R_{\sigma\rho \nu \mu}k^{\rho}k^{\sigma}}.
\end{equation}
Finally, substituting our decomposition equation\eqref{Hatted decomposition} for $\hat B_{\mu\nu}$ and taking the trace, symmetric trace-free, and antisymmetric parts of the equation, we obtain, respectively,
\begin{equation}\label{Null Raychaudhri}
    \frac{d\theta}{d\lambda}=-\frac{1}{2}\theta^2-\hat \sigma_{\mu\nu}\hat \sigma^{\mu\nu} +\hat \omega_{\mu\nu}\hat \omega^{\mu\nu}-R_{\rho \sigma}k^\rho k^\sigma,
\end{equation}
\begin{equation}
    k^\rho\nabla_\rho \hat \sigma_{\mu\nu}=-\theta \hat \sigma_{\mu\nu}+\widehat{C_{\rho \nu\mu\sigma}k^\rho k^\sigma},
\end{equation}
\begin{equation}
    k^\rho \nabla_\rho \hat \omega_{\mu\nu}=-\theta\hat \omega_{\mu\nu}.
\end{equation}
The nature of the equation \eqref{Null Raychaudhri} for the expansion of null geodesics is very similar to that of Raychaudhuri's equation \eqref{Raychaudhuri's equation}. Apart from the change in the factor $1/3$ to $1/2$, the only significant change is that, using Einstein's equation, we obtain
\begin{equation}
    R_{\mu\nu}k^\mu k^\nu=8\pi T_{\mu\nu}k^\mu k^\nu .
\end{equation}
Thus, all that is needed to ensure the last term of eqn.\eqref{Null Raychaudhri} is nonpositive is that for all null $k^\mu$, 
\begin{equation}\label{weak energy}
    T_{\mu\nu}k^\mu k^\nu\geq 0.
\end{equation}
If the strong energy condition holds, then for all timelike $U^\mu$ we have $T_{\mu\nu}U^\mu U^\nu-\frac{1}{2}TU^\mu U_\mu \geq 0$, and hence, for null-like $k^\mu$, this is just equivalent to eqn.\eqref{weak energy}. By the same argument that led to Lemma \ref{Focusing within proper time}, we obtain the following result
\begin{lemma}
    Let $k^\mu$ be the tangent field of a hypersurface-orthogonal null geodesic congruence. Suppose $R_{\mu\nu}k^\mu k^\nu \geq 0$, as will be the case if Einstein's equation holds in the spacetime and the weak or strong energy condition is satisfied by that matter. If the expansion $\theta$ takes the negative value $\theta_0$ at any point on a geodesic in the congruence, then $\theta $ goes to $-\infty$ along that geodesic within affine length $\lambda \leq 2/|\theta_0|$.
\end{lemma}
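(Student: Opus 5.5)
The plan is to repeat the argument behind Lemma \ref{Focusing within proper time}, applied to the null Raychaudhuri equation \eqref{Null Raychaudhri}, with the dimension factor $1/3$ replaced by $1/2$.

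\textbf{Step 1: remove the twist term.} Since $k^\mu$ is hypersurface orthogonal, $k_{[\mu}\nabla_\nu k_{\rho]}=0$ by Frobenius' theorem. After projection into $\hat V_p$ this gives $\hat\omega_{\mu\nu}=0$, so the term $\hat\omega_{\mu\nu}\hat\omega^{\mu\nu}$ in \eqref{Null Raychaudhri} vanishes. One could also use the twist evolution equation $k^\rho\nabla_\rho\hat\omega_{\mu\nu}=-\theta\hat\omega_{\mu\nu}$, but it is not needed.

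\textbf{Step 2: discard the remaining non-positive terms.} Because $\hat h_{\mu\nu}$ is positive definite on the two-dimensional space $\hat V_p$, we have $\hat\sigma_{\mu\nu}\hat\sigma^{\mu\nu}\geq 0$. By hypothesis $R_{\mu\nu}k^\mu k^\nu\geq 0$. Under Einstein's equation this holds whenever $T_{\mu\nu}k^\mu k^\nu\geq 0$, which follows from the weak energy condition by continuity (approximating null vectors by timelike ones), or from the strong energy condition as noted before \eqref{weak energy}. Together these give the differential inequality
\begin{equation*}
\frac{d\theta}{d\lambda}\leq -\frac{1}{2}\theta^2 .
\end{equation*}

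\textbf{Step 3: integrate.} Choose $\lambda=0$ at the point where $\theta=\theta_0<0$. On any interval where $\theta\neq 0$, the inequality is equivalent to $\frac{d}{d\lambda}(\theta^{-1})\geq \frac{1}{2}$. Since $d\theta/d\lambda\leq 0$, $\theta$ is non-increasing, so once it is negative it stays negative and is never zero. Integrating gives
\begin{equation*}
\theta^{-1}(\lambda)\geq \theta_0^{-1}+\frac{\lambda}{2}.
\end{equation*}
The left side is negative on the whole interval where $\theta$ is finite. The right side becomes zero at $\lambda=2/|\theta_0|$. Hence $\theta^{-1}$ must reach $0^-$ no later than this, which means $\theta\to-\infty$ within affine length $\lambda\leq 2/|\theta_0|$.

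\textbf{Main obstacle.} The only delicate point is a precise statement of this comparison argument. We must exclude the possibility that the solution stays finite on $[0,2/|\theta_0|]$, and this is exactly what the inequality for $\theta^{-1}$ rules out. We must also note that the conclusion is conditional on the geodesic extending to that affine length: the lemma asserts divergence of $\theta$ along the geodesic, not existence of the geodesic. A secondary point is justifying $\hat\omega=0$ from hypersurface orthogonality in the null setting, which goes through the projection to $\hat V_p$ and uses the fact that $k^\mu$ is itself tangent to the null hypersurface.
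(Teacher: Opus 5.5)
Your proposal is correct and follows the same route as the paper, which just says the null case follows ``by the same argument that led to Lemma \ref{Focusing within proper time}''. That argument applies the null Raychaudhuri equation \eqref{Null Raychaudhri}: drop the twist by hypersurface orthogonality, discard the non-positive shear and Ricci terms, and integrate $d(\theta^{-1})/d\lambda\geq 1/2$. Your extra remarks (monotonicity keeps $\theta<0$, the weak energy condition gives $T_{\mu\nu}k^\mu k^\nu\geq 0$ by continuity, and the geodesic must extend that far) only make explicit what the paper leaves implicit.
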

Now, let $\mathcal{M}$ be any manifold on which a connection is defined and let $\gamma$ be a geodesic with tangent $U^\mu$. A solution $V^\mu$, of the geodesic deviation equation
\begin{equation}\label{geodesic deviation}
    U^\mu \nabla_\mu (U^\nu \nabla_\nu V^\rho)=-R_{\mu\nu \sigma}^{\rho}V^\nu U^\mu U^\sigma,
\end{equation}
is called a Jacobi field on $\gamma$. A pair of points $p,q\in \gamma$ are said to be conjugate if there exists a Jacobi field $V^\mu$ which is not identically zero but vanishes at both $p$ and $q$ \cite{beem2017global,oneill1983semiriemannian,lee2012introduction,lee2018introduction}. 
\\
\\
Let $\gamma$ be a timelike geodesic with tangent $U^\mu$ and let $p\in \gamma$. Consider the congruence of all timelike geodesics passing through $p$. (This congruence, of course, is singular at $p$ itself.) Then every Jacobi field which vanishes at $p$ is a deviation vector for this congruence. We shall show that a point $q\in \gamma$ lying to the future of $p$ is conjugate to $p$ if and only if, the expansion, $\theta$, of this congruence approaches $-\infty$ at $q$. For this purpose, it is convenient to introduce an orthonormal basis of spatial vectors $e_1^a,e_2^a,e_3^a$ orthogonal to $V^\mu$ and parallelly propagated along $\gamma$. Since the deviation vector satisfies a linear ordinary differential equation (eqn.\eqref{geodesic deviation}), the value of $V^\mu$ at time $\tau$ must depend linearly on the initial data $V^\mu(\tau)$ and $dV^\mu/d\tau \space(0)$ at $p$. Since, by construction, $V^\mu (0)=0$ for this congruence, we must have
\begin{equation}\label{form ivp}
    V^\mu (\tau)= A^\mu_\nu(\tau) \frac{dV^\nu}{d\tau}(0).
\end{equation}
Substituting this into eqn.\eqref{geodesic deviation}, we see that the matrix $A^\mu_\nu (\tau)$ satisfies the equation
\begin{equation}
    \frac{d^2 A^\mu_\nu}{d\tau^2}=-R_{\rho\sigma\alpha}^\mu U^\rho U^\alpha A^\sigma_\nu.
\end{equation}
Clearly, we also have $A^\mu_\nu (0)=0$ and $d A^\mu_\nu/d\tau (0)=\delta^\mu_\nu$. Now, $q$ will be conjugate to $p$ if and only if there exists nontrivial initial data [i.e, $dV^\mu/d\tau \not =0$] for which $V^\mu =0$ at $q$. By eqn. \eqref{form ivp}, this occurs if and only if $\text{det} A^\mu_\nu=0$ at $q$, so $\text{det} A^\mu_\nu=0$ is the necessary and sufficient condition for a conjugate point to $p$. Note that between conjugate points we have $\text{det}A^\mu_\nu\not=0$, so that the inverse of $A^\mu_\nu$ exists. 
\\
\\
Clearly, the matrix $A^\mu_\nu$ must be related ti the tensor $B_{\mu\nu}=\nabla_\nu U_\mu$ of the congruence. To find this relation, we note that
\begin{equation}
    \frac{d V^\mu}{d\tau}=B^\mu_\rho V^\rho.
\end{equation}
However, using eqn.\eqref{form ivp}, we find
\begin{equation}
    \frac{dV^\mu}{d\tau}=\frac{dA^\mu_\nu}{d\tau}\frac{dV^\nu}{d\tau}(0),
\end{equation}
so we obtain
\begin{equation}
    \frac{dA^\mu_\nu}{d\tau}\frac{dV^\nu}{d\tau}(0)=B^\mu_\rho A^\rho_\nu \frac{dV^\nu}{d\tau}(0).
\end{equation}
Thus, in matrix notation, we find
\begin{equation}
    \frac{dA}{d\tau}= B A,
\end{equation}
i.e,
\begin{equation}
    B=\frac{d A}{d\tau} A^{-1}.
\end{equation}
Consequently, we obtain
\begin{equation}
    \theta=\text{Tr} B=\text{Tr}\left[\frac{dA}{d\tau}A^{-1}\right],
\end{equation}
where $\text{Tr}$ denotes the trace of a matrix. However, it follows from the formula for the inverse of any matrix that for any nonsingular matrix $A$, we have
\begin{equation}
    \text{Tr}\left[\frac{d A}{d\tau}A^{-1}\right]=\frac{1}{\text{det}A}\frac{d}{d\tau}(\text{det}A),
\end{equation}
so that
\begin{equation}\label{theta det A}
    \theta=\frac{d}{d\tau}(\text{ln}|\det A|).
\end{equation}
Since $A$ satisfies an ordinary differential equation, $d(\text{det} A)/d\tau$ cannot be infinite anywhere along $\gamma$. Therefore, if $\theta\rightarrow -\infty$ at $q$, it follows from eqn.\eqref{theta det A} that $\det A\rightarrow 0$ at $q$. Conversely, if $\det A\rightarrow 0$ at $q$, it follows that $\theta\rightarrow -\infty$ at $q$. Thus, as desired, we have proven that a necessary and sufficient condition for $q$ to be conjugate to $p$ is that for the congruence of all timelike geodesics emanating from $p$, we have $\theta\rightarrow -\infty$ at $q$. 
We may use Lemma \ref{Focusing within proper time} to establish the following result on the existence of conjugate points.
\begin{prop}
    Let ($\mathcal{M}$,$g_{\mu\nu}$) be a spacetime satisfying $R_{\mu\nu}U^\mu U^\nu\geq0$ for all timelike $U^\mu$. Let $\gamma$ be a timelike geodesic and let $p\in \gamma$. Suppose the convergence of the congruence of timelike geodesics emanating into the future from $p$ attains the negative value $\theta_0$ at $r\in\gamma$. Then within proper time $\tau\leq 3/|\theta_0|$ from $r$ along $\gamma$ there exists a point $q$ conjugate to $p$, assuming, of course, that $\gamma$ extends that far.
\end{prop}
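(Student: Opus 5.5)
The plan is to apply Lemma \ref{Focusing within proper time} to the congruence of future-directed timelike geodesics emanating from $p$, restricted to a neighbourhood of the portion of $\gamma$ beyond $r$, and then use the equivalence between $\theta\rightarrow-\infty$ and conjugacy established just above via eqn.\eqref{theta det A}.

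\textbf{Step 1: hypersurface orthogonality.} The lemma requires a vanishing twist, so first I check $\omega_{\mu\nu}=0$ for the congruence from $p$. I would argue this by showing the congruence is orthogonal to the level surfaces of the proper-time (Lorentzian distance) function from $p$, a Gauss-lemma statement. Alternatively, work with the matrix $A$: it satisfies $A(0)=0$, $dA/d\tau(0)=\delta$, and the Jacobi equation with the symmetric operator $R^\mu{}_{\rho\sigma\alpha}U^\rho U^\alpha$. Then the quantity
\[
W=\left(\frac{dA}{d\tau}\right)^{T}A-A^{T}\frac{dA}{d\tau}
\]
has vanishing derivative and vanishes at $\tau=0$, so $W\equiv0$. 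Multiplying on the left by $(A^{T})^{-1}$ and on the right by $A^{-1}$ shows that $B=\frac{dA}{d\tau}A^{-1}$ is symmetric wherever $A$ is invertible. Hence $\omega_{\mu\nu}=B_{[\mu\nu]}=0$ on $\gamma$ between $p$ and the first conjugate point.

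\textbf{Step 2: focusing.} The argument for Lemma \ref{Focusing within proper time} only uses the Raychaudhuri equation \eqref{Raychaudhuri's equation} along the single geodesic $\gamma$. With $\omega=0$, $\sigma_{\mu\nu}\sigma^{\mu\nu}\ge0$ and $R_{\mu\nu}U^\mu U^\nu\ge0$, it gives
\[
\frac{d\theta}{d\tau}\le-\tfrac13\theta^2 .
\]
Starting from $\theta(r)=\theta_0<0$, integrating $\theta^{-1}$ shows that $\theta\to-\infty$ within proper time $3/|\theta_0|$ after $r$, provided $\gamma$ extends that far.

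\textbf{Step 3: conjugacy and the main obstacle.} The divergence of $\theta$ means $\det A\to0$ at some $q$, so $q$ is conjugate to $p$ by the criterion derived above. The main obstacle is a potential circularity: Step 1 and the smoothness of $\theta$ both assume that no conjugate point lies between $r$ and the blow-up. This is handled by a dichotomy. Either a conjugate point already occurs before $\tau=3/|\theta_0|$, which proves the claim. Or $A$ stays invertible on that interval, so $\theta$ is smooth and twist-free there, and Step 2 forces a blow-up, hence a conjugate point, no later than $3/|\theta_0|$.
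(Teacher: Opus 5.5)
Your proposal is correct and follows the paper's route: the paper simply applies Lemma \ref{Focusing within proper time} to the congruence of geodesics from $p$. It then invokes the criterion, derived from eqn.~\eqref{theta det A}, that $\theta\to-\infty$ at $q$ if and only if $\det A\to 0$ there. Your Wronskian argument that $B=\frac{dA}{d\tau}A^{-1}$ is symmetric (so the twist vanishes), together with your dichotomy on the interval where $A$ is invertible, makes explicit the hypersurface orthogonality and non-circularity that the paper leaves implicit.
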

Next, we turn our attention to the relation between conjugate points and the extremal length properties of timelike geodesics. Let $p,q\in\mathcal{M}$ and consider a smooth one-parameter family of smooth timelike curves $\lambda_a(t)$ from $p$ to $q$, where the curve parameter $t$ is chosen so that for all $\alpha$ we have $\lambda_\alpha(a)=p,\lambda_\alpha(b)=q$. We denote the tangent vectors by $(\partial/\partial t)^\mu$, by $T^\mu$ and the deviation vectors, $(\partial/\partial\alpha)^\mu$, by $X^\mu$. Then $X^\mu$ vanishes at both $p$ and $q$ and, as usual, we have $\mathcal{L}_T X^\mu=T^\nu\nabla_\nu X^\mu -X^\nu\nabla_\nu T^\mu=0$ everywhere. The length of each curve is given by \cite{oneill1983semiriemannian,Wald1984,HawkingEllis1973}
\begin{equation}
    \tau(\alpha)=\int_a^b f(\alpha,t)dt,
\end{equation}
where $f=(-T^\mu T_\mu)^{1/2}$. We have 
\begin{equation}
\begin{split}
    \frac{d\tau}{d\alpha}=\int_a^b \frac{\partial f}{\partial \alpha} dt\\
    =\int_a^b X^\mu\nabla_\mu (-T^\nu T_\nu)^{1/2}dt\\
    =-\int_a^b \frac{1}{f}T_\nu X^\mu \nabla_\mu T^\nu dt \\
    = -\int_a^b \frac{1}{f}T_\nu T^\mu \nabla_\mu X^\nu dt\\=-\int_a^b T^\mu\nabla_\mu\left[\frac{1}{f}T_\nu X^\nu\right]dt +\int_a^b X^\nu T^\mu\nabla_\mu (T_\nu /f) dt\\=\int_a^b X^\nu T^\mu \nabla_\mu(T_\nu /f) dt,
\end{split}
\end{equation}
since $T^\mu\nabla_\mu (f^{-1}T_\nu X^\nu)=\partial (f^{-1}T_\nu X^\nu)/\partial t$ and $X^b$ vanishes at the endpoints. Thus, setting $\alpha=0$ we see that $d\tau/d\alpha=0$ for arbitrary $X^\nu$ if and only if $T^\mu\nabla_\mu (T_\nu/f)=0$ at $\alpha=0$, which is just the geodesic equation expressed in an arbitrary parameterization.
\\
\\
The second variation of arc length is
\begin{equation}
    \frac{d\tau^2}{d\alpha^2}=\int_a^b X^\rho \nabla_{\rho} [X^\nu T^\mu\nabla_\mu(T_\nu/f)]dt.
\end{equation}
Evaluating this expression at $\alpha=0$, assuming that $\lambda_0$ is a geodesic, we find
\begin{equation}\label{proper time second deivative 1}
    \begin{split}
        \frac{d^2\tau}{d\alpha^2}\bigg|_{\alpha=0}=\int_a^b X^\nu (X^\rho \nabla_\rho T^\mu)\nabla_\mu (T_\nu/f) dt +\int_a^b X^\nu T^\mu X^\rho \nabla_\rho \nabla_\mu (T_\nu /f) dt \\
        =\int_a^b X^\nu (T^\rho \nabla_\rho X^\mu)\nabla_\mu (T_\nu /f) dt +\int_a^b X^\nu T^\mu X^\rho \nabla_\mu \nabla_\rho (T_\nu f) dt+\int_a^b \frac{X^\nu T^\mu X^\rho R_{\rho \mu \nu}^\sigma T_\sigma}{f}dt \\
        =\int_a^b X^\nu T^\rho\nabla_\rho [X^\mu\nabla_\mu (T_\nu/f)] dt+\int_a^b \frac{X^\nu T^\mu X^\rho R_{\rho\mu\nu}^{\sigma} T_{\sigma}}{f}f dt.
    \end{split}
\end{equation}
The term in the square brackets can be reexpressed as
\begin{equation}
\begin{split}
    X^\mu\nabla_\mu (T_\nu/f)=\frac{1}{f}X^\mu\nabla_\mu T_\nu -\frac{1}{f^2}T_\nu X^\mu\nabla_\mu f\\
    =\frac{1}{f}T^\mu\nabla_\mu X_\nu +\frac{1}{f^2}T_\nu T^\mu \nabla_\mu\left[\frac{1}{f}T_\rho X^\rho\right].
\end{split}
\end{equation}
Eqn.\eqref{proper time second deivative 1} can be put in a more easily recognizable form by choosing the curve parametrization so that $f=1$ along the geodesic $\lambda_0$ and also choosing the deviation vector $X^\mu$ to be orthogonal to $T^\mu$ along $\lambda_0$. With these choices, we find
\begin{equation}
    \begin{split}
         \frac{d^2\tau}{d\alpha^2}\bigg|_{\alpha=0}=\int_a^b X^\nu\{T^\rho\nabla_\rho(T^\mu\nabla_\mu X_\nu)+R_{\rho\mu\nu}^{\sigma}T^\mu T_{\sigma}X^\rho \} dt\\
         =\int_a^b X^\nu (\mathcal{O}X)_\nu dt,
    \end{split}
\end{equation}
where $\mathcal{O}$ is the operator appearing in the geodesic deviation equation. Using this formula, we may establish the following theorem.
\begin{thm}\label{maximize conjuagte points}
    Let $\gamma$ be a smooth timelike curve connecting two points $p,q\in \mathcal{M}$. Then the necessary and sufficient condition that $\gamma$ locally maximizes the proper time between $p$ and $q$ over smooth one-parameter variations is that $\gamma$ be a geodesic with no point conjugate to $p$ between $p$ and $q$.
\end{thm}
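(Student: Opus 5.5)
We plan to follow the standard second-variation argument (as in Wald), built on the formula just derived,
\begin{equation*}
\frac{d^2\tau}{d\alpha^2}\bigg|_{\alpha=0}=\int_a^b X^\nu (\mathcal{O}X)_\nu\, dt ,
\end{equation*}
valid along a geodesic with $f=1$ and $X^\mu$ orthogonal to $T^\mu$. Throughout, $\mathcal{O}$ denotes the Jacobi (geodesic deviation) operator of eqn.\eqref{geodesic deviation}.

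\emph{Necessity of the geodesic condition.} This is immediate from the first-variation computation. A local maximum requires $d\tau/d\alpha=0$ for every variation vanishing at the endpoints. Since $X^\nu$ is arbitrary, this forces $T^\mu\nabla_\mu(T_\nu/f)=0$, i.e.\ $\gamma$ is a geodesic. We may therefore reparametrize by proper time and restrict attention to the quadratic form $Q(X)=\int X^\nu(\mathcal{O}X)_\nu\,dt$ on orthogonal fields vanishing at $p$ and $q$. A tangential part of $X$ only reparametrizes the curves and does not affect $\tau$ to second order.

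\emph{No conjugate points implies $Q$ negative definite (sufficiency).} Assume there is no point conjugate to $p$ on $(p,q]$. We build the congruence of geodesics emanating from $p$ and use its matrix $A(\tau)$. By the discussion preceding eqn.\eqref{theta det A}, $A$ is nonsingular on $(0,b]$. Any admissible $X$ can then be written as $X=A Y$ with $Y$ smooth on $(0,b]$. Near $p$, $Y$ extends smoothly because $A(\tau)\sim \tau\,\mathbb{1}$ and $X(0)=0$.

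We then integrate by parts, using two facts:
\begin{itemize}
\item $d^2A/d\tau^2=-R\,A$;
\item $B=\dot A A^{-1}$ is symmetric, because the congruence from $p$ is hypersurface orthogonal ($\omega=0$), or equivalently because the Wronskian $\dot A^T A-A^T\dot A$ vanishes initially and is conserved.
\end{itemize}
With these, $Q$ reduces to $-\int |A\dot Y|^2\,d\tau$ plus boundary terms, and the boundary terms vanish at both ends. Hence $Q\le0$, with equality only if $\dot Y=0$, i.e.\ $Y$ is constant. But $Y$ constant together with $X(b)=0$ and $\det A(b)\neq0$ gives $Y=0$. So $Q<0$ for every nontrivial variation, and $\gamma$ is a strict local maximum. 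The conserved-Wronskian symmetry is what makes the cross terms combine into a total derivative. We expect this step, together with the careful handling of the endpoint at $p$ where $A$ degenerates, to be the main technical obstacle.

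\emph{Conjugate point implies not maximal (necessity of no conjugate points).} Suppose $r\in(p,q)$ is conjugate to $p$, with a nontrivial Jacobi field $J$ vanishing at $p$ and $r$. Define $X=J$ on $[p,r]$ and $X=0$ on $[r,q]$. For this field $Q(X)=0$, since $\mathcal{O}J=0$. Note, however, that $X$ has a kink at $r$, because $\dot J(r)\neq0$ (otherwise $J\equiv0$).

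We then perturb: take $X_\epsilon=X+\epsilon Z$, with $Z$ smooth, vanishing at $p$ and $q$, and $Z(r)=\dot J(r)$. The kink contributes a boundary term, giving
\begin{equation*}
Q(X_\epsilon)=2\epsilon\,|\dot J(r)|^2+O(\epsilon^2)>0
\end{equation*}
for small $\epsilon>0$. Smoothing the corner changes $Q$ by an arbitrarily small amount. This produces a smooth variation with $d^2\tau/d\alpha^2>0$, so $\gamma$ is not a local maximum.

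Finally, we must treat a conjugate point at $q$ itself. If "between" is read as the open interval, this case is excluded. If it is read as including $q$, the degenerate case $Q=0$ on $J$ means the claim has to be understood as non-strict maximization, and we would note that convention explicitly.
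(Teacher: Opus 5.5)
Your proposal is correct and follows the same approach as the paper: the paper derives $d^2\tau/d\alpha^2=\int X^\nu(\mathcal{O}X)_\nu\,dt$ and then states the theorem without proof, deferring to Wald, and your $X=AY$ reduction to $-\int|A\dot Y|^2\,d\tau$ via the vanishing Wronskian, together with your truncated-Jacobi-field construction, supplies exactly those omitted standard details. One sign slip: with $\mathcal{O}X=\ddot X+(\text{tidal term})X$ as in the paper, the cross term in $Q(X+\epsilon Z)$ is $-2\epsilon\,\dot J(r)\cdot Z(r)$, so with $Z(r)=\dot J(r)$ you must take $\epsilon<0$ (or choose $Z(r)=-\dot J(r)$); the conclusion is unaffected.
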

A notation similar to conjugacy of two similar points along a timelike geodesic can be defined for a point and a smooth (or, at least,$C^2$) spacelike hypersurface $\Sigma$. First, we define the extrinsic curvature $K_{\mu\nu}$ of $\Sigma$. Let $U^\mu$ be the unit tangent field of the congruence of timelike geodesics orthogonal to $\Sigma$. We define $K_{\mu\nu}$ by
\begin{equation}
    K_{\mu\nu}=\nabla_\mu U_\nu =B_{\nu\mu},
\end{equation}
where the evaluation of these tensors on $\Sigma$  is understood. Note that $K_{\mu\nu}$ is purely spatial, i.e, $K_{\mu\nu}U^\mu=K_{\mu\nu}U^\nu=0$. Since this congruence is manifestly hypersurface orthogonal, we have $\omega_{\mu\nu}=0$ so $K_{\mu\nu}$ is symmetric, $K_{\mu\nu}=K_{\nu\mu}$. Hence, we can rexpress $K_{\mu\nu}$ as 
\begin{equation}
\begin{split}
    K_{\mu\nu}=\frac{1}{2}\mathcal{L}_{U}g_{\mu\nu}\\
    =\frac{1}{2}\mathcal{L}_u (h_{\mu\nu}-U_\mu U_\nu)\\
    =\frac{1}{2}\mathcal{L}_U h_{\mu\nu}.
\end{split}
\end{equation}
Now, $h_{\mu\nu}$ is the spatial metric induced on the hypersurface of constant proper time from $\Sigma$ along the geodesic congruence orthogonal to $\Sigma$. Thus, $K_{\mu\nu}$ measures the rate of change of this spatial metric as one moves along the congruence; i.e, it measures the “bending” of $\Sigma$ in the spacetime ($\mathcal{M},g_{\mu\nu}$). In Gaussian normal coordinates, we have
\begin{equation}
    K_{\mu\nu}=\frac{1}{2}\frac{\partial h_{\mu\nu}}{\partial t},
\end{equation}
which reinforces our interpretation of $K_{\mu\nu}$ representing the “time derivative” of $h_{\mu\nu}$. The trace of the extrinsic curvature is often denoted by $K$,
\begin{equation}
    K=K^{\mu}_{\mu}=h^{\mu\nu}K_{\mu\nu}.
\end{equation}
Thus, we have
\begin{equation}
    K=\theta,
\end{equation}
where $\theta$ is the expansion of the geodesic congruence orthogonal to $\Sigma$. A point $p$ on a geodesic $\gamma$ is said to be conjugate to $\Sigma$ along $\gamma$ is there exists an orthogonal deviation vector $V^\mu$ if the congruence which is nonzero on $\Sigma$ but vanishes at $p$. By the same argument as given above for conjugate pair of points, $p$ will be conjugate to $\Sigma$ if and only if the expansion, $\theta$, of the congruence of geodesics orthogonal to $\Sigma$ approaches $-\infty$ at $p$. Since this congruence is manifestly hypersurface orthogonal, Lemma \ref{Focusing within proper time} immediately yields the following result
\begin{prop}\label{focsuing extrinsic curvature}
    Let $(\mathcal{M},g_{\mu\nu})$ be a spacetime satisfying $R_{\mu\nu}U^\mu U^\nu\geq0$ for all timelike $U^\mu$. Let $\Sigma$ be a spacelike hypersurface with $K=\theta<0$ at a point $q\in \Sigma$. Then within proper time $\tau \leq 3/|K|$ there exists a point $p$ conjugate to $\Sigma$ along the geodesic $\gamma$ orthogonal to $\Sigma$ and passing through $q$, assuming that $\gamma$ can be extended that far.
\end{prop}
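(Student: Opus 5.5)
The plan is to combine Lemma \ref{Focusing within proper time} with the characterization of points conjugate to $\Sigma$ established just before the statement. First, take the congruence of timelike geodesics orthogonal to $\Sigma$, parametrized by proper time $\tau$ measured from $\Sigma$, with unit tangent $U^\mu$. This congruence is smooth in a neighbourhood of $\Sigma$, since $\Sigma$ is $C^2$ and the normal exponential map is a local diffeomorphism near $\tau=0$. By construction it is hypersurface orthogonal, so $\omega_{\mu\nu}=0$ wherever it remains smooth. At $q\in\Sigma$ the identification $K=\theta$ gives the initial value $\theta_0=K(q)<0$ along the geodesic $\gamma$ through $q$.

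Next, apply Lemma \ref{Focusing within proper time} along $\gamma$. The curvature hypothesis $R_{\mu\nu}U^\mu U^\nu\geq0$ holds for all timelike $U^\mu$, so it holds in particular for the tangent of $\gamma$. The lemma's integration of $\frac{d}{d\tau}(\theta^{-1})\geq\frac13$ from $\tau=0$ then shows that $\theta^{-1}$ reaches zero, i.e.\ $\theta\to-\infty$, at some $\tau_1\leq 3/|K|$, provided $\gamma$ extends that far. Here one should note that the Raychaudhuri equation \eqref{Raychaudhuri's equation} holds only while the congruence is smooth. The argument is therefore run on the maximal interval $[0,\tau_1)$ on which the congruence stays nonsingular along $\gamma$. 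If the congruence breaks down before $3/|K|$, that breakdown is itself the desired singular point; otherwise the lemma forces breakdown by $3/|K|$.

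Finally, translate the divergence of $\theta$ into conjugacy. Repeat the Jacobi-field argument given for a point, replacing the initial data: $V^\mu(0)$ is now arbitrary tangent to $\Sigma$, and $\frac{dV^\mu}{d\tau}(0)=K^\mu{}_\nu V^\nu(0)$. Write $V^\mu(\tau)=A^\mu{}_\nu(\tau)V^\nu(0)$ with $A(0)=I$ and $\dot A(0)=K$. Then $A$ satisfies $\ddot A=-R A$ (with the tidal matrix $R^\mu{}_{\rho\sigma\alpha}U^\rho U^\alpha$) together with $\dot A=BA$. As before this gives $\theta=\frac{d}{d\tau}\ln|\det A|$. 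Because $A$ solves a linear ODE with smooth coefficients along $\gamma$, it and $\frac{d}{d\tau}\det A$ stay bounded. Hence $\theta\to-\infty$ at $p=\gamma(\tau_1)$ forces $\det A(\tau_1)=0$, which yields a nonzero orthogonal deviation vector, nonzero on $\Sigma$, vanishing at $p$. So $p$ is conjugate to $\Sigma$ within proper time $3/|K|$.

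I expect the main obstacle to be this last step: rigorously justifying that $\det A\to0$ exactly where $\theta$ diverges. One must ensure that $\theta$ is defined via $A$ even past the possible breakdown of the congruence, using the ODE for $A$, which is globally defined along $\gamma$. One must also check that $\det A>0$ on $[0,\tau_1)$, so that $\ln|\det A|$ is differentiable there. Once this is in place, the bound $\tau\leq 3/|K|$ follows directly from the lemma.
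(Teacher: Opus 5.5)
Your proposal is correct and follows essentially the same route as the paper. You apply the focusing Lemma to the manifestly hypersurface-orthogonal congruence normal to $\Sigma$ with $\theta_0=K(q)$, and then convert $\theta\to-\infty$ into conjugacy to $\Sigma$ via the $\det A$ argument, which the paper invokes only as ``the same argument as given above for conjugate pair of points''; your explicit initial data $A(0)=I$, $\dot A(0)=K$ and your use of the ODE for $A$ along $\gamma$ to handle breakdown of the congruence fill in details the paper leaves implicit.
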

Furthermore, by arguments similar to those used to prove Theorem \ref{maximize conjuagte points}, the following theorem can be proven
\begin{thm}\label{maximise proper time conjugate surface}
    Let $\gamma$ be a smooth timelike curve connecting a point $p\in\mathcal{M}$ to a point $q$ on a smooth spacelike hypersurface $\Sigma$. Then the necessary and sufficient condition that $\gamma$ locally maximizes the proper time between $p$ and $\Sigma$ over smooth one-parameter variations is that $\gamma$ be a geodesic orthogonal to $\Sigma$ with no conjugate point to $\Sigma$ between $\Sigma$ and $p$.
\end{thm}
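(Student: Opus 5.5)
The plan is to repeat the argument behind Theorem \ref{maximize conjuagte points}, replacing the fixed endpoint $p$ at one end by the hypersurface $\Sigma$. I consider smooth one-parameter families $\lambda_\alpha(t)$ of timelike curves with $\lambda_\alpha(a)\in\Sigma$ and $\lambda_\alpha(b)=p$. The deviation vector $X^\mu$ therefore vanishes at $p$ but only has to be tangent to $\Sigma$ at the initial end.

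\emph{First variation.} The computation of $d\tau/d\alpha$ carries over unchanged, except that the boundary term at $t=a$ no longer vanishes. This gives
\begin{equation*}
\frac{d\tau}{d\alpha}\Big|_{\alpha=0}=\frac{1}{f}T_\nu X^\nu\Big|_{t=a}+\int_a^b X^\nu T^\mu\nabla_\mu(T_\nu/f)\,dt .
\end{equation*}
Because $X^\mu$ can be chosen arbitrarily in the interior and arbitrarily tangent to $\Sigma$ at $q$, stationarity holds if and only if two conditions are met. The interior condition forces $\gamma$ to be a geodesic. The boundary condition forces $T^\mu\perp T_q\Sigma$, i.e.\ $\gamma$ meets $\Sigma$ orthogonally.

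\emph{Second variation.} I take $\gamma$ to be an orthogonal geodesic with $f=1$ and $X^\mu\perp T^\mu$. I redo the calculation of eqn.\eqref{proper time second deivative 1} and keep the boundary term at $\Sigma$. Using $\mathcal{L}_TX=0$ and the definition $K_{\mu\nu}=\nabla_\mu U_\nu$, this term should reduce to a quadratic form in $X$ built from the extrinsic curvature. The expected result is
\begin{equation*}
\frac{d^2\tau}{d\alpha^2}\Big|_{\alpha=0}=\int_a^b X^\nu(\mathcal{O}X)_\nu\,dt \;+\; \big(\text{boundary term } \pm K_{\mu\nu}X^\mu X^\nu \text{ at } q\big),
\end{equation*}
possibly together with a term involving the acceleration of the transverse curve $\alpha\mapsto\lambda_\alpha(a)$ in $\Sigma$. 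That extra term drops out after contraction with $T_\nu$, since $T$ is normal to $\Sigma$. The admissible variations are then exactly the fields vanishing at $p$ and tangent to $\Sigma$, and the Jacobi fields satisfying the natural boundary condition $T^\mu\nabla_\mu X^\nu=K^\nu{}_\mu X^\mu$ at $q$ are precisely the orthogonal deviation vectors of the normal congruence to $\Sigma$.

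\emph{Both directions of the equivalence.} For sufficiency, assume there is no point conjugate to $\Sigma$ on $(q,p]$. Take the matrix $A(\tau)$ of the normal congruence, defined as in the point case but with initial data $A(0)=I$ and $\dot A(0)=K$. Then $\det A\neq 0$ on $[q,p]$, so every admissible $X$ can be written as $X=AY$ with $Y(b)=0$. Integrating by parts, $dA/d\tau=BA$ together with the symmetry of $B$ (since $\omega=0$) should turn the second variation into $-\int |A\,dY/d\tau|^2\,dt$. This is negative for $X\neq 0$, so $\gamma$ is a local maximum. For necessity, suppose a point $r$ strictly between $\Sigma$ and $p$ is conjugate to $\Sigma$. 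Take the Jacobi field $V$ of the normal congruence that vanishes at $r$, cut it off there (set it to zero beyond $r$), and add a small correction $\epsilon W$ with $W(r)=-\nabla_TV(r)\neq 0$. Then, as in the point-to-point case, the second variation becomes positive, so $\gamma$ is not a maximum. Smoothing the corner at $r$ is handled as in Theorem \ref{maximize conjuagte points}.

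\emph{Main obstacle.} The step I expect to be hardest is getting the sign and exact form of the boundary term at $\Sigma$ right. It has to combine with the Jacobi-field boundary condition $\dot V=KV$ so that the integration by parts in the sufficiency direction leaves no stray $K_{\mu\nu}X^\mu X^\nu$ term. My first attempt will be to verify this by writing $X=AY$ explicitly and checking that the boundary contributions at $q$ cancel identically because $\dot A(0)=K$.
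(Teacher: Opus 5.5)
Your plan is correct and is essentially the argument the paper intends. The paper gives no separate proof; it only says the result follows by arguments similar to those for Theorem~\ref{maximize conjuagte points}, i.e.\ the Wald-style first and second variation with one endpoint free to move on $\Sigma$, the representation $X=AY$ with $A(0)=I$, $\dot A(0)=K$ for sufficiency, and a cut-off Jacobi field plus a small correction for necessity.

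On the point you flagged: when the second variation is written as $\int_a^b X^\nu(\mathcal{O}X)_\nu\,dt$, the boundary term at $q$ is $X_\nu T^\mu\nabla_\mu X^\nu-K_{\mu\nu}X^\mu X^\nu$, not $\pm K_{\mu\nu}X^\mu X^\nu$ alone. It comes from $X^\mu\nabla_\mu T^\nu=T^\mu\nabla_\mu X^\nu$ and $T_\nu X^\mu\nabla_\mu X^\nu=-K_{\mu\nu}X^\mu X^\nu$, with $K$ built from the normal pointing toward $p$. After one integration by parts it becomes just $-K_{\mu\nu}X^\mu X^\nu$. This cancels exactly against the $+B_{\mu\nu}X^\mu X^\nu|_q=+K_{\mu\nu}X^\mu X^\nu$ produced by the total derivative $\frac{d}{d\tau}(B_{\mu\nu}X^\mu X^\nu)$ in your $X=AY$ computation, leaving $-\int|A\,dY/d\tau|^2\,d\tau$.
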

We turn our attention, now, to conjugate points of null geodesics. It follows directly from the geodesic deviation equation that for any Jacobi field $V^\mu$ on a null geodesic $\mu$ with tangent $k^\mu$, we have
\begin{equation}
    k^\rho\nabla_\rho[k^\nu\nabla_\nu(k^\mu V_\mu)]=0,
\end{equation}
which immediately implies that $V^\mu$ cannot vanish at two points $p,q\in\mu$ unless $k^\mu V_]mu=0$ everywhere along $\mu$. Furthermore, if $V^\mu$ is a Jacobi field, then so is $V^\mu+(a+b\lambda) k^\mu$, where $a$ and $b$ are constants, so $p$ and $q$ will be conjugate if and only if there exists a Jacobi field $V^\mu$ which differs from zero by only a multiple of $k^\mu$ at both $p$ and $q$. Thus, along a null geodesic $\mu$, the points $p,q\in \mu$ will be conjugate if and only if $\hat V^\mu$ in $\hat V$ satisfies the geodesic deviation equation and vanishes at $p$ and $q$. All such $\hat V^\mu$ which vanish at $p$ arise as deviation vectors of any null geodesic congruence containing the two-dimensional family of null geodesics emerging from $p$. By the same arguments as given above in the timelike case, $q$ will be conjugate to $p$ if and only if the expansion, $\theta$, of such a null geodesic congruence approaches $-\infty$ at $q$. Thus, Lemma \ref{Focusing within proper time} implies the following result
\begin{prop}
    Let $(\mathcal{M},g_{\mu\nu})$ be a spacetime satisfying $R_{\mu\nu}k^\mu k^\nu\geq 0$ for all null $k^\mu$. Let $\mu$ be a null geodesic and let $p\in \mu$. Suppose the convergence $\theta$ of the null geodesics emanating from $p$ attains the negative value $\theta_0$ at $r\in\mu$. Then, within affine length $\lambda\leq 2/|\theta_0|$ from $r$, there exists a point $q$ conjugate to $p$ along $\mu$, assuming that $\mu$ extends that far.
\end{prop}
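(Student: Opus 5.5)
The plan is to reduce the statement to the null focusing lemma above, combined with the characterization of null conjugate points through the expansion just established. The congruence will be the two-parameter family of null geodesics emanating from $p$. I will complete it, in a neighbourhood of $\mu$ away from $p$, to a null geodesic congruence containing $\mu$. Its tangent field $k^\mu$ is affinely parametrized along each generator, and $\theta$ is defined through the hatted tensor $\hat B_{\mu\nu}$ on the two-dimensional space $\hat V$.

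\textbf{Step 1: vanishing twist.} The congruence from $p$ is generated by the null cone of $p$, a null hypersurface, so it is hypersurface orthogonal and $\hat\omega_{\mu\nu}=0$. I will verify this exactly as in the timelike case, using the twist equation $k^\rho\nabla_\rho\hat\omega_{\mu\nu}=-\theta\hat\omega_{\mu\nu}$. Near $p$ one has $\hat B_{\mu\nu}\approx \hat h_{\mu\nu}/\lambda$, which is symmetric, so $\hat\omega$ vanishes in the limit at $p$. Since the twist equation is linear and homogeneous in $\hat\omega$, it then vanishes identically along $\mu$ for as long as the congruence is regular. This step is the only genuine subtlety, because the congruence is singular at $p$ itself; I expect it to be the main obstacle. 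I would handle it by starting the argument at any point $r$ strictly after $p$, where everything is smooth, and noting that hypersurface orthogonality is a property of the null cone and does not depend on the initial value.

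\textbf{Step 2: focusing.} With $\hat\omega=0$ and the hypothesis $R_{\mu\nu}k^\mu k^\nu\ge 0$, the null Raychaudhuri equation \eqref{Null Raychaudhri} gives
\begin{equation*}
\frac{d\theta}{d\lambda}+\tfrac12\theta^2\le 0 .
\end{equation*}
Dividing by $\theta^2$ (nonzero on the region where $\theta<0$) yields $d(\theta^{-1})/d\lambda\ge \tfrac12$. Integrating from $r$, where $\theta=\theta_0<0$, gives
\begin{equation*}
\theta^{-1}(\lambda)\ge \theta_0^{-1}+\tfrac12\lambda .
\end{equation*}
Hence $\theta^{-1}$ must reach $0$, that is $\theta\to-\infty$, at some affine distance $\lambda\le 2/|\theta_0|$ from $r$, provided $\mu$ extends that far. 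This is precisely the null lemma stated above.

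\textbf{Step 3: conjugate point.} I invoke the argument given just before the statement. Hatted Jacobi fields vanishing at $p$ are exactly the deviation vectors of this congruence, and, by the $\det A$ argument transplanted to the two-dimensional space $\hat V$, $\theta=\frac{d}{d\lambda}\ln|\det \hat A|$. Since $\hat A$ solves a linear ODE, $d(\det\hat A)/d\lambda$ stays finite, so $\theta\to-\infty$ at a point $q$ forces $\det\hat A(q)=0$. A nontrivial hatted Jacobi field therefore vanishes at both $p$ and $q$, and $q$ is conjugate to $p$ along $\mu$, within affine length $2/|\theta_0|$ of $r$.
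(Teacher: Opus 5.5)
Your proposal is correct and is essentially the paper's own argument: the paper obtains this proposition by combining the null focusing lemma for the (hypersurface-orthogonal) congruence of null geodesics from $p$ with the characterization, via $\theta=\frac{d}{d\lambda}\ln|\det\hat{A}|$, that $q$ is conjugate to $p$ exactly when $\theta\to-\infty$ at $q$. The paper takes the vanishing of the twist for granted, and you treat it carefully. Note that your ODE argument has to contend with the coefficient $\theta\sim 2/\lambda$ being singular at $p$, so it is best settled either by your fallback (the generators of the null cone are normal to it) or by observing that the Wronskian $\hat{A}'^{T}\hat{A}-\hat{A}^{T}\hat{A}'$ is constant along $\mu$ and vanishes at $p$, which forces $\hat{B}$ to be symmetric.
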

In analogy to Theorem \ref{maximize conjuagte points}, we obtain the following result, 
\begin{thm}
    Let $\mu$ be a smooth causal curve and let $p,q\in\mu$. Then there does not exist a smooth one-parameter family of causal curves $\lambda_a$ connecting $p$ and $q$ with $\lambda_0=\mu$ and $\lambda_\alpha$ timelike for all $\alpha>0$ (i.e,$\mu$ cannot be smoothly deformed to a timelike curve) if and only if $\mu$ is a null geodesic with no point conjugate to $p$ along $\mu$ between $p$ and $q$.
\end{thm}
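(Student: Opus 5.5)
The plan follows the route of Theorem \ref{maximize conjuagte points}, now for causal variations of a causal curve. There are two directions. First, if $\mu$ is not a null geodesic, I will exhibit a timelike deformation directly. Second, if $\mu$ is a null geodesic, a timelike deformation exists exactly when there is a point conjugate to $p$ strictly between $p$ and $q$. For both, I will consider a one-parameter family $\lambda_\alpha(t)$ from $p$ to $q$ with tangent $T^\mu$ and deviation $X^\mu$, with $\mathcal{L}_T X^\mu=0$ and $X^\mu=0$ at the endpoints. I will study the function $F(\alpha,t)=T^\mu T_\mu$, which vanishes identically in $t$ at $\alpha=0$ wherever $\mu$ is null. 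We need $F<0$ for $\alpha>0$, and since $F(0,t)=0$ this is governed by the derivatives $\partial_\alpha F$ and $\partial_\alpha^2 F$ at $\alpha=0$.

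\textbf{Non-geodesic case.} Compute $\partial_\alpha F=2T_\nu X^\mu\nabla_\mu T^\nu=2T_\nu T^\mu\nabla_\mu X^\nu = 2\,\frac{d}{dt}(T_\nu X^\nu)-2X^\nu T^\mu\nabla_\mu T_\nu$.
\begin{itemize}
\item If $\mu$ is timelike somewhere, that piece can be enlarged to a timelike curve by a small deformation (standard).
\item If $\mu$ is null but not a null geodesic (up to reparametrization), the acceleration $a_\nu=T^\mu\nabla_\mu T_\nu$ is not proportional to $T_\nu$ on some interval. There I choose $X^\nu$ with $X^\nu a_\nu>0$ and adjust the total derivative term via a choice of $T_\nu X^\nu$ vanishing at the ends. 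This makes $\partial_\alpha F<0$ on the whole open interval, giving timelike $\lambda_\alpha$ for small $\alpha>0$.
\end{itemize}
The same freedom shows that a null geodesic cannot be deformed at first order. The first-order term reduces to the total derivative $\frac{d}{dt}(T_\nu X^\nu)$, and demanding $F<0$ to first order with endpoint vanishing forces $T_\nu X^\nu$ to be monotone. That is compatible with zero boundary values only if $T_\nu X^\nu$ vanishes identically. So for a null geodesic I may take $X^\mu k_\mu=0$, and the deformation question moves to second order in $\alpha$.

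\textbf{Null geodesic case, second order.} With $T^\mu=k^\mu$ affinely parametrized and $X\cdot k=0$, I compute $\partial_\alpha^2F|_{\alpha=0}$ as in eqn.\eqref{proper time second deivative 1}. After integrating by parts with the endpoint conditions, the relevant quantity is
\begin{equation*}
\int_a^b \hat X^\nu(\mathcal{O}\hat X)_\nu\,dt ,
\end{equation*}
modulo terms that can be made of arbitrary sign by a choice of the $k^\mu$-component of $X$ (a total derivative again). Here $\hat X^\mu$ is the projection to $\hat V$, and the index form acts only on the two-dimensional screen space.
\begin{itemize}
\item If there is a conjugate point $r$ strictly between $p$ and $q$, take the Jacobi field vanishing at $p$ and $r$, extend it by zero to $q$, and smooth the corner with a small correction. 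The standard argument shows this makes the index form strictly positive, which means $F$ becomes negative. This yields a timelike deformation.
\item Conversely, if there is no conjugate point, the screen-space Jacobi matrix $A(\lambda)$ is invertible on $(p,q]$. Writing $\hat X=A\,y$, the index form becomes $-\int |A\,\dot y|^2\,d\lambda\le 0$ after integration by parts, using $B=\dot A A^{-1}$ and $\hat h$ positive definite. Hence the second-order term cannot give $F<0$ for any nontrivial variation. A degenerate variation pushes the question to higher order, where the same sign persists.
\end{itemize}

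\textbf{Main obstacle.} The hardest step is the rigorous passage from the sign of the second variation to actual timelike (or non-timelike) curves for all small $\alpha>0$. The point is that $F$ must be negative uniformly in $t$, including near the endpoints where $X$ vanishes, rather than just on average. I will first try to handle this by choosing the $k^\mu$-component of $X$ so that the pointwise integrand, not only its integral, has the right sign. This is possible because the $k$-component contributes exactly a total derivative that can redistribute the integrand. For the converse, I will use the no-conjugate-point positivity pointwise after the substitution $\hat X=Ay$.
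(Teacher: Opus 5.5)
The paper gives no proof of this statement; it only asserts it by analogy with Theorem~\ref{maximize conjuagte points}. A second-variation route like yours is therefore what is intended. Your non-geodesic case and your forward direction are sound in outline, modulo the correction in the second paragraph.

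\textbf{Gap in the converse.} The genuine gap is in the converse: no conjugate point $\Rightarrow$ no timelike deformation. Your expansion of $F=T^\mu T_\mu$ in $\alpha$ correctly forces $k_\mu X^\mu\equiv0$ at first order. Through $\hat X=Ay$ it also forces $\hat X\equiv0$ at second order. But this only eliminates families whose variation field has a nonzero screen part. Every family with $X^\mu=h(t)k^\mu$ at $\alpha=0$ survives, and for those you merely assert that ``the same sign persists'' at higher order.

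An induction over orders, reparametrizing away tangential jets, could at best treat families with some non-tangential finite-order jet, and you have not carried it out. More seriously, no finite-order expansion can settle the general case. A smooth family may agree with a reparametrization of $\mu$ to infinite order at $\alpha=0$. For example, take $\lambda_\alpha(t)=\exp_{\mu(t)}\bigl(e^{-1/\alpha}Y_\alpha(t)\bigr)$, where $Y_\alpha$ vanishes at the endpoints, is polynomially bounded in $1/\alpha$ together with its derivatives, and changes shape as $\alpha\to0$. Then every $\alpha$-derivative of $F$ vanishes at $\alpha=0$, and your argument says nothing about the sign of $F$ for $\alpha>0$.

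What is missing is a non-perturbative argument, which could go as follows.
\begin{enumerate}
\item Since no point of $(p,q]$ is conjugate to $p$, $\exp_p$ is a local diffeomorphism on a neighbourhood of the segment $\{sv:0\le s\le s_q\}\subset T_p\mathcal{M}$, where $\mu(t)=\exp_p(s(t)v)$.
\item For small $\alpha$, each $\lambda_\alpha$ therefore lifts to a curve $x_\alpha(t)$ starting at $0$. By continuity and local injectivity, its endpoint is forced to be $s_qv$.
\item By the Gauss lemma, $Q(x)=g_p(x,x)$ has gradient $2x$ for the pulled-back metric $\tilde g$. Along a future-directed timelike lift, $dQ/dt=2\tilde g(x_\alpha,\dot x_\alpha)<0$ as soon as $x_\alpha$ lies in the open timelike cone. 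This happens immediately, because $\dot x_\alpha(a)$ is timelike.
\item Hence $Q(x_\alpha(b))<0$, contradicting $Q(s_qv)=0$.
\end{enumerate}
This excludes all timelike curves from $p$ to $q$ in a $C^0$-neighbourhood of $\mu$ at once. Like your substitution $\hat X=Ay$, it requires $q$ itself not to be conjugate to $p$. If ``between'' means the open interval, the case where $q$ is the first conjugate point is left open by both arguments.

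\textbf{Correction in the forward direction.} You misidentify the freedom used to make the $\alpha^2$ coefficient of $F$ pointwise negative. You have shown that $k_\mu X^\mu$ must vanish. The component of $X$ along $k^\mu$ drops out of $\partial_\alpha^2F|_{\alpha=0}$ entirely. Writing $X^\mu=\hat X^\mu+hk^\mu$, it adds $2h'k_\mu k^\rho\nabla_\rho X^\mu=2h'\frac{d}{dt}(k_\mu X^\mu)=0$ to $(k^\rho\nabla_\rho X^\mu)(k^\sigma\nabla_\sigma X_\mu)$. It adds nothing to the curvature term, by the antisymmetries of the Riemann tensor.

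The exact derivative you can actually redistribute is $2\frac{d}{dt}(k_\mu X^\rho\nabla_\rho X^\mu)$. It comes from the second-order part of the family and is independent of $X$, so you should prescribe it explicitly. For example, take $\lambda_\alpha(t)=\exp_{\mu(t)}\bigl(\alpha X+\tfrac12\alpha^2 g\,l\bigr)$, where:
\begin{itemize}
\item $l^\mu$ is null, parallelly propagated, and satisfies $k_\mu l^\mu=-1$;
\item $g(a)=g(b)=0$;
\item $g'$ is chosen so that the $\alpha^2$ coefficient of $F$ equals its own average over $[a,b]$.
\end{itemize}
That average is negative for your smoothed Jacobi-field variation. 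Since the first-order coefficient vanishes identically, this gives $F<0$ uniformly on the closed interval, endpoints included, for all small $\alpha>0$.
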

For null geodesics, a notion of conjugacy can also be defined for a point and a two-dimensional spacelike surface (i.e. embedded submanifold) $S$. At each $q\in S$ there will exist precisely two future-directed null vectors $k_1^\mu,k_2^\mu$ which are orthogonal to $S$. If $S$ is orientable, we can make a continuous choice of $k_1^\mu$ and $k_2^\mu$ over $S$ and thereby define two families of null geodesics, which we may refer to as “outgoing” and “ingoing” families. We will refer to each of these families as a congruence even though each spans only a null hypersurface rather than an open region of spacetime. The expansion $\theta$, shear $\hat \sigma_{\mu\nu}$, and twist $\hat \omega_{\mu\nu}=0$ of these congruences are well defined, since all deviation vectors orthogonal to the geodesic tangents $k^\mu$ are included in the congruences. Let $\mu$ be a null geodesic in one of these congruences. A point $p\in\mu$ is said to be conjugate to $S$ along $\mu$ if along $\mu$ there exists a deviation vector $\hat V^\mu$ of the congruence which is nonzero on $S$ but vanishes at $p$. In analogy to Proposition \ref{focsuing extrinsic curvature}, we have
\begin{prop}\label{null conjugate}
    Let $(\mathcal{M},g_{\mu\nu})$ be a spacetime satisfying $R_{\mu\nu}k^\mu k^\nu\geq0$ for all null $k^\mu$. Let $S$ be a smooth two-dimensional spacelike submanifold such that the expansion, $\theta$, of, say, the “outgoing” null geodesics has the negative value $\theta_0$ at $q\in S$. Then, within affine parameter $\lambda\leq 2/|\theta_0|$, there exists a point $p$ conjugate to $S$ along the outgoing null geodesic $\mu$ passing through $q$.
\end{prop}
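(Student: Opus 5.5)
The plan follows the proof of Proposition \ref{focsuing extrinsic curvature}, transplanted to the null setting. There are two steps: show that the outgoing congruence from $S$ focuses, and then convert focusing into a conjugate point. First, take the null geodesics orthogonal to $S$ in the outgoing family with tangent $k^\mu$, affinely parametrized with $\lambda=0$ on $S$. As noted before the statement, all orthogonal deviation vectors belong to this family, so the hatted tensor $\hat B_{\mu\nu}$ is well defined along $\mu$. Because the family is generated from a surface, $\hat\omega_{\mu\nu}=0$ initially. The twist equation $k^\rho\nabla_\rho\hat\omega_{\mu\nu}=-\theta\hat\omega_{\mu\nu}$ is linear and homogeneous, so $\hat\omega_{\mu\nu}$ stays zero along $\mu$ for as long as $\theta$ is finite.

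Next, apply the null Raychaudhuri equation \eqref{Null Raychaudhri}. With $\hat\omega=0$, $-\hat\sigma_{\mu\nu}\hat\sigma^{\mu\nu}\le 0$ (since $\hat h$ is positive definite), and $R_{\mu\nu}k^\mu k^\nu\ge 0$, it gives $d\theta/d\lambda\le -\tfrac12\theta^2$. Hence $d(\theta^{-1})/d\lambda\ge \tfrac12$, and integrating gives $\theta^{-1}(\lambda)\ge\theta_0^{-1}+\lambda/2$. Since $\theta_0<0$, $\theta^{-1}$ must reach zero, and therefore $\theta\to-\infty$, at some $\lambda\le 2/|\theta_0|$. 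This is the same argument as in the null focusing lemma, provided $\mu$ extends that far.

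The last step is to identify $\theta\to-\infty$ with conjugacy to $S$. Choose a parallelly propagated basis $\hat e_1,\hat e_2$ of $\hat V$ along $\mu$. Every deviation vector of the family can be written $\hat V(\lambda)=A(\lambda)\hat V(0)$, where the $2\times2$ matrix $A$ satisfies the hatted Jacobi equation. Its initial data are $A(0)=I$ and $dA/d\lambda(0)=\hat B|_S$, the latter fixed by the extrinsic geometry of $S$ along $k$ rather than by the identity. As in the point case, $dA/d\lambda=\hat B A$, which gives $\theta=\mathrm{Tr}\hat B=\frac{d}{d\lambda}\ln|\det A|$. Because $A$ solves a linear ODE with smooth coefficients, $A$ and $dA/d\lambda$ stay bounded. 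So $\theta\to-\infty$ forces $\det A\to0$ at some point $p$. At $p$ there is a nonzero $\hat V(0)$ on $S$ with $\hat V(p)=0$, which is exactly the definition of $p$ being conjugate to $S$ along $\mu$.

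I expect the main obstacle to be justifying the reduction to the two-dimensional hatted space for a congruence that spans only a null hypersurface. Concretely, one has to check that the hatted Jacobi equation is well posed there, and that only the orthogonal deviation vectors, taken modulo $k^\mu$, matter. The needed facts come from the preceding discussion: $k^\mu V_\mu$ is constant, and $V^\mu+(a+b\lambda)k^\mu$ is again a Jacobi field. I would invoke these directly rather than re-derive them.
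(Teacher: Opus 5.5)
Your proposal is correct and follows the same route the paper intends: focusing of $\theta$ from the null Raychaudhuri equation with $\hat\omega_{\mu\nu}=0$ and $R_{\mu\nu}k^\mu k^\nu\ge 0$, then the $\det A$ argument to identify $\theta\to-\infty$ with a point conjugate to $S$. The paper justifies this proposition only ``in analogy'' to Proposition~\ref{focsuing extrinsic curvature}, and your switch from the point-case data $A(0)=0$, $dA/d\lambda(0)=I$ to $A(0)=I$, $dA/d\lambda(0)=\hat B|_S$ is exactly the adaptation that analogy leaves implicit.
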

In analogy to Theorem \ref{maximise proper time conjugate surface}, we also have:
\begin{thm}
    Let $S$ be a smooth two-dimensional spacelike submanifold and let $\mu$ be a smooth causal curve from $S$ to $p$. Then the necessary and sufficient condition that $\mu$ cannot be smoothly deformed to a timelike curve connecting $S$ and $p$ is that $\mu$ is a null geodesic orthogonal to $S$ with no point conjugate to $S$ between $S$ and $p$.
\end{thm}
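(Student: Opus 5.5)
The plan is to follow the proof of Theorem \ref{maximise proper time conjugate surface}, transplanted to the null setting through the ``hatted'' formalism. Here the role of ``locally maximizing proper time'' is played by ``cannot be deformed to a timelike curve.'' The proof splits into necessity and sufficiency, and I would organize it as follows.

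\emph{Necessity.} Suppose $\mu$ cannot be smoothly deformed to a timelike curve from $S$ to $p$.
\begin{itemize}
\item[(i)] $\mu$ must be a null curve. If it were somewhere timelike, a small local variation would make it timelike everywhere.
\item[(ii)] $\mu$ must be a geodesic. If it were null but not geodesic, the first-variation argument lets us choose a variation field $X^\mu$ along $\mu$, vanishing at $p$, for which $\frac{d}{d\alpha}(T^\nu T_\nu)<0$. The computation parallels the expression for $d\tau/d\alpha$ with $f$ replaced by $-T^\nu T_\nu$. The boundary term at $S$ is $T_\nu X^\nu$ with $X^\mu$ tangent to $S$.
\item[(iii)] $\mu$ must be orthogonal to $S$. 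If it were not, this boundary term could be chosen with a sign that makes the curve timelike to first order.
\item[(iv)] $\mu$ can have no conjugate point $r$ to $S$ strictly between $S$ and $p$. If it had one, let $\hat V^\mu$ be the hatted Jacobi field that is nonzero on $S$ and vanishes at $r$. Set $X^\mu=\hat V^\mu$ from $S$ to $r$ and $X^\mu=0$ from $r$ to $p$. Then round the corner at $r$ with a small correction $\epsilon Z^\mu$, exactly as in the timelike case. This gives a variation whose second variation of $T^\nu T_\nu$ is strictly negative, so the deformed curves are timelike.
\end{itemize}

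\emph{Sufficiency.} Assume $\mu$ is a null geodesic orthogonal to $S$ with no point conjugate to $S$ before $p$.
\begin{itemize}
\item[(i)] Use the null congruence orthogonal to $S$ (the outgoing family containing $\mu$), together with the nonvanishing of the hatted matrix $A$ along $\mu$ up to $p$. From these, build a family of null geodesics orthogonal to $S$ that fills a null hypersurface $N$ near $\mu$. By the standard argument, $N$ forms part of the boundary of the chronological future $I^+(S)$ locally near $\mu$.
\item[(ii)] For any nearby causal curve $\lambda_\alpha$ from $S$ to $p$, expand $T^\nu T_\nu$ to second order. The first order vanishes because $\mu$ is a geodesic orthogonal to $S$.
\item[(iii)] For the second order, decompose $X^\mu$ into its hatted part, a multiple of $k^\mu$, and a part along the auxiliary null direction $l^\mu$ with $l^\mu k_\mu=-1$. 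The $l$-component must vanish, or else be of a sign forcing the curve to be spacelike somewhere; this uses $k^\rho\nabla_\rho[k^\nu\nabla_\nu(k^\mu V_\mu)]=0$ and the endpoint conditions.
\item[(iv)] What remains is the hatted index form $\int \hat X^\nu(\mathcal{O}\hat X)_\nu\,d\lambda$ plus the boundary term from the second fundamental form of $S$ along $k^\mu$. The absence of conjugate points makes this form positive definite on hatted fields vanishing at $p$. Hence $T^\nu T_\nu\ge 0$ to second order, and the curve cannot become timelike for small $\alpha$.
\end{itemize}

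The main obstacle is positive definiteness of the hatted index form with the $S$ boundary term. I would handle it as in Theorem \ref{maximize conjuagte points}. Since $\det\hat A\neq 0$ on $[S,p)$, every hatted field $\hat X$ can be written as $\hat X=\hat A\,Y$. Integrating by parts using $d\hat A/d\lambda=\hat B\hat A$ gives the index form as $\int |\hat A\,dY/d\lambda|^2 d\lambda\ge0$. The $S$-boundary term is absorbed exactly, because the congruence's initial data are fixed by the extrinsic curvature of $S$, namely $\hat B=\hat K$ on $S$. Equality holds only for $Y$ constant, which requires $\hat X\equiv0$ since $\hat X(p)=0$.

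A secondary difficulty is that second-order nonnegativity alone does not exclude timelike deformations at higher order. I would close this gap with the local achronal-boundary picture of $N$ rather than by pushing the expansion further.
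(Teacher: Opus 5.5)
The paper gives no proof of this theorem. It is asserted ``in analogy to'' Theorem~\ref{maximise proper time conjugate surface}, which itself rests only on the second-variation formula. So your index-form outline is the intended route, but each direction is missing a load-bearing step.

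\textbf{Necessity, step (iv).} The corner-rounding construction you borrow from the timelike case only produces a field $X\perp k$ with negative index form $I_S(X,X)<0$. That is a negative \emph{integral} of the second-order coefficient of $T^\nu T_\nu$. It suffices to increase proper time, but ``deformable to a timelike curve'' requires $\partial_\alpha^2(T^\nu T_\nu)|_{\alpha=0}<0$ at \emph{every} parameter value, including at $p$ where all the $\lambda_\alpha$ meet. An integral inequality does not give this.

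The missing idea is to use the second-order part of the variation. Parametrize $\mu$ affinely on $[0,b]$ with $\mu(0)\in S$ and $\mu(b)=p$. For $g(X,k)=0$, the pointwise coefficient is $2\bigl[g(\nabla_kX,\nabla_kX)-g(R(X,k)k,X)+\tfrac{dh}{d\lambda}\bigr]$, where $\nabla_k\nabla_kX=-R(X,k)k$ is the Jacobi equation and $h=g(\nabla_XX,k)$.
\begin{itemize}
\item You may prescribe $h$ freely in the interior, via an $l$-component of the second-order displacement.
\item The only constraints are $h(b)=0$ and $h(0)=-\hat K(X(0),X(0))$; the latter is forced by $\lambda_\alpha(0)\in S$.
\item The bracket integrates to exactly $I_S(X,X)$, so you can choose $h$ to make the bracket equal the negative constant $I_S(X,X)/b$ everywhere.
\end{itemize}
Only then are the $\lambda_\alpha$ timelike for all small $\alpha>0$.

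\textbf{Sufficiency.} Your second-order computation is sound, including the exact cancellation of the $S$-boundary term. However, it only rules out families whose first-order hatted field $\hat X$ is nonzero. Every other case is delegated to the claim in (i) that $N$ is locally part of $\partial I^+(S)$, asserted ``by the standard argument.'' That claim is essentially the statement being proved, so this direction is not yet established.

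It can be supplied directly, after which (ii)--(iv) are unnecessary. Assume there is no point conjugate to $S$ on $\mu$ up to \emph{and including} $p$. Your $\hat X=\hat AY$ step also needs $\det\hat A(p)\neq0$, not just on $[S,p)$: otherwise $Y$ need not extend to $p$, and a constant $Y\neq0$ with $\hat A(p)Y=0$ is precisely a conjugate point at $p$. The argument then runs as follows.
\begin{itemize}
\item The map $E(s,u,v)=\exp_s(uk_s+vl_s)$ has invertible differential along $\{\mu(0)\}\times[0,b]\times\{0\}$. There $g(\partial_v,k)=-1$, while $\partial_u=k$ and the $\partial_s$ are orthogonal to $k$. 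A kernel vector would therefore give a nonzero hatted $S$-Jacobi field vanishing at $u$, which is excluded.
\item Curves near $\mu$ lift to a neighbourhood of this segment in which $N=\{v=0\}$.
\item On $N$ one has $dv=-g(k,\cdot)$, so $dv(T)>0$ for every future-directed timelike $T$.
\item If $\lambda_\alpha$ is timelike, $v\circ\lambda_\alpha$ vanishes at both ends with positive derivative. It is therefore positive just after $S$ and negative just before $p$.
\item At its first return to zero its derivative is $\le0$, which contradicts $dv(T)>0$ on $N$.
\end{itemize}

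A minor point on (iii): the $l$-component of $X$ vanishes because the first-order coefficient $2\,\tfrac{d}{d\lambda}g(X,k)$ must be $\le0$ while $g(X,k)=0$ at both ends. The identity $k^\rho\nabla_\rho[k^\nu\nabla_\nu(k^\mu V_\mu)]=0$ concerns Jacobi fields and is not what is needed there.
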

As a consequence of this theorem and results regarding the causal structure of spacetime \cite{HawkingEllis1973,penrose1972techniques,Minguzzi2008Causal,Minguzzi2019Lorentzian,Bernal:2004gm,BernalSanchez2003,Bernal_2006}, we have the following theorem
\begin{thm}\label{null no conjugate surface}
    Let ($\mathcal{M},g_{\mu\nu}$) be a globally hyperbolic spacetime and let $K$ be a compact orientable, two-dimensional spacelike submanifold $\mathcal{M}$. Then every $p\in\dot{I}(K)$, where $\dot{I}(K)$ denotes the boundary of the chronological future of $K$, lies on a future-directed null geodesic starting from $K$ which is orthogonal to $K$ and has no point conjugate to $K$  between $K$ and $p$.
\end{thm}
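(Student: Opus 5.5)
The plan is to follow Wald's argument (Theorem 9.3.11 there): first place $p$ on a null generator, then use the theorem immediately preceding this one (a causal curve from $S$ that cannot be deformed to a timelike one must be an orthogonal null geodesic with no conjugate point) to get orthogonality and the absence of conjugate points. Since $K$ is compact and $(\mathcal{M},g_{\mu\nu})$ is globally hyperbolic, the causal future $J^+(K)$ is closed. Moreover $I^+(K)$ is open and dense in $J^+(K)$, so
\begin{equation*}
\dot I^+(K)=J^+(K)\setminus I^+(K).
\end{equation*}
Hence any $p\in\dot I^+(K)$ lies in $J^+(K)$ but not in $I^+(K)$. In particular there is a causal curve from some point of $K$ to $p$, and no timelike curve from $K$ reaches $p$.

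First I need a \emph{smooth} causal curve $\mu$ from $K$ to $p$. I would take a sequence $p_n\in I^+(K)$ with $p_n\to p$ and timelike curves $\lambda_n$ from $K$ to $p_n$. Compactness of $K$ together with global hyperbolicity, through compactness of $J^+(K)\cap J^-(q)$ for $q$ slightly to the future of $p$ and the limit curve lemma, yields a causal limit curve $\mu$ from some $r\in K$ to $p$. I then argue that $\mu$ is a null geodesic. Locally, in a convex normal neighbourhood, a causal curve that is not a null geodesic segment can be deformed into a timelike curve with the same endpoints. This would place $p$ in $I^+(K)$, a contradiction. So $\mu$ is an unbroken smooth null geodesic, and $p\notin I^+(K)$ means $\mu$ cannot be smoothly deformed into a timelike curve from $K$ to $p$.

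Next I apply the preceding theorem with $S=K$. Since $\mu$ is a smooth causal curve from $K$ to $p$ that admits no deformation to a timelike curve connecting $K$ and $p$, it must be a null geodesic orthogonal to $K$. It also has no point conjugate to $K$ strictly between $K$ and $p$. Orientability of $K$ is used only so that the two null normal fields $k_1^\mu,k_2^\mu$, and hence the notion of conjugacy to $K$ along each family, are globally well defined. Future-directedness follows because $p\in J^+(K)$.

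The main obstacle is the limit-curve step. Concretely, I must show that the limit of the $\lambda_n$ is an honest causal curve that ends at $p$ and starts on $K$, and that any failure of $\mu$ to be a null geodesic can be perturbed into a timelike connection. I would handle this by quoting the standard causal-structure results the paper already cites (Hawking--Ellis, Penrose): closedness of $J^+(K)$ and the fact that $\dot I^+(K)$ is generated by null geodesics. I would not reprove these. The remaining steps are then direct applications of the previous theorem.
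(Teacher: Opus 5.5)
Your proposal is correct and follows the route the paper indicates, which is Wald's Theorem 9.3.11: closedness of $J^+(K)$ under global hyperbolicity gives $\dot I^+(K)=J^+(K)\setminus I^+(K)$, a causal curve from $K$ to $p$ that cannot be perturbed into a timelike one must be a null geodesic, and the preceding theorem then supplies orthogonality to $K$ and the absence of conjugate points. The only redundancy is that your limit-curve construction in the second paragraph re-proves the closedness of $J^+(K)$ you had already invoked, so once that result is quoted the causal curve from $K$ to $p$ comes for free.
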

We now give the following theorems regarding the existence of maximum length curves, which can be proven independently of the above results, and from results regarding the causal structure of spacetime itself. 
\begin{thm}\label{maximum geodesic orthogonal no conjugate}
    Let ($\mathcal{M},g_{\mu\nu}$) be a strongly causal spacetime. Let $p\in \mathcal{M}$, let $\Sigma$ be an achronal, smooth spacelike hypersurface, and consider the length function $\tau$ defined on $C(p,q)$. A necessary condition for $\tau$ to attain its maximum value at $\gamma\in C(p,q)$ is that $\gamma$ be a geodesic with no point conjugate to $p$ between $p$ and $q$.
\end{thm}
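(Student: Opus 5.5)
The plan is to argue by contradiction and reduce the statement to Theorem \ref{maximize conjuagte points}. The point is that a maximum of $\tau$ over the whole set $C(p,q)$ is in particular a local maximum over smooth one-parameter variations, so the necessary condition of that theorem should carry over.

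The first step deals with regularity. Elements of $C(p,q)$ are, in Wald's setting, continuous causal curves, and $\tau$ is the upper semicontinuous extension of the length functional. I would first show that if $\tau$ attains its maximum at $\gamma$, then $\gamma$ is a smooth timelike geodesic (unless it is a null geodesic with $\tau=0$, in which case the conjugate point condition is handled by the null analogue).

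To do this I use strong causality to cover $\gamma$ by finitely many convex normal neighbourhoods. Inside each one, the unique geodesic between two points of $\gamma$ has strictly greater length than any other causal curve joining them. If $\gamma$ were not a geodesic on some segment, replacing that segment by the local geodesic would strictly increase $\tau$, which contradicts maximality. Hence $\gamma$ is a piecewise geodesic. The same corner-cutting argument inside a convex neighbourhood of any break point removes the corners, so $\gamma$ is a smooth geodesic.

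The second step is the conjugate point condition. Suppose there is a point $r$ conjugate to $p$ strictly between $p$ and $q$. The sufficiency part of Theorem \ref{maximize conjuagte points} then fails. Concretely, take the Jacobi field $V^\mu$ vanishing at $p$ and $r$. Extend it by zero from $r$ to $q$, and add a small correction $\epsilon W^\mu$ supported near $r$, chosen so that $\int X^\nu(\mathcal{O}X)_\nu\,dt>0$. This gives a smooth variation $X^\mu$ with positive second variation of $\tau$, and hence a nearby smooth timelike curve in $C(p,q)$ with $\tau$ strictly larger than $\tau[\gamma]$. That contradicts $\gamma$ being a maximum.

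I expect the main obstacle to be the first step: passing from a maximizer in the $C^0$ topology on $C(p,q)$ to a smooth curve. This is where strong causality is needed, both so that $C(p,q)$ and $\tau$ are well behaved and so that the localization in convex neighbourhoods is legitimate. I would lean on the causal-structure results cited just before Theorem \ref{null no conjugate surface}. The achronal hypersurface $\Sigma$ plays no role for the point-to-point version; it is only needed for the analogous statement with $C(p,\Sigma)$, where the same argument goes through using Theorem \ref{maximise proper time conjugate surface} instead.
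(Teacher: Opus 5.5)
Your argument is correct. The paper gives no proof of this statement: it is taken over from Wald's treatment of maximum-length curves and quoted as a consequence of standard causal-structure results. Your two-step route is exactly the standard proof. First you localize in convex normal neighbourhoods to show that a maximizer in $C(p,q)$ is a smooth geodesic (null only in the trivial $\tau=0$ case, which you correctly send to the null analogue). Then you apply the smooth second-variation result.

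Three small points to tidy up:

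\begin{itemize}
\item The half of Theorem \ref{maximize conjuagte points} you need is the \emph{necessity} half, not the sufficiency half: a local maximizer over smooth variations has no interior conjugate point. Once step one has shown that $\gamma$ is a smooth timelike geodesic, you can simply cite it, since every sufficiently small smooth variation of $\gamma$ consists of timelike curves in $C(p,q)$.
\item If you build the variation by hand, the broken Jacobi field plus $\epsilon W^\mu$ is only piecewise smooth. So $\int X^\nu(\mathcal{O}X)_\nu\,dt$ must either be read distributionally, which adds the kink term $X_\nu(r)\,\Delta(T^\mu\nabla_\mu X^\nu)(r)$, or $X^\mu$ must be smoothed using the continuity of the index form. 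With $W^\mu(r)=-T^\nu\nabla_\nu V^\mu(r)$, both contributions are positive at order $\epsilon$.
\item In step one, the strict comparison with the local geodesic holds only for causal curves lying inside the convex neighbourhood, which is all you actually use. For merely continuous segments, the strictness follows by picking a point off the geodesic and using additivity of $\tau$ together with the strict reverse triangle inequality in that neighbourhood.
\end{itemize}
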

\begin{thm}
    Let $(\mathcal{M},g_{\mu\nu})$ be a globally hyperbolic spacetime. Let $p,q\in\mathcal{M}$ with $q\in J^+(p)$. Then there exists a curve $\gamma\in C(p,q)$ for which $\tau$ attains its maximum value on $C(p.q)$.
\end{thm}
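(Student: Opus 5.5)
The plan is the standard compactness argument for the set $C(p,q)$ of continuous future-directed causal curves from $p$ to $q$, as in Wald and Hawking--Ellis. Topologize $C(p,q)$ as follows: a neighborhood of $\gamma$ consists of all $\lambda\in C(p,q)$ whose image lies in a given open set $O\supset\gamma$. Extend $\tau$ from smooth curves to $C(p,q)$ by
\[
\tau(\lambda)=\inf_{O}\ \sup\{\tau(\lambda'):\lambda'\ \text{smooth timelike (or piecewise smooth causal)},\ \lambda'\subset O\},
\]
where the infimum runs over open neighborhoods $O$ of $\lambda$. The proof then rests on two facts: $C(p,q)$ is compact when the spacetime is globally hyperbolic, and $\tau$ is upper semicontinuous on $C(p,q)$. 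An upper semicontinuous function on a compact space attains its supremum, which gives the desired $\gamma$.

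For compactness, global hyperbolicity gives that $J^+(p)\cap J^-(q)$ is compact and that strong causality holds. Take a sequence $\lambda_n\in C(p,q)$ and pick a Cauchy surface $\mathcal{S}$ through the relevant region, or equivalently a global time function $t$. Parametrize each $\lambda_n$ by $t$, so the parameter ranges over the fixed interval $[t(p),t(q)]$. Since every curve lies in the compact set $J^+(p)\cap J^-(q)$, the limit-curve lemma (a local Arzelà--Ascoli argument in convex normal neighborhoods, using that causal curves are locally Lipschitz in $t$ with uniform constant) produces a subsequence converging to a causal limit curve $\lambda$ from $p$ to $q$. Strong causality is what keeps the limit curve from getting imprisoned in, or oscillating within, a compact region. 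I would then argue that this convergence is convergence in the topology of $C(p,q)$, so sequential compactness follows. The space is second countable here, so sequential compactness suffices.

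For upper semicontinuity, the plan is a local argument. In a convex normal neighborhood, the maximal proper time between two points is realized by the unique geodesic, and it depends continuously on the endpoints. Cover $\lambda$ by finitely many such neighborhoods and subdivide $\lambda$ at points $p=x_0,\dots,x_k=q$. Any curve near $\lambda$ then has length at most the sum of the local geodesic distances between points near the $x_i$, up to $\epsilon$. Refining the subdivision drives this bound down to $\tau(\lambda)+\epsilon$. This shows $\limsup\tau(\lambda_n)\le\tau(\lambda)$.

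The main obstacle is the compactness step, specifically establishing that limit curves exist and actually reach $q$ without being imprisoned. This is where global hyperbolicity enters essentially, through compactness of the causal diamonds and strong causality. I would invoke the standard limit-curve results from the causal-structure references cited before Theorem \ref{null no conjugate surface} rather than reprove them. Finally, Theorem \ref{maximum geodesic orthogonal no conjugate} shows that the maximizer is a geodesic without conjugate points, although that is not needed for existence itself.
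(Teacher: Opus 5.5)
Your proposal is correct and follows the standard argument the paper relies on. The paper states this theorem without proof, deferring to Wald and the causal-structure literature, where it is established exactly as you do: $C(p,q)$ is compact for globally hyperbolic spacetimes and $\tau$ is upper semicontinuous, so $\tau$ attains its maximum. One minor remark: with your inf--sup extension, upper semicontinuity of $\tau$ on $C(p,q)$ is automatic from the definition, so your convex-neighborhood subdivision argument is really what ensures that $\tau$ is finite and agrees with the usual length on piecewise smooth curves.
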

\begin{thm}\label{golbal hyperbolic surface maximum}
    Let $(\mathcal{M},g_{\mu\nu})$ be a globally hyperbolic spacetime. Let $p\in \mathcal{M}$ and let $\Sigma $ be a Cauchy surface. Then there exists a curve $\gamma\in C(\Sigma,p)$ for which $\tau$ attains its maximum value on $C(\Sigma,p)$.
\end{thm}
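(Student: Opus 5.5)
The natural route is to reduce the statement to the preceding existence theorem for maxima between two points, using global hyperbolicity to replace the surface $\Sigma$ by a compact region. Here $C(\Sigma,p)$ is the set of continuous future-directed causal curves from $\Sigma$ to $p$, with the usual $C^0$ topology on curves. We may assume $p\in J^+(\Sigma)$; otherwise $C(\Sigma,p)$ is empty and the statement is vacuous, or the past case is treated by time reversal. The proof then has three steps: show the relevant set of curves is compact, show $\tau$ is upper semicontinuous on it, and invoke the fact that an upper semicontinuous function on a compact set attains its maximum.

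\textbf{Step 1: compactness of $C(\Sigma,p)$.} First observe that every curve in $C(\Sigma,p)$ starts at a point of $J^-(p)\cap\Sigma$. This set is compact in a globally hyperbolic spacetime: $J^-(p)\cap\Sigma$ is closed, and one shows it is contained in a compact set of the form $J^+(\Sigma)\cap J^-(p)$. I would argue that $J^-(p)\cap J^+(\Sigma)$ is compact by the standard argument. Pick a point $r$ slightly to the past of $\Sigma$ inside $I^-(\Sigma)$, or better, cover $J^-(p)\cap\Sigma$ by finitely many sets $I^+(r_i)$ with $r_i\in I^-(\Sigma)$. Then $J^-(p)\cap J^+(\Sigma)\subset\bigcup_i J^+(r_i)\cap J^-(p)$, and each $J^+(r_i)\cap J^-(p)$ is compact by global hyperbolicity. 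Getting the finite cover itself requires knowing that $J^-(p)\cap\Sigma$ is compact, which I would take from the causal structure results cited by the paper (Hawking--Ellis, Wald). With this in hand, the argument used for $C(p,q)$ carries over: curves confined to a compact set in a strongly causal spacetime form a compact subset of the space of causal curves. The inputs are the limit-curve lemma and the fact that a curve cannot be imprisoned in a compact set. Hence $C(\Sigma,p)$ is compact.

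\textbf{Step 2: upper semicontinuity of $\tau$.} This is identical to the point-to-point case. Strong causality together with local maximality of geodesics in convex normal neighbourhoods gives that for any $\gamma$ and $\epsilon>0$, all curves sufficiently $C^0$-close to $\gamma$ have length less than $\tau(\gamma)+\epsilon$. I take this lemma as already established in proving the previous theorem.

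\textbf{Step 3 and the main obstacle.} An upper semicontinuous function on a compact space attains its supremum, which is finite, so a maximizer $\gamma\in C(\Sigma,p)$ exists. The main obstacle is Step 1, specifically the compactness of $J^-(p)\cap\Sigma$ and hence of $J^-(p)\cap J^+(\Sigma)$. Here I would use that $\Sigma$ is an achronal, closed Cauchy surface and that every inextendible past causal curve from points of $J^-(p)$ meets $\Sigma$. To prove it, suppose a sequence in $J^-(p)\cap\Sigma$ had no accumulation point. Then the corresponding causal curves from $p$ would yield, via the limit-curve lemma, an inextendible past-directed causal limit curve from $p$ that never meets $\Sigma$. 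This contradicts $\Sigma$ being Cauchy.
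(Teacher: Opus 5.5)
Your proposal is correct and follows the standard argument that the paper relies on. The paper gives no proof of this theorem and imports it from the causal-structure literature (Wald, Hawking--Ellis), where it is proved as you do: $C(\Sigma,p)$ is compact, via compactness of $J^-(p)\cap J^+(\Sigma)$, and $\tau$ is upper semicontinuous.

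One point to tighten. Your limit-curve argument only shows that the limit curve stays in $J^+(\Sigma)$. For an acausal (e.g.\ smooth spacelike) $\Sigma$ this does mean it never meets $\Sigma$. For a merely achronal Cauchy surface, however, the curve could touch $\Sigma$ and then run along a null generator inside it. In that case the contradiction should come instead from the fact that every past-inextendible causal curve enters $I^-(\Sigma)$.
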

We have now developed the machinery required to prove some of the singularity theorems \cite{PhysRevLett.14.57,PhysRevLett.17.444,Hawking:1967ju,Hawking:1970zqf,Galloway:2005mf,Galloway2000MaxPrinciples,Beran2022Splitting,Senovilla_2022,Senovilla1998,Witten_2020}. 
\\
\\
The first theorem we shall prove can be interpreted as showing that if the universe is globally hyperbolic and at one instant is expanding everywhere at a rate bounded away from zero, then the universe must have begun in a singular state a finite time ago.
\begin{thm}
    Let ($\mathcal{M},g_{\mu\nu}$) be a globally hyperbolic spacetime with $R_{\mu\nu}U^\mu U^\nu\geq 0$ for all timelike $U^\mu$, which will be the case if Einstein's equation is satisfied with the strong energy condition holding for matter. Suppose there exists a smooth (or, at least $C^2$) spacelike Cauchy surface $\Sigma$ for which the trace of the extrinsic curvature (for the past-directed normal geodesic congruence) satisfies $K\leq C<0$ everywhere, where $C$ is a constant. Then no past-directed timelike curve from $\Sigma$ can have length greater than $3/|C|$. In particular, all past-directed timelike geodesics are incomplete.
\end{thm}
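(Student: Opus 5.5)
The argument is by contradiction, following Hawking's original proof and using the machinery already assembled in this section. Suppose some past-directed timelike curve $\lambda$ from $\Sigma$ has length $L>3/|C|$. Let $p$ be a point on $\lambda$ beyond proper time $3/|C|$ from $\Sigma$. Since $\Sigma$ is a Cauchy surface, $p\in I^-(\Sigma)$, and the length of the segment of $\lambda$ between $\Sigma$ and $p$ is strictly greater than $3/|C|$. Hence $\sup_{C(\Sigma,p)}\tau>3/|C|$.

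The first step is to produce a maximizer. By Theorem \ref{golbal hyperbolic surface maximum}, global hyperbolicity guarantees a curve $\gamma\in C(\Sigma,p)$ on which $\tau$ attains its maximum over $C(\Sigma,p)$. Therefore $\tau(\gamma)\ge\tau(\lambda|_{[\Sigma,p]})>3/|C|$. Since $\gamma$ maximizes proper time globally, it in particular maximizes it locally over smooth one-parameter variations. Theorem \ref{maximise proper time conjugate surface} then says that $\gamma$ must be a timelike geodesic orthogonal to $\Sigma$ with no point conjugate to $\Sigma$ between $\Sigma$ and $p$. To be careful here, I would invoke the surface analogue of Theorem \ref{maximum geodesic orthogonal no conjugate}, since the maximizer in $C(\Sigma,p)$ is a priori only a continuous causal curve, and its smoothness and geodesic character come from the standard causal-structure results cited there.

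The second step applies focusing. Since $\gamma$ is a past-directed geodesic orthogonal to $\Sigma$, it belongs to the hypersurface-orthogonal congruence normal to $\Sigma$. At its initial point $q\in\Sigma$, this congruence has expansion $\theta=K\le C<0$. By hypothesis $R_{\mu\nu}U^\mu U^\nu\ge0$, so Proposition \ref{focsuing extrinsic curvature} applies. It gives a point conjugate to $\Sigma$ along $\gamma$ within proper time $3/|K|\le 3/|C|$ from $q$, provided $\gamma$ extends that far, and it does, because $\tau(\gamma)>3/|C|$. This conjugate point lies strictly between $\Sigma$ and $p$, which contradicts the conclusion of the first step. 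So no past-directed timelike curve from $\Sigma$ has length exceeding $3/|C|$. In particular, every past-directed timelike geodesic starting on $\Sigma$ has length at most $3/|C|$ and is therefore incomplete.

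The main obstacle is the regularity of the maximizer. The conjugate-point criterion in Theorem \ref{maximise proper time conjugate surface} is phrased for smooth curves and smooth variations, whereas the maximizer supplied by Theorem \ref{golbal hyperbolic surface maximum} lives in the space $C(\Sigma,p)$ of continuous causal curves. I would handle this in two parts. First, I would show that the maximizer is locally a geodesic: inside convex normal neighbourhoods, geodesics uniquely maximize length, so any non-geodesic segment could be replaced by a geodesic segment of greater length. Only then would I apply the variational criterion. A second, minor point is that the maximizer cannot be null, because its length is positive. Everything else is a direct citation of the earlier results.
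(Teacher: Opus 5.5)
Your proposal is correct and follows essentially the same route as the paper: you argue by contradiction from a point $p$ beyond $3/|C|$, obtain a length-maximizing curve in $C(\Sigma,p)$ from global hyperbolicity, conclude that it is a $\Sigma$-orthogonal geodesic with no point conjugate to $\Sigma$, and contradict the focusing proposition, since $3/|K|\le 3/|C|$. Your extra care about the regularity of the maximizer, and your use of the surface version of the no-conjugate-point criterion, make explicit what the paper leaves to its cited causal-structure results.
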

\begin{proof}
    Suppose there were a past-directed timelike curve $\lambda$ from $\Sigma$ with length greater than $3/|C|$. Let $p$ be a point on $\lambda$ lying beyond length $3/|C|$ from $\Sigma$. By Theorem \ref{golbal hyperbolic surface maximum}, there exists a maximum-length curve $\gamma$ form $p$ to $\Sigma$, which clearly also must have a length greater than $3/|C|$. By Theorem \ref{maximum geodesic orthogonal no conjugate}, $\gamma$ must be a geodesic with no conjugate point between $\Sigma$ and $p$. However, this contradicts Proposition \ref{focsuing extrinsic curvature} which states that $\gamma$ must have a conjugate point between $\Sigma$ and $p$. Therefore, the original curve $\lambda$ cannot exist.
\end{proof}
The next theorem proves null geodesic incompleteness in a context relevant to gravitational collapse. A compact, two-dimensional, smooth spacelike submanifold, $T$, having the property that the expansion, $\theta$, of both sets (i.e., “ingoing” and “outgoing”) of future-directed null geodesics orthogonal to $T$ is everywhere negative is called a trapped surface \cite{PhysRevLett.14.57,Hawking:1970zqf,Hawking:1967ju,penrose1972techniques,joshi1993global,Hayward:1993hg,Hayward:1997jp,Hayward:2004md}. Through Proposition \ref{null conjugate} and Theorem \ref{null no conjugate surface} and additional results regarding the causal structure of spacetime, we have
\begin{thm}
   Let ($\mathcal{M},g_{\mu\nu}$) be a connected, globally hyperbolic spacetime with a noncompact Cauchy surface $\Sigma$. Suppose $R_{\mu\nu}k^\mu k^\nu\geq 0$ for all null $k^\mu$, as will be the case if ($\mathcal{M},g_{\mu\nu}$) is a solution of Einstein's equation with matter satisfying the strong energy condition. Suppose further that $\mathcal{M}$ contains a trapped surface $T$. Let $\theta_0<0$ denote the maximum value of $\theta$ for both sets of orthogonal geodesics in $T$. Then at least one inextendible future-directed orthogonal null geodesic from $T$ has affine length no greater than $2/|\theta_0|$.
\end{thm}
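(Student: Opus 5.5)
We argue by contradiction, following Wald's proof of the Penrose theorem. Suppose every inextendible future-directed null geodesic orthogonal to $T$ has affine length greater than $2/|\theta_0|$. First we show that $\dot I^+(T)$ is compact. By Proposition \ref{null conjugate}, applied to both the ``ingoing'' and ``outgoing'' families (each has $\theta\le\theta_0<0$ on $T$), every orthogonal null geodesic from $T$ reaches a point conjugate to $T$ within affine parameter $2/|\theta_0|$. By Theorem \ref{null no conjugate surface}, every $p\in\dot I^+(T)$ lies on an orthogonal null geodesic from $T$ with no point conjugate to $T$ between $T$ and $p$. Hence $\dot I^+(T)$ is contained in the image of the map
\begin{equation}
\Phi: T\times\{1,2\}\times[0,2/|\theta_0|]\to\mathcal{M},\qquad (q,i,\lambda)\mapsto \exp_q(\lambda k_i^\mu(q)).
\end{equation}
This map is continuous, and its domain is compact because $T$ is compact and we fix a continuous choice of $k_1^\mu,k_2^\mu$ over $T$. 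The image is therefore compact. Since $\dot I^+(T)$ is closed (it is a boundary), it is a closed subset of a compact set, and so it is compact.

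Next we use the causal-structure results cited earlier. $\dot I^+(T)$ is an achronal boundary, hence a closed embedded $C^0$ three-dimensional submanifold without boundary (the boundary of a future set). Because the spacetime is globally hyperbolic, we pick a smooth timelike vector field on $\mathcal{M}$. Each integral curve of this field meets the Cauchy surface $\Sigma$ exactly once, and meets the achronal set $\dot I^+(T)$ at most once. Following these curves gives a continuous, one-to-one map $\psi:\dot I^+(T)\to\Sigma$. By invariance of domain, $\psi$ is an open map, so $\psi(\dot I^+(T))$ is open in $\Sigma$. It is also compact, hence closed in $\Sigma$. Since $\mathcal{M}$ is connected, $\Sigma$ is connected, so $\psi(\dot I^+(T))=\Sigma$. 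Then $\Sigma$ is compact, which contradicts the hypothesis that it is noncompact. So at least one inextendible orthogonal null geodesic from $T$ must have affine length at most $2/|\theta_0|$.

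We expect the main obstacle to be the topological step. It requires that $\dot I^+(T)$ is a topological manifold without boundary, and that $\dot I^+(T)$ is nonempty, which holds because $T\subset\dot I^+(T)$ as $T$ is achronal or can be arranged to be. It also needs a careful invariance-of-domain argument. The paper only alludes to these as ``additional results regarding the causal structure,'' so we will cite the standard facts (Hawking--Ellis, Wald) rather than reprove them. A smaller point is the compactness step. Conjugate points are guaranteed only along geodesics that extend that far, and this is exactly what the contradiction hypothesis provides. We must also note that points past the first conjugate point cannot lie in $\dot I^+(T)$ by Theorem \ref{null no conjugate surface}, which is why the affine parameter can be restricted to $[0,2/|\theta_0|]$.
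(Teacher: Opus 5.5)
Your proposal is correct and is essentially the argument the paper relies on. The paper gives no written-out proof; it points to Proposition~\ref{null conjugate}, Theorem~\ref{null no conjugate surface} and standard causal-structure facts, which amounts to Wald's proof: compactness of $\dot I^+(T)$, an injective projection onto $\Sigma$ along a timelike vector field, and invariance of domain. The one loose end is nonemptiness of $\dot I^+(T)$, which does not require $T$ to be achronal, since a point of $T$ minimizing a Cauchy time function lies in $\overline{I^+(T)}\setminus I^+(T)$.
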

\section{Non-Commutativity in Geodesic Congruences}\label{Section 3}
In the standard consideration of geodesic congruences as highlighted in Section\ref{Section 2}, the tangent and separation vectors are assumed to commute,
\begin{equation}
    [U,V]^{\mu}=\mathcal{L}_U V^\mu=U^{\lambda}\nabla_{\lambda}V^{\mu}-V^{\lambda}\nabla_{\lambda}U^{\mu}=0.
\end{equation}
One may naturally ask what happens if this assumption is relaxed. Such a possibility is motivated by the considerable interest that spacetime non-commutativity has received over the past several decades, particularly in non-commutative quantum field theory and quantum gravity. In most existing approaches, however, non-commutativity is introduced by replacing the underlying smooth manifold with a non-commutative algebra of coordinates, as in Connes non-commutative geometry and related formulations. Consequently, the discussion is typically algebraic rather than geometric, and the conventional notion of a smooth spacetime manifold is no longer fundamental.
\\
\\
Nevertheless, this need not be the only realization of spacetime non-commutativity. More generally, one may consider a spacetime that remains a smooth differentiable manifold while allowing the commutation properties of neighboring geodesics to be modified. This follows from the Frobenius theorem \cite{frobenius1877ueber,warner1983foundations}. Let us define the following notions \cite{lee2012introduction,lee2018introduction,doCarmo1992Riemannian,spivak1999comprehensive,warner1983foundations} before introducing the same,
\begin{defn}
    Let $c$ be an integer, $1\leq c\leq d$. A c-dimensional distribution $\mathcal{D}$ on a $d$-dimensional manifold $\mathcal{M}$ is a choice of a c-dimensional subspace $\mathcal{D}(m)$ of $\mathcal{M}$. $\mathcal{D}$ is smooth if for each $m$ in $M$ there is a neighborhood $U$ of $m$ and there are $c$ vector fields $X_1,\dots X_c$ of class $C^{\infty}$ on $U$ which span $\mathcal{D}$ at each point of $U$. A vector field $X$ on $M$ is said to belong to (or lie in) the distribution $\mathcal{D}$ ($X\in\mathcal{D}$) id $X_m\in\mathcal{D}$ for each $m\in M$. A smooth distribution $\mathcal{D}$ is called involutive if $[X,Y]\in \mathcal{D}$ whenever $X$ and $Y$ are smooth vector fields lying in $\mathcal{D}$.
\end{defn}
\begin{defn}
    A submanifold $(N,\psi)$ of $M$ is an integrable manifold of a distribution $\mathcal{D}$ is
    \begin{equation}
        d\psi(T_nN)=\mathcal{D}(T_n\psi),
    \end{equation}
    for every $n\in N$. Here, $\psi:N\rightarrow M$ is the embedding of the submanifold $N$ into the ambient manifold $M$ and $d\psi$ is the pushforward map, $d\psi:T_nN\rightarrow T_{\psi(n)}M$.
\end{defn}
We can now introduce Frobenius theorem as
\begin{thm}
    Let $m\in M$ then, $\mathcal{D}$ is a involutive, $C^{\infty}$ distribution on $M^d$ if and only if there exists an integral distribution $\mathcal{D}$ passing through $m$. 
\end{thm}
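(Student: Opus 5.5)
The statement is the Frobenius theorem, in its local form: a smooth $c$-dimensional distribution $\mathcal{D}$ on $M^d$ is involutive if and only if through every $m\in M$ there passes an integral manifold of $\mathcal{D}$. I would prove the two directions separately. The easy direction is ``integrable $\Rightarrow$ involutive''; the substantive one is the converse.

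\emph{Integrable implies involutive.} Suppose every point lies on an integral manifold $(N,\psi)$ of $\mathcal{D}$. Take $X,Y$ smooth vector fields lying in $\mathcal{D}$, and let $n\in N$ with $\psi(n)=m$. Since $d\psi(T_nN)=\mathcal{D}(m)$ and $\psi$ is an immersion, there are unique fields $\tilde X,\tilde Y$ on a neighborhood of $n$ in $N$ that are $\psi$-related to $X,Y$. Naturality of the Lie bracket under $\psi$-relatedness gives $[X,Y]_m=d\psi([\tilde X,\tilde Y]_n)$. The right-hand side lies in $d\psi(T_nN)=\mathcal{D}(m)$, so $\mathcal{D}$ is involutive.

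\emph{Involutive implies integrable.} Fix $m$ and work in a chart $(x^1,\dots,x^d)$ around it. Choose local spanning fields $X_1,\dots,X_c$, and relabel coordinates so that the $c\times c$ block of coefficients of $X_i$ along $\partial_1,\dots,\partial_c$ is invertible at $m$, hence near $m$. Multiplying by the inverse of that block gives a new frame
\begin{equation}
Y_i=\partial_i+\sum_{k>c} a_i^k\,\partial_k ,\qquad i=1,\dots,c,
\end{equation}
which still spans $\mathcal{D}$. Now compute the bracket $[Y_i,Y_j]$: it has no $\partial_1,\dots,\partial_c$ components. Involutivity says $[Y_i,Y_j]=\sum_l f_l Y_l$, and comparing the $\partial_l$ components for $l\le c$ forces every $f_l=0$. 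So the $Y_i$ pairwise commute.

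The main step is turning commuting flows into coordinates. Let $\phi^i_t$ be the local flow of $Y_i$, choose a $(d-c)$-dimensional slice $S$ through $m$ transverse to $\mathcal{D}(m)$, and define
\begin{equation}
F(t^1,\dots,t^c,s)=\phi^1_{t^1}\circ\cdots\circ\phi^c_{t^c}(s),\qquad s\in S .
\end{equation}
Because the fields commute, their flows commute, and this is the point I expect to need the most care: $[Y_i,Y_j]=0$ implies $\phi^i_t\circ\phi^j_u=\phi^j_u\circ\phi^i_t$, which one verifies by showing that $\tfrac{d}{du}$ of the pushforward of $Y_i$ under $\phi^j_u$ vanishes, plus attention to domains of definition. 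Given commutation, every flow can be moved to the front of the composition, so $\partial F/\partial t^i=Y_i\circ F$ for all $i$. At the origin the differential of $F$ maps onto $\mathcal{D}(m)\oplus T_mS=T_mM$, so by the inverse function theorem $F$ is a local diffeomorphism. In the resulting coordinates $Y_i=\partial/\partial t^i$, so each slice $\{s=\text{const}\}$ has tangent space exactly $\mathcal{D}$. The slice through $m$ is the required integral manifold $(N,\psi)$, with $\psi$ the inclusion.
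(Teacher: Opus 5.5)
Your proof is correct and follows a standard textbook argument. The paper itself gives no proof of this statement: it only quotes Frobenius' theorem, citing Frobenius and Warner, so there is no in-paper argument to compare against.

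You also silently repaired the statement, and rightly so. Read literally, with a single fixed $m$, the ``if'' direction is false, because an integral manifold through one point only constrains $\mathcal{D}$ along that manifold. For example, take $\mathcal{D}=\ker\bigl(dz-g(x)\,x\,dy\bigr)$ on $\mathbb{R}^3$, with $g$ smooth, vanishing near $x=0$ and equal to $1$ near $x=1$. It has the integral surface $z=0$ near the origin but is a contact distribution near $x=1$. In addition, ``integral distribution'' should read ``integral manifold''. Your reading, ``through every $m$'', is the correct theorem.

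The reference the paper cites (Warner) handles the hard direction differently, by induction on $c$:
\begin{itemize}
\item straighten $X_1$ with the single-field flow-box theorem;
\item apply the induction hypothesis on the slice $y^1=0$;
\item propagate tangency off the slice using uniqueness for a linear ODE system in $y^1$.
\end{itemize}
That route needs only the flow-box theorem for one field plus ODE uniqueness. Your commuting-frame route produces the flat chart in one stroke and gives the sharper normal form $Y_i=\partial/\partial t^i$. The price is the lemma that commuting fields have commuting local flows, which you correctly flag as the delicate point.

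Two small technicalities deserve a sentence each in a written version.
\begin{itemize}
\item The paper states involutivity for globally defined fields. To apply it to your local frame $Y_i$, multiply by a smooth cutoff equal to $1$ near $m$; the brackets are unchanged near $m$.
\item In the easy direction, the lifts $\tilde X,\tilde Y$ are smooth because an immersion is locally an embedding. In slice coordinates, $\tilde X$ is just the first $c$ components of $X$ restricted to the slice.
\end{itemize}
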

In our case, the relevant distribution is $\mathcal{D}=\text{span}(U,V)$. Hence, even if $\mathcal{D}$ is an integral distribution, $[U,V]^{\mu}\not=0$ in general rather,
\begin{equation}
    \mathcal{L}_U V^\mu =[U,V]^{\mu}=U^{\lambda}\nabla_{\lambda}V^{\mu}-V^{\lambda}\nabla_{\lambda}U^{\mu}=\kappa^{\rm NC} U^{\mu}+\frac{\theta^{\rm NC}}{3}V^{\mu}.
\end{equation}
The coefficient of $\kappa^{\rm NC}$ characterizes the component of the commutator parallel to the tangent $U^\mu$, whereas $\theta^{\rm NC}$ characterizes its component along the separation direction.
\\
\\
As highlighted in Section\ref{Section 2}, deviation vectors that differ only by a multiple of $U^\mu$ represent a displacement to the same nearby geodesic.  The physically interesting quantity is really the equivalence class of deviation vectors, where two deviation vectors are considered equivalent if their difference
is a multiple of $U^\mu$. We may therefore choose a representative in which the longitudinal contribution proportional to $U^\mu$ is removed. At the level of the above decomposition, this amounts to
\begin{equation}
\kappa_{\rm NC}U^\mu+\frac{\theta^{\rm NC}}{3}V^\mu
\longrightarrow
\kappa_{\rm NC}U^\mu+\frac{\theta^{\rm NC}}{3}V^\mu-\kappa_{\rm NC}U^\mu
=\frac{\theta^{\rm NC}}{3}V^\mu .
\end{equation}
Thus, adopting this representative for the separation vector, we may write
\begin{equation}
    [U,V]^{\mu}=\frac{\theta^{\rm NC}}{3}V^{\mu}.
\end{equation}
This form isolates the component of the non-commutativity that acts directly in the separation direction and is the quantity relevant for the transverse evolution of the congruence.
\\
\\
Such a Non-Commutativity has a particularly nice interpretation of preventing a Parallelogram formed from discrete $U$ and $V$ steps from closing. In particular, suppose $U$ and $V$ generate finite translations,
\begin{equation}
    T_U=e^{\epsilon U},\space T_V =e^{\epsilon V}.
\end{equation}
Now,
\begin{equation}
    T_U T_V=e^{\epsilon U}e^{\epsilon V}=e^{\epsilon V}e^{\epsilon U}e^{\epsilon[U,V]\dots}=T_V T_Ue^{\epsilon[U,V]\dots},
\end{equation}
so the failure of the elementary parallelogram to close is controlled at leading order by
\begin{equation}
    [U,V]=\frac{\theta^{\rm NC}}{3} V.
\end{equation}
This gives a nice geometric interpretation: discrete non-closure of the $U$- and $ V$-steps renders non-commutativity.
\\
\\
Let us revisit the justification for the congruence being hypersurface orthogonal. We now have
\begin{equation}
    U^\rho \nabla_\rho(U_\mu V^\mu)=U^\rho U_\mu \nabla_\rho V^\mu =U_\mu \mathcal{L}_U V^\mu+V^\rho U_\mu \nabla_\rho U^\mu=\frac{\theta^{\rm NC}}{3}U_{\mu}V^{\mu} .
\end{equation}
Hence, if $U_\mu V^\mu=0$ at any one point of the geodesic, uniqueness of the solution implies $U_\mu V^\mu=0$ everywhere. Thus, once the separation vector is chosen orthogonal to $U^\mu$ at one point, the modified commutator preserves this orthogonality along the congruence.
\\
\\
In terms of the projection tensor, the commutator is given by
\begin{equation}
    [U,V]^{\mu}=\frac{\theta^{\rm NC}}{3}V^{\mu}=\frac{\theta^{\rm NC}}{3}g^{\mu}_{\nu}V^{\nu}=\frac{1}{3}(\theta^{\rm NC}P^{\mu}_{\nu}V^{\nu}-\theta^{\rm NC}U^{\mu}U_{\nu}V^{\nu})=\frac{\theta^{\rm NC}}{3}P^{\mu}_{\nu}V^{\nu}.
\end{equation}
The evolution of the separation vector along the congruence is therefore modified according to
\begin{equation}
    \frac{DV^{\mu}}{d\tau}=U^{\nu}\nabla_{\nu}V^{\mu}=V^{\nu}\nabla_{\nu}U^{\nu}+[U,V]^{\mu}=V^{\nu}B^{\mu}_{\nu}+V^{\nu} \frac{\theta^{\rm NC}}{3}P^{\mu}_{\nu}
\end{equation}
Thus, the tensor $B^{\mu}_{\nu}=\nabla_{\nu}U^{\mu}$, which ordinarily governs the relative evolution of neighboring geodesics, is no longer sufficient to completely characterize the kinematics of the congruence. The additional $\theta^{\rm NC}$ dependence represents a non-commutative contribution to the evolution of the separation vector.
\\
\\
This naturally motivates the introduction of an effective deformation tensor,
\begin{equation}
\tilde{B}^{\mu}_{\nu}=B^{\mu}_{\nu}+ \frac{\theta^{\rm NC}}{3}P^{\mu}_{\nu}
\end{equation}
which governs the relative evolution of neighboring geodesics in the presence of non-commutativity.
\\
\\ 
The deformation tensor $B^{\mu}_{\nu}$ admits the standard irreducible decomposition into expansion, shear, and rotation,
\begin{equation}
    B_{\mu\nu}=\frac{1}{3}\theta P_{\mu\nu}+\sigma_{\mu\nu}+\omega_{\mu\nu},
\end{equation}
where $P^{\mu}_{\nu}=\delta^{\mu}_{\nu}+U^{\mu}U_{\nu}$ is the projection tensor onto the three-dimensional hypersurface orthogonal to the congruence,
$\theta$ is the expansion scalar $\sigma_{\mu\nu}$ is the symmetric traceless shear tensor, and $\omega_{\mu\nu}$ is the antisymmetric rotation tensor.
\\
\\
The effective deformation tensor therefore takes the form
\begin{equation}
    \tilde{B}_{\mu \nu}=\frac{1}{3}(\theta+\theta^{\rm{NC}})P_{\mu\nu}+\sigma_{\mu\nu}+\omega_{\mu\nu}.
\end{equation}
Consequently, the physical expansion, shear, and rotation of the congruence are given by
\begin{equation}
    \tilde{\theta}=\theta+\theta^{\rm NC}.
\end{equation}
The Raychaudhuri equation then reads
\begin{equation}
    \frac{d\tilde{\theta}}{d\tau}=-\frac{(\tilde{\theta}-\theta^{\rm{NC}})^2}{3}-R_{\mu\nu}U^{\mu}U^{\nu}-\sigma_{\mu\nu}\sigma^{\mu\nu}+\frac{d\theta^{\rm NC}}{d\tau}.
\end{equation}
This is mathematically the same as the standard Raychaudhuri equation except now, the expansion og the geodesic congruence is better interpreted by $\tilde \theta^{\rm NC}$ than $\theta$.
However, the equations for the shear $\sigma_{\mu\nu}$ and $\omega_{\mu\nu}$ retain the same explicit form and physical interpretation,
\begin{equation}
   U^\rho \nabla_\rho \sigma_{\mu\nu}=-\frac{2}{3}\theta \sigma_{\mu\nu}-\sigma_{\mu\rho}\sigma^{\rho}_{\nu}-\omega_{\mu\rho}\omega^{\rho}_{\nu}+\frac{1}{3}h_{\mu\nu}(\sigma_{\rho\sigma}\sigma^{\rho\sigma}-\omega_{\rho\sigma}\omega^{\rho\sigma})+C_{\rho\nu\mu\sigma}U^\rho U^\sigma +\frac{1}{2}\tilde R_{\mu\nu},
\end{equation}
\begin{equation}
    U^{\rho}\nabla_{\rho}\omega_{\mu\nu}=-\frac{2}{3}\theta\omega_{\mu\nu}-2\sigma^\rho_{[\nu}\omega_{\mu]\rho}.
\end{equation}
Now, suppose first that the strong energy condition holds,
\begin{equation}
    R_{ab}U^aU^b\geq 0,
\end{equation}
The modified Raychaudhuri equation then gives
\begin{equation}
    \frac{d\tilde{\theta}}{d\tau}+\frac{(\tilde{\theta}-\theta^{\rm{NC}})^2}{3}-\frac{d\theta^{\rm NC}}{d\tau}\leq 0,
\end{equation}
Expanding the square and, dividing by $\tilde\theta^2$, we obtain
\begin{equation}
    -\frac{d\tilde{\theta}^{-1}}{d\tau}+\frac{1}{3}-\frac{2\theta^{\rm NC}}{3\tilde \theta}-\frac{1}{\tilde{\theta}^2}\frac{d\theta^{\rm NC}}{d\tau}\leq 0.
\end{equation}
Hence,
\begin{equation}
    \tilde\theta^{-1}\geq \tilde\theta_0^{-1}+\frac{\tau}{3}-\int\left[\frac{2}{3}\frac{\theta^{\rm NC}}{\tilde \theta}+\frac{1}{\tilde{\theta}^2}\frac{d\theta^{\rm NC}}{d\tau}\right]d\tau.
\end{equation}
For a generic $\theta^{\rm NC}$, the integral on the right-hand side does not have a definite sign, and therefore the strong energy condition alone does not directly yield a focusing bound for $\tilde\theta\rightarrow -\infty$. 
\\
\\
However, it is important to distinguish this from the focusing of the standard expansion considered in the commutative case,
\begin{equation}
    \theta=\tilde\theta-\theta^{\rm NC}.
\end{equation}
Indeed, under the strong energy condition, the modified Raychaudhuri equation can still be written in its usual form, and we have
\begin{equation}
    \frac{d\theta}{d\tau}+\frac{\theta^2}{3}\leq 0,
\end{equation}
and therefore the standard focusing theorem applies. Thus, if $\theta(\tau_0)=\theta_0<0$ then $\theta\rightarrow -\infty $ within proper time$\tau\leq 3/|\theta_0|$. This also gives a sufficient condition for focusing of the modified expansion. If $\theta^{\rm NC}\in [-\infty,\infty)$ (i.e., it either takes negative values or finite positive values) along the geodesic up to the focusing point of $\theta$, then
\begin{equation}
    \tilde \theta=\theta+\theta^{\rm NC}\rightarrow -\infty
\end{equation}
within the same proper time. Thus, in this case, the usual strong energy
condition is sufficient to establish focusing of the modified expansion.
\\
\\
More generally, however, one may impose a modified energy condition, which we dub the timelike Non-Commutative Energy Condition (NCEC),
\begin{equation}\label{Non conummative energy condition}
    R_{\mu\nu}U^{\mu}U^{\nu}=8\pi\left[T_{\mu\nu}U^\mu U^\nu -\frac{1}{2}T\right]\geq \mathcal{E}^{\rm NC},
\end{equation}
where
\begin{equation}
    \mathcal{E}^{\rm NC}=\frac{2\theta^{\rm NC}\tilde\theta}{3}+\frac{d\theta^{\rm NC}}{d\tau}-\frac{(\theta^{\rm NC})^2}{3}.
\end{equation}
The significance of this condition depends on the sign of $\mathcal{E}_{\rm NC}$. If $\mathcal{E}_{\rm NC}>0$, it is stronger than the
usual strong energy condition, while if $\mathcal{E}_{\rm NC}=0$ it coincides
with it. Most importantly, if
\begin{equation}
    \mathcal{E}_{\rm NC}<0,
\end{equation}
then \eqref{Non conummative energy condition} is weaker than the usual strong
energy condition and may therefore be satisfied even when
\begin{equation}
    R_{\mu\nu}U^\mu U^\nu<0.
\end{equation}
Nevertheless, it is precisely sufficient to guarantee focusing of
$\tilde{\theta}$. In fact, substituting \eqref{Non conummative energy condition}
into the modified Raychaudhuri equation gives
\begin{equation}
    \frac{d\tilde{\theta}}{d\tau}
    +\frac{\tilde{\theta}^2}{3}\leq0.
\end{equation}
Thus, in the case $\mathcal{E}_{\rm NC}<0$, this condition provides a more
general focusing criterion than the argument based on the strong energy condition together with the finiteness of $\theta^{\rm NC}$. It establishes focusing of $\tilde{\theta}$ even in situations in which the usual strong energy condition is violated.
\\
\\
These observations may be summarized in the following lemma.
\begin{lemma}\label{Focusing within proper time with noncommutative correction timelke}
Let $U^\mu$ be the tangent field of a hypersurface-orthogonal timelike
geodesic congruence. Then $\tilde{\theta}$ goes to $-\infty$ along the geodesic within finite proper
time if either of the following conditions is satisfied:
\begin{itemize}
    \item $R_{\mu\nu}U^\mu U^\nu\geq0$, the corrected expansion $\tilde \theta_0$ is less that $\theta^{\rm NC}_0$, $\tilde \theta_0<\theta^{\rm NC}_0$ at some point on the geodesic, and
    $\theta^{\rm NC}$ remains finite along the geodesic up to the focusing
    point. In this case,
    \begin{equation}
        \tilde{\theta}\rightarrow-\infty
        \qquad\text{within}\qquad
        \tau\leq\frac{3}{|\tilde \theta_0-\theta^{\rm NC}_0|}.
    \end{equation}

    \item The timelike NCEC,
    \begin{equation}
        R_{\mu\nu}U^\mu U^\nu
        \geq
        \frac{2\theta^{\rm NC}\tilde{\theta}}{3}
        +\frac{d\theta^{\rm NC}}{d\tau}
        -\frac{(\theta^{\rm NC})^2}{3}
    \end{equation}
    holds, and the expansion $\tilde \theta^{\rm NC}$ takes a negative value
    $\tilde{\theta}_0<0$ at some point on the geodesic. In this case,
    \begin{equation}
        \tilde{\theta}\rightarrow-\infty
        \qquad\text{within}\qquad
        \tau\leq\frac{3}{|\tilde{\theta}_0|}.
    \end{equation}
\end{itemize}
\end{lemma}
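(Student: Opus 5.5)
The plan is to treat the two bullet points separately, reducing each to the scalar Riccati-type comparison already used to prove Lemma \ref{Focusing within proper time}. The common starting point is the modified Raychaudhuri equation for $\tilde\theta$. Since the congruence is hypersurface orthogonal, $\omega_{\mu\nu}=0$. Substituting $\theta=\tilde\theta-\theta^{\rm NC}$ shows that this equation is literally the classical equation \eqref{Raychaudhuri's equation} for $\theta$, with $d\tilde\theta/d\tau-d\theta^{\rm NC}/d\tau=d\theta/d\tau$.

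For the first bullet, I would work with $\theta$ rather than $\tilde\theta$. From $R_{\mu\nu}U^\mu U^\nu\geq0$ and $-\sigma_{\mu\nu}\sigma^{\mu\nu}\leq0$ we get $d\theta/d\tau+\theta^2/3\leq0$. The hypothesis $\tilde\theta_0<\theta^{\rm NC}_0$ is exactly $\theta_0=\tilde\theta_0-\theta^{\rm NC}_0<0$. Lemma \ref{Focusing within proper time} then gives $\theta\to-\infty$ within $\tau\leq 3/|\theta_0|=3/|\tilde\theta_0-\theta^{\rm NC}_0|$. Because $\theta^{\rm NC}$ is assumed finite up to that point, it is bounded on the compact interval from the initial point to the focusing point. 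Hence $\tilde\theta=\theta+\theta^{\rm NC}\to-\infty$ at the same proper time. The only care needed is to note that finiteness on a closed interval gives a uniform bound, so that the sum genuinely diverges.

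For the second bullet, I would substitute the NCEC directly into the modified Raychaudhuri equation. Expanding the square gives
\begin{equation*}
\frac{d\tilde\theta}{d\tau}=-\frac{\tilde\theta^2}{3}+\frac{2\tilde\theta\theta^{\rm NC}}{3}-\frac{(\theta^{\rm NC})^2}{3}+\frac{d\theta^{\rm NC}}{d\tau}-R_{\mu\nu}U^\mu U^\nu-\sigma_{\mu\nu}\sigma^{\mu\nu}.
\end{equation*}
The middle three terms equal $\mathcal{E}^{\rm NC}$, so the NCEC and the sign of the shear term yield $d\tilde\theta/d\tau+\tilde\theta^2/3\leq0$.

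From there I would rerun the argument of Lemma \ref{Focusing within proper time} for $\tilde\theta$. Starting from $\tilde\theta_0<0$, the inequality forces $\tilde\theta$ to be nonincreasing, so it stays negative and $\tilde\theta^{-1}$ is defined. Dividing by $\tilde\theta^2$ gives $d(\tilde\theta^{-1})/d\tau\geq 1/3$, and hence $\tilde\theta^{-1}(\tau)\geq\tilde\theta_0^{-1}+\tau/3$. The right-hand side reaches zero at $\tau=3/|\tilde\theta_0|$, while $\tilde\theta^{-1}$ remains negative. Therefore $\tilde\theta^{-1}\to0^-$, i.e.\ $\tilde\theta\to-\infty$, no later than that time, assuming the geodesic extends that far.

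I expect the main obstacle to be the implicit dependence of $\mathcal{E}^{\rm NC}$ on $\tilde\theta$ itself. The NCEC must be read as holding along the solution, and one must also assume $\theta^{\rm NC}$ is differentiable, so that $d\theta^{\rm NC}/d\tau$ exists. I would state both points explicitly as standing regularity assumptions, and otherwise follow the classical comparison argument verbatim.
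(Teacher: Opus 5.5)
Your proposal is correct and follows the paper's own argument. For the first bullet you reduce to classical focusing of $\theta=\tilde\theta-\theta^{\rm NC}$ with $\theta_0=\tilde\theta_0-\theta^{\rm NC}_0<0$ and then use finiteness of $\theta^{\rm NC}$; for the second you substitute the NCEC to obtain $d\tilde\theta/d\tau+\tilde\theta^2/3\le 0$ and rerun the Riccati comparison. Your explicit regularity assumptions, reading the NCEC along the solution and a bound on $\theta^{\rm NC}$ up to and including the focusing point, are welcome. In fact only an upper bound on $\theta^{\rm NC}$ near that point is needed, which is what the paper's phrasing $\theta^{\rm NC}\in[-\infty,\infty)$ is aiming at.
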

Note that for a null geodesic congruence, the null-like NCEC instead has an $\mathcal{E}^{\rm NC}$ given by
\begin{equation}
    \mathcal{E}^{\rm NC}=\theta^{\rm NC}\tilde\theta+\frac{d\theta^{\rm NC}}{d\lambda}-\frac{(\theta^{\rm NC})^2}{2}.
\end{equation}
Repeating the same arguments as above, we have,
\begin{lemma}\label{Focusing within proper time with noncommutative correction null}
Let $k^\mu$ be the tangent field of a hypersurface-orthogonal null 
geodesic congruence.
Then $\tilde{\theta}$ goes to $-\infty$ along the geodesic within finite proper
time if either of the following conditions is satisfied:

\begin{itemize}
    \item $R_{\mu\nu}k^\mu k^\nu\geq0$, the expansion $\tilde \theta_0$ is less than the $\theta^{\rm NC}_0$, $\tilde\theta_0<\theta^{\rm NC}_0$ at some point on the geodesic, and
    $\theta^{\rm NC}$ remains finite along the geodesic up to the focusing
    point. In this case,
    \begin{equation}
        \tilde{\theta}\rightarrow-\infty
        \qquad\text{within}\qquad
        \lambda\leq\frac{2}{|\tilde\theta_0-\theta^{\rm NC}_0|}.
    \end{equation}

    \item The null-like NCEC,
    \begin{equation}
        R_{\mu\nu}k^\mu k^\nu
        \geq
        \theta^{\rm NC}\tilde{\theta}
        +\frac{d\theta^{\rm NC}}{d\lambda}
        -\frac{(\theta^{\rm NC})^2}{2}
    \end{equation}
    holds, and the modified expansion takes a negative value
    $\tilde{\theta}_0<0$ at some point on the geodesic. In this case,
    \begin{equation}
        \tilde{\theta}\rightarrow-\infty
        \qquad\text{within}\qquad
        \lambda\leq\frac{2}{|\tilde{\theta}_0|}.
    \end{equation}
\end{itemize}
\end{lemma}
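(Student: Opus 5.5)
The plan is to mirror the timelike argument preceding Lemma \ref{Focusing within proper time with noncommutative correction timelke}, replacing the three-dimensional transverse space by the two-dimensional quotient space $\hat V_p$. First I establish the null analogue of the modified commutator. For a null congruence with tangent $k^\mu$ and deviation vector $\hat V^\mu\in\hat V_p$, the transverse part of the commutator is taken to be $[k,V]^\mu=\tfrac{1}{2}\theta^{\rm NC}V^\mu$. The factor $1/2$ replaces $1/3$ because the relevant space is now two-dimensional. The hatted effective deformation tensor is then $\hat{\tilde B}_{\mu\nu}=\hat B_{\mu\nu}+\tfrac12\theta^{\rm NC}\hat h_{\mu\nu}$. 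Its trace gives $\tilde\theta=\theta+\theta^{\rm NC}$, while the shear and twist are unchanged.

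Next I rewrite \eqref{Null Raychaudhri} in terms of $\tilde\theta$ by substituting $\theta=\tilde\theta-\theta^{\rm NC}$. For a hypersurface-orthogonal congruence we have $\hat\omega_{\mu\nu}=0$, and $-\hat\sigma_{\mu\nu}\hat\sigma^{\mu\nu}\le 0$. This yields
\begin{equation*}
\frac{d\tilde\theta}{d\lambda}\le -\frac{(\tilde\theta-\theta^{\rm NC})^2}{2}-R_{\mu\nu}k^\mu k^\nu+\frac{d\theta^{\rm NC}}{d\lambda}.
\end{equation*}

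\textbf{First bullet.} The original expansion $\theta$ still obeys the unmodified inequality $d\theta/d\lambda+\theta^2/2\le0$, since \eqref{Null Raychaudhri} is unchanged for $\theta$. The hypothesis $\tilde\theta_0<\theta^{\rm NC}_0$ is exactly $\theta_0<0$. The standard null focusing lemma (the null analogue of Lemma \ref{Focusing within proper time}) then gives $\theta\to-\infty$ within $\lambda\le 2/|\theta_0|=2/|\tilde\theta_0-\theta^{\rm NC}_0|$. Because $\theta^{\rm NC}$ is assumed finite up to that point, $\tilde\theta=\theta+\theta^{\rm NC}\to-\infty$ at the same place.

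\textbf{Second bullet.} I expand the square:
\begin{equation*}
-\tfrac12(\tilde\theta-\theta^{\rm NC})^2=-\tfrac12\tilde\theta^2+\theta^{\rm NC}\tilde\theta-\tfrac12(\theta^{\rm NC})^2.
\end{equation*}
Inserting the null NCEC bound $R_{\mu\nu}k^\mu k^\nu\ge\theta^{\rm NC}\tilde\theta+d\theta^{\rm NC}/d\lambda-(\theta^{\rm NC})^2/2$ cancels all $\theta^{\rm NC}$ terms, leaving $d\tilde\theta/d\lambda+\tilde\theta^2/2\le0$. Dividing by $\tilde\theta^2$ gives $d(\tilde\theta^{-1})/d\lambda\ge 1/2$, hence $\tilde\theta^{-1}(\lambda)\ge\tilde\theta_0^{-1}+\lambda/2$. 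With $\tilde\theta_0<0$, $\tilde\theta^{-1}$ must reach zero within $\lambda\le 2/|\tilde\theta_0|$.

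The main obstacle is the first step, namely justifying that the null commutator takes the form $\tfrac12\theta^{\rm NC}V^\mu$ on the quotient space. Concretely, I must check that two things survive the modified commutator.

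1. $k_\mu V^\mu$ still evolves homogeneously. The computation analogous to the timelike one gives $\tfrac12\theta^{\rm NC}k_\mu V^\mu$, so orthogonality is preserved.
2. The modified $\tilde B_{\mu\nu}$ still satisfies the contraction property needed for hatting. This should follow because $\hat h_{\mu\nu}$ itself is a well-defined hatted tensor.

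Once that is in place, the rest is the same differential inequality used for Lemma \ref{Focusing within proper time}. One caveat: the division by $\tilde\theta^2$ is valid only while $\tilde\theta\neq0$. This holds because $\tilde\theta$ is nonincreasing once negative.
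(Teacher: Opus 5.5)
Your proposal is correct and follows essentially the same route as the paper. The paper gets this lemma by repeating its timelike argument: it rewrites the standard null Raychaudhuri equation in terms of $\tilde\theta=\theta+\theta^{\rm NC}$. For the first bullet it uses ordinary null focusing of $\theta$ (since $\tilde\theta_0<\theta^{\rm NC}_0\iff\theta_0<0$) together with finiteness of $\theta^{\rm NC}$. For the second it lets the null NCEC cancel the $\theta^{\rm NC}$ terms, giving $d\tilde\theta/d\lambda+\tilde\theta^2/2\le0$. The step you flag as the main obstacle, the $\tfrac12\theta^{\rm NC}$ commutator on $\hat V_p$, only motivates the definition of $\tilde\theta$. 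The focusing inequalities need nothing beyond the identity $\theta=\tilde\theta-\theta^{\rm NC}$ and the unmodified equation \eqref{Null Raychaudhri}.
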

As before, let us write the separation vector $V^\mu$ as
\begin{equation}
    V^\mu (\tau)=A^\mu_\nu (\tau)\frac{dV^\mu}{d\tau}(0).
\end{equation}
In matrix notation, we now obtain
\begin{equation}
    \frac{dA}{d\tau}=\tilde B A.
\end{equation}
Consequently, 
\begin{equation}
\tilde \theta= \text{Tr} \tilde B=\text{Tr}\left[\frac{dA}{d\tau} A^{-1}\right]=\frac{d}{d\tau}(\ln |\det A|).
\end{equation}
There, now, a necessary and sufficient condition for $q$ to be conjugate to $p$ is indeed that for the congruence of all timelike geodesics emanating from $p$, we have $\tilde \theta\rightarrow -\infty$ at $q$.
\begin{prop}
    Let ($\mathcal{M}$,$g_{\mu\nu}$) be a spacetime satisfying $R_{\mu\nu}U^\mu U^\nu\geq0$ for all timelike $U^\mu$. Let $\gamma$ be a timelike geodesic and let $p\in \gamma$. Suppose the convergence of the congruence of timelike geodesics emanating into the future from $p$ is less than $\theta^{\rm NC}_0$, $\tilde \theta_0<\theta^{\rm NC}_0$ at $r\in\gamma$, and $\theta^{\rm NC}$ remains finite along the geodesic up to the focusing point. Then within proper time $\tau\leq 3/|\tilde\theta_0-\theta^{\rm NC}_0|$ from $r$ along $\gamma$ there exists a point $q$ conjugate to $p$, assuming, of course, that $\gamma$ extends that far.
\end{prop}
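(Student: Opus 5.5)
The plan is to follow the commutative proof of the corresponding proposition in Section~\ref{Section 2}, replacing $\theta$ by $\tilde\theta$ and invoking the first case of Lemma~\ref{Focusing within proper time with noncommutative correction timelke} instead of Lemma~\ref{Focusing within proper time}. Let $U^\mu$ be the tangent field of the congruence of future-directed timelike geodesics emanating from $p$. Away from $p$ this congruence is smooth. It is also hypersurface orthogonal, being orthogonal to the level sets of proper time from $p$ (the Gauss lemma), so $\omega_{\mu\nu}=0$ and the lemma applies on the portion of $\gamma$ to the future of $r$.

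\textbf{Step 1 (focusing of $\theta$).} Write $\theta=\tilde\theta-\theta^{\rm NC}$. The hypothesis $\tilde\theta_0<\theta^{\rm NC}_0$ at $r$ says exactly that $\theta(r)=\tilde\theta_0-\theta^{\rm NC}_0<0$. The ordinary Raychaudhuri equation with $R_{\mu\nu}U^\mu U^\nu\ge0$ and $\omega_{\mu\nu}=0$ gives $d\theta/d\tau+\theta^2/3\le0$. Integrating $d(\theta^{-1})/d\tau\ge 1/3$ from $r$ shows $\theta\to-\infty$ within proper time $3/|\tilde\theta_0-\theta^{\rm NC}_0|$ of $r$, provided $\gamma$ extends that far.

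\textbf{Step 2 (focusing of $\tilde\theta$).} Since $\theta^{\rm NC}$ is assumed finite along $\gamma$ up to this focusing point, $\tilde\theta=\theta+\theta^{\rm NC}\to-\infty$ at the same point $q$. This is precisely the first bullet of Lemma~\ref{Focusing within proper time with noncommutative correction timelke}.

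\textbf{Step 3 (conjugacy).} Write the deviation vectors of the congruence as $V^\mu(\tau)=A^\mu_\nu(\tau)\,\frac{dV^\nu}{d\tau}(0)$, with $A(0)=0$ and $\dot A(0)=\mathbb{1}$. The relation $dA/d\tau=\tilde BA$ derived above gives
\begin{equation*}
\tilde\theta=\frac{d}{d\tau}\ln|\det A| .
\end{equation*}
Because $A$ solves a linear ODE with smooth coefficients, $d(\det A)/d\tau$ stays bounded on compact parameter intervals. Hence $\tilde\theta\to-\infty$ at $q$ forces $\det A\to0$ at $q$, so there is nontrivial initial data $dV^\mu/d\tau(0)$ with $V^\mu(q)=0$. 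The point $q$ is therefore conjugate to $p$ and lies within proper time $3/|\tilde\theta_0-\theta^{\rm NC}_0|$ of $r$.

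\textbf{Main obstacle.} The delicate step is Step 3. The identification of a conjugate point with $\det A=0$ requires $A$ to satisfy a linear second-order ODE whose coefficients are bounded along $\gamma$. With the modified commutator, the deviation equation for $V^\mu$ picks up terms in $\theta^{\rm NC}$ and $d\theta^{\rm NC}/d\tau$, since $U^\nu\nabla_\nu V^\mu=\tilde B^\mu_\rho V^\rho$. I would first differentiate $dA/d\tau=\tilde BA$ once more, using the equation for $U^\rho\nabla_\rho B_{\mu\nu}$, to get
\begin{equation*}
\ddot A=-R\,A+\Big(\tfrac{\theta^{\rm NC}}{3}\Big)^{\!\cdot}A+\text{(terms linear in }A,\dot A\text{ with coefficients built from }\theta^{\rm NC}).
\end{equation*}
I would then check that finiteness of $\theta^{\rm NC}$, supplemented if necessary by boundedness of $d\theta^{\rm NC}/d\tau$ on the interval, keeps these coefficients bounded, so that $\det A$ and its derivative remain finite. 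If only finiteness of $\theta^{\rm NC}$ itself is available, I would fall back on the first-order system $(A,\tilde BA)$: $\theta$ and $\tilde\theta$ differ by a bounded function, so $\ln|\det A|$ and the logarithm of the commutative Jacobian determinant differ by a bounded amount. Thus both determinants vanish at the same point $q$, and the commutative criterion then yields conjugacy.
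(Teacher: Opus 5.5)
Your proposal is correct and follows the paper's route. You use the first case of the timelike focusing lemma: ordinary focusing of $\theta=\tilde\theta-\theta^{\rm NC}$ under $R_{\mu\nu}U^\mu U^\nu\geq0$, transferred to $\tilde\theta$ by finiteness of $\theta^{\rm NC}$. You then combine it with $\tilde\theta=\frac{d}{d\tau}\ln|\det A|$ from $dA/d\tau=\tilde BA$ to turn $\tilde\theta\to-\infty$ into $\det A\to0$.

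Your Step 3 is more careful than the paper, which simply asserts the conjugacy criterion. Since $\tilde B$ itself blows up at $q$, bounding $d(\det A)/d\tau$ genuinely needs one of your two fixes. One is the second-order equation with $BA$ eliminated via $BA=\dot A-\frac{\theta^{\rm NC}}{3}A$, so that only $\theta^{\rm NC}$ and $d\theta^{\rm NC}/d\tau$ appear as coefficients. The other is your comparison with the commutative Jacobian; indeed $\det A=\exp\bigl(\int_0^\tau\theta^{\rm NC}d\tau'\bigr)\det A_0$ exactly, so the conjugate points are precisely the commutative ones.
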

\begin{prop}
    Let ($\mathcal{M}$,$g_{\mu\nu}$) be a spacetime satisfying $R_{\mu\nu}U^\mu U^\nu\geq \mathcal{E}^{\rm NC}$ for all timelike $U^\mu$. Let $\gamma$ be a timelike geodesic and let $p\in \gamma$. Suppose the convergence of the congruence of timelike geodesics emanating into the future from $p$ attains the value $\tilde \theta_0<0$ at $r\in\gamma$. Then within proper time $\tau\leq 3/|\tilde\theta_0-\theta^{\rm NC}_0|$ from $r$ along $\gamma$ there exists a point $q$ conjugate to $p$, assuming, of course, that $\gamma$ extends that far.
\end{prop}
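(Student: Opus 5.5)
The plan is to follow the structure of the first non-commutative conjugate-point proposition, but to take the focusing input from the second bullet of Lemma \ref{Focusing within proper time with noncommutative correction timelke} instead of the first. There are three steps.

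\textbf{Step 1 (conjugacy criterion).} Consider the congruence of timelike geodesics emanating from $p$. Its orthogonal deviation vectors have the form $V^\mu(\tau)=A^\mu_\nu(\tau)\,\frac{dV^\nu}{d\tau}(0)$, with $\frac{dA}{d\tau}=\tilde B A$. We showed above that $\tilde\theta=\frac{d}{d\tau}\ln|\det A|$. Since $A$ obeys a linear ODE, $\frac{d}{d\tau}\det A$ stays bounded along $\gamma$. Hence a point $q$ is conjugate to $p$ if and only if $\tilde\theta\to-\infty$ at $q$. So it suffices to show that $\tilde\theta$ diverges to $-\infty$ within the stated proper time from $r$.

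\textbf{Step 2 (focusing of $\tilde\theta$).} The congruence from $p$ is hypersurface orthogonal, so $\omega_{\mu\nu}=0$. Substituting the hypothesis $R_{\mu\nu}U^\mu U^\nu\ge\mathcal{E}^{\rm NC}$ into the modified Raychaudhuri equation cancels the cross term $\frac{2}{3}\theta^{\rm NC}\tilde\theta$, the quadratic term $(\theta^{\rm NC})^2/3$ and the term $d\theta^{\rm NC}/d\tau$. This leaves $\frac{d\tilde\theta}{d\tau}+\frac{\tilde\theta^2}{3}\le -\sigma_{\mu\nu}\sigma^{\mu\nu}\le0$. Integrating $d\tilde\theta^{-1}/d\tau\ge 1/3$ from $r$, where $\tilde\theta_0<0$, shows that $\tilde\theta^{-1}$ reaches zero from below within a proper time $\tau_*\le 3/|\tilde\theta_0|$. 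By Step 1, this gives a conjugate point $q$ with $\tau(q)-\tau(r)\le 3/|\tilde\theta_0|$.

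\textbf{Step 3 (the stated constant).} This is the step I expect to be the main obstacle. To reach the displayed constant $3/|\tilde\theta_0-\theta^{\rm NC}_0|$, I would compare the two bounds by working with the ordinary expansion $\theta_0=\tilde\theta_0-\theta^{\rm NC}_0$ at $r$.

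I would try first to split according to the sign of $\theta^{\rm NC}_0$.
\begin{itemize}
\item If $2\tilde\theta_0\le\theta^{\rm NC}_0\le 0$, then $|\tilde\theta_0-\theta^{\rm NC}_0|\le|\tilde\theta_0|$. Step 2 then already gives the claim, since $3/|\tilde\theta_0|\le3/|\tilde\theta_0-\theta^{\rm NC}_0|$.
\item For $\theta^{\rm NC}_0>0$, I would study a comparison ODE for $\theta=\tilde\theta-\theta^{\rm NC}$. I would use the NCEC to write $\frac{d\theta}{d\tau}+\frac{\theta^2}{3}\le -\frac{2}{3}\theta^{\rm NC}\tilde\theta+\frac{(\theta^{\rm NC})^2}{3}$. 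Near the blow-up, $\tilde\theta$ is negative and large, and $\theta^{\rm NC}$ is finite. I would then try to show that the right-hand side can be absorbed, so that $\theta^{-1}$ increases at rate at least $1/3$ from its initial value $\theta_0^{-1}$. This would force $\theta$, and hence $\tilde\theta=\theta+\theta^{\rm NC}$, to diverge within $3/|\theta_0|$.
\end{itemize}

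Making this absorption rigorous, or else restricting to the regime $\mathcal{E}^{\rm NC}\le0$ in which it holds, is where I expect the difficulty to lie. If the absorption fails in some sub-case, I would fall back on taking the minimum of the two blow-up times obtained in Steps 2 and 3.
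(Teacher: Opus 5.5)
\textbf{Steps 1 and 2 are the paper's argument.} The paper gives no separate proof of this proposition. It combines two facts: the NCEC turns the modified Raychaudhuri equation into $\frac{d\tilde\theta}{d\tau}+\frac{\tilde\theta^2}{3}\le 0$ (second bullet of Lemma~\ref{Focusing within proper time with noncommutative correction timelke}), and conjugacy is detected by $\tilde\theta=\frac{d}{d\tau}\ln|\det A|$. That argument gives a conjugate point within proper time $3/|\tilde\theta_0|$ of $r$, and nothing more. The displayed constant $3/|\tilde\theta_0-\theta^{\rm NC}_0|$ appears to have been copied from the preceding SEC proposition.

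\textbf{Step 3 is the gap, and it cannot be closed, because with that constant the statement is false.} Nothing in the hypotheses prevents the NCEC from holding with equality and $\sigma_{\mu\nu}=0$ along $\gamma$. In that case $\frac{d\tilde\theta}{d\tau}=-\frac{\tilde\theta^2}{3}$ exactly. So, with $\tau$ measured from $r$, $\tilde\theta=(\tilde\theta_0^{-1}+\tau/3)^{-1}$ first diverges at exactly $\tau=3/|\tilde\theta_0|$. On the other hand, $3/|\tilde\theta_0-\theta^{\rm NC}_0|<3/|\tilde\theta_0|$ whenever $\theta^{\rm NC}_0>0$ or $\theta^{\rm NC}_0<2\tilde\theta_0$.

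Here is a concrete instance. In Minkowski space the geodesics from $p$ have $\theta=3/s$ and $\sigma_{\mu\nu}=0$, where $s$ is proper time from $p$. Put $r$ at $s=1$ and choose $\theta^{\rm NC}(s)=\frac{3}{s-4}-\frac{3}{s}$ on $0<s<4$.
\begin{itemize}
\item A direct check gives $\mathcal{E}^{\rm NC}=0=R_{\mu\nu}U^\mu U^\nu$, so the NCEC holds with equality.
\item At $r$ one has $\tilde\theta_0=-1$ and $\theta^{\rm NC}_0=-4$, so the statement promises a conjugate point within proper time $1$ of $r$.
\item But $\tilde\theta=3/(s-4)$ stays finite until proper time $3$ after $r$.
\end{itemize}

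\textbf{The Step 3 sketch also fails for visible reasons.}
\begin{itemize}
\item Your case split never treats $\theta^{\rm NC}_0<2\tilde\theta_0$, which is where the example above lives. There $\theta_0=\tilde\theta_0-\theta^{\rm NC}_0>0$, so no comparison for $\theta^{-1}$ started at $\theta_0^{-1}$ can send $\theta$ to $-\infty$.
\item In the branch $\theta^{\rm NC}_0>0$, your inequality has dropped a term. The NCEC actually gives $\frac{d\theta}{d\tau}+\frac{\theta^2}{3}\le-\frac{2}{3}\theta^{\rm NC}\tilde\theta-\frac{d\theta^{\rm NC}}{d\tau}+\frac{1}{3}(\theta^{\rm NC})^2$.
\item Finiteness of $\theta^{\rm NC}$ is not a hypothesis of this proposition.
\item The cross term $-\frac{2}{3}\theta^{\rm NC}\tilde\theta$ is positive in that branch. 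Positive contributions on the right can only lower the guaranteed growth rate of $\theta^{-1}$ below $1/3$, so no absorption recovers the time $3/|\theta_0|$.
\item The fallback of taking the minimum of two blow-up times just returns $3/|\tilde\theta_0|$, since only one bound is established.
\end{itemize}

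\textbf{What to do instead.} Stop after Step 2 and state the proposition with $3/|\tilde\theta_0|$, which is the constant of the Lemma. Note that the displayed bound follows only in the subcase $2\tilde\theta_0\le\theta^{\rm NC}_0\le 0$, which you identified correctly.

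One further caveat applies to your Step 1 and equally to the paper. Step 1 tacitly needs $\theta^{\rm NC}$ to be $C^1$ and bounded along $\gamma$. The modified deviation equation has coefficients built from $\theta^{\rm NC}$ and $d\theta^{\rm NC}/d\tau$, and without that assumption the boundedness of $d(\det A)/d\tau$ is not justified.
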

Note that in the noncommutative case, the geodesic deviation equation is given by
\begin{equation}
 U^\nu \nabla_\nu (U^\lambda \nabla_\lambda V^\mu)=-R^\mu_{\nu \lambda \rho}V^\mu U^\lambda U^\rho+\frac{1}{3}U^\nu \nabla_\nu (\theta^{\rm NC}V^\mu).
\end{equation}
Let us now turn our attention to the extremal length properties of timelike geodesics. We have
\begin{equation}
\begin{split}
    \frac{d\tau}{d\alpha}=\int_a^b \frac{\partial f}{\partial \alpha} dt\\
    =\int_a^b X^\mu\nabla_\mu (-T^\nu T_\nu)^{1/2}dt\\
    =-\int_a^b \frac{1}{f}T_\nu X^\mu \nabla_\mu T^\nu dt \\
    = -\int_a^b \frac{1}{f}T_\nu T^\mu \nabla_\mu X^\nu dt+\frac{1}{3}\int_a^b\frac{1}{f}T_\nu X^\nu \theta^{\rm NC}\\=-\int_a^b T^\mu\nabla_\mu\left[\frac{1}{f}T_\nu X^\nu\right]dt +\int_a^b X^\nu T^\mu\nabla_\mu (T_\nu /f) dt+0\\=\int_a^b X^\nu T^\mu \nabla_\mu(T_\nu /f) dt,
\end{split}
\end{equation}
The second variation of arc length is
\begin{equation}
    \frac{d\tau^2}{d\alpha^2}=\int_a^b X^\rho \nabla_{\rho} [X^\nu T^\mu\nabla_\mu(T_\nu/f)]dt.
\end{equation}
Evaluating this expression at $\alpha=0$, assuming that $\lambda_0$ is a geodesic, we find
\begin{equation}\label{proper time second deivative 2}
    \begin{split}
        \frac{d^2\tau}{d\alpha^2}\bigg|_{\alpha=0}=\int_a^b X^\nu (X^\rho \nabla_\rho T^\mu)\nabla_\mu (T_\nu/f) dt +\int_a^b X^\nu T^\mu X^\rho \nabla_\rho \nabla_\mu (T_\nu /f) dt \\
        =\int_a^b X^\nu (T^\rho \nabla_\rho X^\mu)\nabla_\mu (T_\nu /f) dt +\int_a^b X^\nu T^\mu X^\rho \nabla_\mu \nabla_\rho (T_\nu/f) dt+\int_a^b \frac{X^\nu T^\mu X^\rho R_{\rho \mu \nu}^\sigma T_\sigma}{f}dt \\
        =\int_a^b X^\nu T^\rho\nabla_\rho [X^\mu\nabla_\mu (T_\nu/f)] dt+\int_a^b \frac{X^\nu T^\mu X^\rho R_{\rho\mu\nu}^{\sigma} T_{\sigma}}{f}f dt.
    \end{split}
\end{equation}
The term in the square brackets can be reexpressed as
\begin{equation}
\begin{split}
    X^\mu\nabla_\mu (T_\nu/f)=\frac{1}{f}X^\mu\nabla_\mu T_\nu -\frac{1}{f^2}T_\nu X^\mu\nabla_\mu f\\
    =\frac{1}{f}T^\mu\nabla_\mu X_\nu-\frac{1}{f}\theta^{\rm NC}X_\nu +\frac{1}{f^2}T_\nu T^\mu \nabla_\mu\left[\frac{1}{f}T_\rho X^\rho\right].
\end{split}
\end{equation}
Eqn.\eqref{proper time second deivative 2} can be put in a more easily recognizable form by choosing the curve parametrization so that $f=1$ along the geodesic $\lambda_0$ and also choosing the deviation vector $X^\mu$ to be orthogonal to $T^\mu$ along $\lambda_0$. With these choices, we find
\begin{equation}
    \begin{split}
         \frac{d^2\tau}{d\alpha^2}\bigg|_{\alpha=0}=\int_a^b X^\nu\{T^\rho\nabla_\rho(T^\mu\nabla_\mu X_\nu)-\frac{1}{3}T^\rho \nabla_\rho (\theta^{\rm NC} X_\nu)+R_{\rho\mu\nu}^{\sigma}T^\mu T_{\sigma}X^\rho \} dt\\
         =\int_a^b X^\nu (\mathcal{O}X)_\nu dt,
    \end{split}
\end{equation}
and we have, as before,
\begin{thm}
    Let $\gamma$ be a smooth timelike curve connecting two points $p,q\in \mathcal{M}$. Then the necessary and sufficient condition that $\gamma$ locally maximizes the proper time between $p$ and $q$ over smooth one-parameter variations is that $\gamma$ be a geodesic with no point conjugate to $p$ between $p$ and $q$.
\end{thm}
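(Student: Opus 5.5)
The plan is to follow the proof of Theorem \ref{maximize conjuagte points} step by step. The only change is that the Jacobi operator $\mathcal{O}$ is replaced by the modified one in the second-variation formula just derived,
\begin{equation*}
(\mathcal{O}X)_\nu=T^\rho\nabla_\rho(T^\mu\nabla_\mu X_\nu)-\frac{1}{3}T^\rho\nabla_\rho(\theta^{\rm NC}X_\nu)+R_{\rho\mu\nu}^{\sigma}T^\mu T_\sigma X^\rho .
\end{equation*}
Throughout, ``Jacobi field'' and ``conjugate point'' are understood with respect to the non-commutative geodesic deviation equation, $\mathcal{O}X=0$. With this reading, conjugacy is again characterized by $\det A=0$, equivalently $\tilde\theta\rightarrow-\infty$, as established above.

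\textbf{Step 1: first variation.} The first-variation computation above still gives
\begin{equation*}
\frac{d\tau}{d\alpha}=\int_a^b X^\nu T^\mu\nabla_\mu(T_\nu/f)\,dt .
\end{equation*}
The extra $\theta^{\rm NC}$ term integrates to zero because $T_\nu X^\nu$ is transported with the factor $\theta^{\rm NC}/3$ and vanishes at the endpoints. Hence stationarity for all $X^\mu$ vanishing at $p$ and $q$ is equivalent to the geodesic equation. This gives the ``geodesic'' part of the necessary condition.

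\textbf{Step 2: necessity of no conjugate point.} Suppose $r\in\gamma$, strictly between $p$ and $q$, is conjugate to $p$. Let $J$ be the nontrivial solution of $\mathcal{O}J=0$ with $J(p)=J(r)=0$, taken orthogonal to $T$. Orthogonality is preserved, as shown above. Define $X=J$ on $[p,r]$ and $X=0$ on $[r,q]$. Smoothing the corner, I would perturb to $X+\epsilon Y$, where $Y$ is a smooth field vanishing at $p$ and $q$ with $Y(r)=-\,T^\mu\nabla_\mu J|_r$. Integrating by parts, $\int X^\nu(\mathcal{O}X)_\nu$ becomes a quadratic form containing a boundary jump term. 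The $\theta^{\rm NC}$ piece contributes only boundary terms proportional to $\theta^{\rm NC}X^\nu X_\nu$, and these vanish at $r$ where $J=0$. So the cross term is $2\epsilon|T^\mu\nabla_\mu J|^2_r>0$. Since $T^\mu\nabla_\mu J\neq0$ at $r$ (else $J\equiv0$ by ODE uniqueness), the cross term gives $d^2\tau/d\alpha^2>0$ for small $\epsilon$. Hence $\gamma$ is not a local maximum.

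\textbf{Step 3: sufficiency.} Assume there is no conjugate point on $(p,q]$. Then $A(\tau)$ is invertible on $(0,b]$, and any orthogonal variation field vanishing at $p$ can be written $X=A\,Z$. Substituting, using $dA/d\tau=\tilde BA$, and integrating by parts, I would aim for the standard completed-square identity
\begin{equation*}
\int_a^b X^\nu(\mathcal{O}X)_\nu\,dt=-\int_a^b \big|A\,\tfrac{dZ}{d\tau}\big|^2 dt\le0 .
\end{equation*}
Equality in this identity would force $Z$ constant and hence $X\equiv0$. This yields a strict local maximum.

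\textbf{Main obstacle.} The obstacle is establishing the identity in Step 3. In the commutative case it relies on $A^T\dot A$ being symmetric, i.e.\ the Wronskian of Jacobi fields vanishing at $p$ being zero. With the $\theta^{\rm NC}$ term, the Wronskian $\langle \dot J_1,J_2\rangle-\langle J_1,\dot J_2\rangle$ is no longer conserved; it picks up terms proportional to $\theta^{\rm NC}$. I would first check directly from $\mathcal{O}$ that the $\theta^{\rm NC}$ contributions are symmetric in $J_1,J_2$, since they act as a multiple of the identity on $P^\mu_\nu$. The Wronskian would then still be preserved, and $\tilde B=B+\tfrac{\theta^{\rm NC}}{3}P$ would still be symmetric for this hypersurface-orthogonal congruence. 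If that fails, the fallback is to absorb the term by the rescaling $X=e^{\frac13\int\theta^{\rm NC}}\hat X$. This should reduce $\mathcal{O}$ to a standard Jacobi operator with a modified, still symmetric, curvature term, after which the classical argument applies verbatim.
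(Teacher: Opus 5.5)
Your plan follows the paper exactly. The paper derives $\frac{d^2\tau}{d\alpha^2}=\int X^\nu(\mathcal{O}X)_\nu\,dt$ with the modified $\mathcal{O}$ and then asserts the theorem ``as before''. The step you flag as the main obstacle is what that phrase skips, and it is a genuine gap that cannot be closed.

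Work in a parallelly propagated orthonormal frame orthogonal to $T^\mu$ (with $f=1$). Write dots for $T^\mu\nabla_\mu$, put $\phi=\theta^{\rm NC}/3$, and let $\mathcal{R}$ be the symmetric map $X^\rho\mapsto R_{\rho\mu\nu}{}^{\sigma}T^\mu T_\sigma X^\rho$. Then $\mathcal{O}X=\ddot X-\frac{d}{d\tau}(\phi X)+\mathcal{R}X$. The $\phi$-term is not formally self-adjoint: for $X$ vanishing at $p$ and $q$,
\[
\int X\cdot\mathcal{O}X\,d\tau=-\int\Big(|\dot X|^2+\tfrac12\dot\phi\,|X|^2-X\cdot\mathcal{R}X\Big)\,d\tau .
\]
So the second variation only sees the symmetric operator $\ddot X+(\mathcal{R}-\frac12\dot\phi)X$. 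The operator $\mathcal{O}$ differs from it by the antisymmetric piece $-(\phi\dot X+\frac12\dot\phi X)$.

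Your Wronskian check actually succeeds: $\dot W=\phi W$, so $A^{T}\dot A$ stays symmetric. But that is not enough. Take $X=AZ$ with $A$ the matrix solution of $\mathcal{O}A=0$, $A(0)=0$, $\dot A(0)=I$. Then
\[
\int X\cdot\mathcal{O}X\,d\tau=-\int|A\dot Z|^2\,d\tau-\int\phi\,X\cdot A\dot Z\,d\tau=-\int\big|A\dot Z+\tfrac12\phi X\big|^2\,d\tau+\tfrac14\int\phi^2|X|^2\,d\tau ,
\]
and the leftover term has the wrong sign.

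Your fallback fails too. The substitution $X=e^{\int\phi}\hat X$ only turns $\mathcal{O}$ into its formal adjoint $\ddot{\hat X}+\phi\dot{\hat X}+\mathcal{R}\hat X$. The correct factor $e^{\frac12\int\phi}$ turns $\mathcal{O}J=0$ into a self-adjoint Jacobi equation, but with potential $\mathcal{R}-\frac12\dot\phi-\frac14\phi^2$ rather than the index-form potential $\mathcal{R}-\frac12\dot\phi$.

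This is not a technicality. In Minkowski space take $\theta^{\rm NC}=-6\tau$ along $\gamma$; this is admissible, since $\theta^{\rm NC}$ is a free function in the paper. Then $\mathcal{O}J=\frac{d}{d\tau}(\dot J+2\tau J)$, and $J(0)=0$ gives $J(\tau)=e^{-\tau^2}\int_0^\tau e^{s^2}ds\;\dot J(0)\neq0$ for all $\tau>0$, so there are no conjugate points. Yet $\int_0^b X\cdot\mathcal{O}X\,d\tau=\int_0^b(|X|^2-|\dot X|^2)\,d\tau>0$ for $X=\sin(\pi\tau/b)\,e_1$ (with $e_1$ a parallel unit vector orthogonal to $T$) once $b>\pi$. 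So, with conjugacy read through $\mathcal{O}J=0$, the sufficiency half is false. It does hold, by the classical proof verbatim, if Jacobi fields are instead defined by the symmetric operator $\ddot X+(\mathcal{R}-\frac16\dot\theta^{\rm NC})X=0$.

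Some smaller points.

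\emph{Step 2.} The $\theta^{\rm NC}$ term does not contribute only boundary terms; it contributes the bulk potential $-\frac12\dot\phi|X|^2$. Because your $J$ solves $\mathcal{O}J=0$ rather than the Euler--Lagrange equation of the quadratic form, the cross term is $2\epsilon\big(|\dot J(r)|^2+\int_0^{\tau_r}Y\cdot(\phi\dot J+\frac12\dot\phi J)\,d\tau\big)$, with $r=\gamma(\tau_r)$, not $2\epsilon|\dot J(r)|^2$. Taking $Y$ bounded and supported within $\delta$ of $r$ makes the integral $O(\delta)$, which rescues necessity. This is consistent with Sturm comparison: since $\mathcal{R}-\frac12\dot\phi-\frac14\phi^2\le\mathcal{R}-\frac12\dot\phi$, the index form has a conjugate point at or before any $\mathcal{O}$-conjugate point.

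\emph{Your opening identification.} Equating $\mathcal{O}$-conjugacy with $\det A=0$, i.e.\ $\tilde\theta\to-\infty$, is unjustified. Fields transported by $\tilde B=B+\phi P$ are $e^{\int\phi}J_{\rm cl}$ with $J_{\rm cl}$ a classical Jacobi field, and $\mathcal{O}(e^{\int\phi}J_{\rm cl})=e^{\int\phi}\phi\,\dot J_{\rm cl}\neq0$. The example above defeats this notion as well, since $e^{\int\phi}\tau\,v$ never vanishes for $\tau>0$.

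\emph{Step 1.} The law $\frac{d}{d\tau}(T\cdot X)=\phi\,T\cdot X$ was derived along a geodesic, so it cannot be used to discard the extra first-variation term before you know $\gamma$ is a geodesic. Restrict to $X\perp T$ instead, which already detects the geodesic equation.

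Finally, a straight line in Minkowski space coming out non-maximal shows the modified formula is not the true second variation of proper time. For any one-parameter family of curves, $T=\partial/\partial t$ and $X=\partial/\partial\alpha$ are coordinate fields and commute identically.
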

Now, the extrinsic curvature $K$ of a spatial hypersurface $\Sigma$ is given by
\begin{equation}
    K=\theta=\tilde\theta-\theta^{\rm NC}.
\end{equation}
Analogous to Proposition \ref{focsuing extrinsic curvature}, we now have
\begin{lemma}
Let $U^\mu$ be the tangent field of a hypersurface-orthogonal timelike
geodesic congruence. Let $\Sigma$ be a spacelike hypersurface with extrinsic curvature $K$. Then there exists a point $p$ conjugate to $\Sigma$ along the geodesic $\gamma$ orthogonal to $\Sigma$ and passing through $q$, if either of the following conditions is satisfied:
\begin{itemize}
    \item $R_{\mu\nu}U^\mu U^\nu\geq0$, $K<0$   and
    $\theta^{\rm NC}$ remains finite along the geodesic up to the focusing
    point. In this case, the point $p$ conjugate to $\Sigma$ exists within $\tau\leq 3/|K|$
    \item The timelike NCEC,
    \begin{equation}
        R_{\mu\nu}U^\mu U^\nu
        \geq
        \frac{2\theta^{\rm NC}\tilde{\theta}}{3}
        +\frac{d\theta^{\rm NC}}{d\tau}
        -\frac{(\theta^{\rm NC})^2}{3},
    \end{equation}
    holds, $K<-\theta^{\rm NC}$. In this case, the point $p$ conjugate to $\Sigma$ exists within $\tau\leq 3/|K+\theta^{\rm NC}|$
\end{itemize}
\end{lemma}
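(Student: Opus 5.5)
The plan is to reduce the statement to Lemma \ref{Focusing within proper time with noncommutative correction timelke}, combined with the conjugacy criterion $\tilde\theta=\frac{d}{d\tau}\ln|\det A|$ established just above for the non-commutative congruence. The identification $K=\theta=\tilde\theta-\theta^{\rm NC}$ on $\Sigma$ translates the hypotheses on $K$ into hypotheses on the initial value $\tilde\theta_0$.

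\textbf{Conjugacy criterion for $\Sigma$.} First, I would consider the congruence of timelike geodesics orthogonal to $\Sigma$. This congruence is hypersurface orthogonal, so $\omega_{\mu\nu}=0$ and the modified Raychaudhuri equation applies without the twist term. Writing the orthogonal deviation vector as $V^\mu(\tau)=A^\mu_\nu(\tau)V^\nu(0)$, with $A(0)=\mathbb{1}$ on $\Sigma$ in place of the point-congruence data, the same computation gives $dA/d\tau=\tilde B A$. Hence $\tilde\theta=\frac{d}{d\tau}\ln|\det A|$. Because $A$ solves a linear ODE, $\det A\to0$ at $p$ exactly when $\tilde\theta\to-\infty$ there, provided $\theta^{\rm NC}$ and its derivative stay bounded so that the ODE has regular coefficients. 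So $p$ is conjugate to $\Sigma$ if and only if $\tilde\theta\to-\infty$ at $p$.

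\textbf{First bullet.} Set $\tilde\theta_0-\theta^{\rm NC}_0=\theta_0=K<0$ at $q$, which is exactly the hypothesis $\tilde\theta_0<\theta^{\rm NC}_0$ of the first case of Lemma \ref{Focusing within proper time with noncommutative correction timelke}. Under $R_{\mu\nu}U^\mu U^\nu\ge0$ the commutative expansion obeys $\dot\theta+\theta^2/3\le0$. Therefore $\theta\to-\infty$ within $\tau\le3/|K|$. Finiteness of $\theta^{\rm NC}$ up to that point gives $\tilde\theta\to-\infty$ as well, and the criterion above yields the conjugate point.

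\textbf{Second bullet.} Here $\tilde\theta_0=K+\theta^{\rm NC}_0<0$, which is precisely $K<-\theta^{\rm NC}$ at $q$. The NCEC turns the modified Raychaudhuri equation into $\dot{\tilde\theta}+\tilde\theta^2/3\le0$. Integrating $d(\tilde\theta^{-1})/d\tau\ge1/3$ gives $\tilde\theta\to-\infty$ within $\tau\le3/|\tilde\theta_0|=3/|K+\theta^{\rm NC}|$, and the criterion again gives the conjugate point.

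\textbf{Main obstacle.} The delicate step is the conjugacy criterion in the second bullet. The NCEC alone does not force $\theta^{\rm NC}$ to remain finite, yet the equivalence between $\tilde\theta\to-\infty$ and $\det A\to0$ requires regular coefficients in the ODE for $A$. I would state explicitly the mild regularity assumption that $\theta^{\rm NC}$ is smooth along $\gamma$ up to the focusing point, which is implicit in the statement. Then $\tilde\theta\to-\infty$ must come from $\det A\to0$ rather than from a blow-up of $\theta^{\rm NC}$.
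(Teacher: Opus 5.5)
Your proposal is correct and follows the same route the paper takes implicitly: the identification $K=\theta=\tilde\theta-\theta^{\rm NC}$ turns the hypotheses on $K$ into the initial data for the two cases of Lemma \ref{Focusing within proper time with noncommutative correction timelke}, and then $\tilde\theta=\frac{d}{d\tau}\ln|\det A|$, with $A(0)$ the identity on $\Sigma$, converts $\tilde\theta\to-\infty$ into a conjugate point. The regularity assumption on $\theta^{\rm NC}$ that you add for the second bullet is genuinely needed and is left implicit in the paper: since $\tilde B=B+\frac{\theta^{\rm NC}}{3}P$, one has $\det A=e^{\int\theta^{\rm NC}\,d\tau}\det A_{\rm std}$, so an integrable blow-up $\theta^{\rm NC}\to-\infty$ can satisfy the NCEC and force $\tilde\theta\to-\infty$ without producing any conjugate point.
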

Similarly, for null geodesics, we have
\begin{lemma}
Let $k^\mu$ be the tangent field of a hypersurface-orthogonal null
geodesic congruence. Let $S$ be a smooth two-dimensional spacelike submanifold. Then there exists a point $p$ conjugate to $S$ along the outgoing null geodesic $\mu$ passing through $p$, if either of the following conditions is satisfied:
\begin{itemize}
    \item $R_{\mu\nu}k^\mu k^\nu\geq0$, the corrected expansion $\tilde \theta_0$ is less that $\theta^{\rm NC}_0$, $\tilde \theta_0<\theta^{\rm NC}_0$ at some point on the geodesic, and
    $\theta^{\rm NC}$ remains finite along the geodesic up to the focusing
    point. In this case, the point $p$ conjugate to $S$ exists within $\tau\leq 2/|\tilde \theta-\theta^{\rm NC}|$
    \item The null-like NCEC,
    \begin{equation}
        R_{\mu\nu}k^\mu k^\nu
        \geq
        \theta^{\rm NC}\tilde{\theta}
        +\frac{d\theta^{\rm NC}}{d\lambda}
        -\frac{(\theta^{\rm NC})^2}{2}
    \end{equation}
    holds, and the expansion $\tilde \theta^{\rm NC}$ takes a negative value
    $\tilde{\theta}_0<0$ at some point on the geodesic. In this case, the point $p$ conjugate to $S$ exists within $\lambda\leq 2/|\tilde \theta|$
\end{itemize}
\end{lemma}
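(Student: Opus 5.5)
The plan is to copy the proof of Proposition \ref{null conjugate}, with the commutative focusing lemma replaced by Lemma \ref{Focusing within proper time with noncommutative correction null}. Both bullets follow the same three steps.

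First, set up the congruence. Take the outgoing family of null geodesics orthogonal to $S$, with tangent $k^\mu$, and work with the two-dimensional hatted deviation vectors $\hat V^\mu$. This family spans a null hypersurface generated from $S$, so $\hat\omega_{\mu\nu}=0$. The congruence is therefore hypersurface orthogonal, which is the hypothesis Lemma \ref{Focusing within proper time with noncommutative correction null} requires.

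Second, establish the conjugacy criterion in the deformed setting. I would write $\hat V^\mu(\lambda)=A^\mu_\nu(\lambda)\hat V^\nu(0)$. Here the initial data on $S$ are the nonzero values of $\hat V$, not its derivative. From $D\hat V/d\lambda=\tilde B\hat V$ one gets $dA/d\lambda=\tilde B A$ with $A(0)=\mathbb{1}$. Taking the two-dimensional trace gives
\begin{equation}
\tilde\theta=\frac{d}{d\lambda}\ln|\det A|.
\end{equation}
The matrix $A$ solves the modified deviation equation, which is linear with coefficients built from $R^\mu{}_{\nu\lambda\rho}$ and $\theta^{\rm NC}$. Provided $\theta^{\rm NC}$ and $d\theta^{\rm NC}/d\lambda$ stay bounded, $A$ and $dA/d\lambda$ stay bounded. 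Then $\det A\to0$ holds exactly when $\tilde\theta\to-\infty$. A point $p$ is conjugate to $S$ exactly when $\det A(p)=0$, since some nonzero initial $\hat V$ on $S$ is then mapped to zero.

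Third, apply Lemma \ref{Focusing within proper time with noncommutative correction null} bullet by bullet.
\begin{itemize}
\item \textbf{First bullet.} Since $R_{\mu\nu}k^\mu k^\nu\ge0$, the commutative expansion $\theta=\tilde\theta-\theta^{\rm NC}$ obeys $d\theta/d\lambda\le -\theta^2/2$. It starts at $\theta_0=\tilde\theta_0-\theta^{\rm NC}_0<0$, so it diverges within $\lambda\le 2/|\tilde\theta_0-\theta^{\rm NC}_0|$. Finiteness of $\theta^{\rm NC}$ then forces $\tilde\theta\to-\infty$ as well.
\item \textbf{Second bullet.} The null NCEC turns the modified Raychaudhuri equation into $d\tilde\theta/d\lambda+\tilde\theta^2/2\le0$. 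Starting from $\tilde\theta_0<0$, this gives divergence within $\lambda\le2/|\tilde\theta_0|$.
\end{itemize}
In either case step two converts the divergence into a zero of $\det A$, which is the desired conjugate point $p$. The proof should add the caveat that $\mu$ must extend that far.

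The hardest step will be the equivalence in step two. The deformed transport adds the term $(\theta^{\rm NC}/2)\hat V$ in the two-dimensional normalization, and I would need $\tilde B$ to remain a bounded coefficient up to the focusing point. Without that, $\det A$ could vanish without $\tilde\theta$ diverging, or $\tilde\theta$ could diverge only because $\theta^{\rm NC}$ does.

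For the first bullet this is guaranteed by the finiteness hypothesis. For the NCEC bullet I would impose it explicitly: assume that $\theta^{\rm NC}$ and its derivative stay finite along $\mu$. With that assumption, $\ln|\det A|$ has bounded derivative except where $\tilde\theta$ diverges, and the equivalence goes through.
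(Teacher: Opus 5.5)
Your argument is the one the paper intends. The paper gives no separate proof (only ``Similarly''). It relies on Lemma \ref{Focusing within proper time with noncommutative correction null} together with its assertion that conjugacy is equivalent to $\tilde\theta\to-\infty$, which it obtains from $\tilde\theta=\frac{d}{d\lambda}\ln|\det A|$ as in Proposition \ref{null conjugate}.

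The finiteness assumption you add to the NCEC bullet is not a defect of your proof. The paper's equivalence uses it silently, and without it the step from focusing to conjugacy fails. For example, take Minkowski space with $S$ a flat spacelike $2$-plane. Then $\theta=\hat\sigma_{\mu\nu}=0$ and $\tilde\theta=\theta^{\rm NC}$. Writing $\theta^{\rm NC}=2F'/F$, the null NCEC reduces to $F''\le0$. The choice $F=1+\sqrt{\lambda_1-\lambda}$ gives $\tilde\theta_0<0$ and $\tilde\theta\to-\infty$ at $\lambda_1$. Yet the deformed deviation vectors $\hat V=(F/F(0))\hat V(0)$ never vanish, so there is no conjugate point at $\lambda_1$.

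One step does not follow as written. In the first bullet you say that boundedness of $A$ and $dA/d\lambda$ is guaranteed by the finiteness hypothesis. On your route, however, this needs $d\theta^{\rm NC}/d\lambda$ bounded, which is not assumed.

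A simpler observation removes that requirement. The non-commutative part of $\tilde B$ is $\tfrac12\theta^{\rm NC}$ times the identity on $\hat V_p$, so it commutes with $\hat B$. Hence $A=\exp\bigl(\tfrac12\int_0^\lambda\theta^{\rm NC}\,d\lambda'\bigr)A_0$, where $A_0$ is the ordinary (commutative) Jacobi matrix with $dA_0/d\lambda=\hat BA_0$ and $A_0(0)=I$. It follows that $\det A=e^{\int\theta^{\rm NC}}\det A_0$. So while $\theta^{\rm NC}$ is bounded, the non-commutative conjugate points are exactly the ordinary ones.

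Both bullets then reduce to showing $\theta=\tilde\theta-\theta^{\rm NC}\to-\infty$:
\begin{itemize}
\item In the first bullet this follows directly from the standard Raychaudhuri inequality; its bound $2/|\tilde\theta_0-\theta^{\rm NC}_0|$ is just the commutative one.
\item In the second bullet it follows from $\tilde\theta\to-\infty$ plus boundedness of $\theta^{\rm NC}$.
\end{itemize}
This route also avoids the paper's displayed non-commutative deviation equation. That equation omits the $\nabla_{[k,V]}k$ term and is not consistent with $dA/d\lambda=\tilde BA$.
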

These local focusing results can then be incorporated into the standard global arguments underlying the singularity theorems. Once every past-directed timelike geodesic orthogonal to the Cauchy surface develops a conjugate point within a uniformly bounded proper time, the usual maximality argument implies that no such geodesic can be extended indefinitely into the past. The resulting singularity theorems therefore differ from their classical counterparts only through the modified focusing condition and the corresponding bound on the proper time to conjugacy.
\begin{thm}\label{corrected singularity theorem 1}
Let $(\mathcal{M},g_{\mu\nu})$ be a globally hyperbolic spacetime. Suppose there exists a smooth (or, at least $C^2$) spacelike Cauchy surface $\Sigma$. Then, all past-directed timelike geodesics are incomplete if
\begin{itemize}
    \item The Strong Energy Condition, $R_{\mu\nu}U^\mu U^\nu\geq0$ holds and the extrinsic curvature of the Cauchy surface $\Sigma$ satisfies $K\leq C<0$ everywhere, and
    $\theta^{\rm NC}$ remains finite along the geodesic up to the focusing
    point. In this case, no past timelike directed curve from $\Sigma$ can have length greater than $3/|C|$
    \item The timelike NCEC,
    \begin{equation}
        R_{\mu\nu}U^\mu U^\nu
        \geq
        \frac{2\theta^{\rm NC}\tilde{\theta}}{3}
        +\frac{d\theta^{\rm NC}}{d\tau}
        -\frac{(\theta^{\rm NC})^2}{3}
    \end{equation}
    holds, and the extrinsic curvature of the Cauchy surface $\Sigma$ satisfies $K\leq C<-\theta^{\rm NC}$ everywhere. In this case, no past timelike directed curve from $\Sigma$ can have length greater than $3/|C+\theta^{\rm NC}|$.
\end{itemize}
\end{thm}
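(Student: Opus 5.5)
The plan is to copy the proof of the classical Cauchy-surface theorem from Section \ref{Section 2} essentially line by line, and to change only the focusing input. We argue by contradiction. Suppose some past-directed timelike curve $\lambda$ from $\Sigma$ has length greater than the stated bound $L$. Here $L=3/|C|$ in the first case and $L=3/|C+\theta^{\rm NC}|$ in the second. Choose a point $p$ on $\lambda$ lying beyond proper length $L$ from $\Sigma$. Since $(\mathcal{M},g_{\mu\nu})$ is globally hyperbolic and $\Sigma$ is a Cauchy surface, Theorem \ref{golbal hyperbolic surface maximum} gives a curve $\gamma\in C(\Sigma,p)$ of maximal length. Its length is then at least that of the portion of $\lambda$ from $\Sigma$ to $p$, and so exceeds $L$.

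By Theorem \ref{maximum geodesic orthogonal no conjugate} and Theorem \ref{maximise proper time conjugate surface}, $\gamma$ must be a geodesic orthogonal to $\Sigma$ with no point conjugate to $\Sigma$ between $\Sigma$ and $p$. Here we use the noncommutative version of the second-variation analysis given just above. That version shows that the extra term $-\tfrac13 T^\rho\nabla_\rho(\theta^{\rm NC}X_\nu)$ is absorbed into the modified deviation operator $\mathcal{O}$. Conjugacy is now detected by $\tilde\theta\to-\infty$, through $\tilde\theta=\tfrac{d}{d\tau}\ln|\det A|$, so the maximality criterion keeps its form. The past-directed normal congruence to $\Sigma$ is hypersurface orthogonal, so $\omega_{\mu\nu}=0$. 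We may therefore apply the preceding conjugate-point lemma for $\Sigma$ at the foot $q=\gamma\cap\Sigma$, in each case separately.

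\emph{Case 1.} Here $R_{\mu\nu}U^\mu U^\nu\geq0$ and $K\leq C<0$, so $\theta=K\leq C$ at $q$. The unmodified inequality $d\theta/d\tau+\theta^2/3\leq0$ forces $\theta\to-\infty$ within $\tau\leq 3/|K|\leq 3/|C|$. Finiteness of $\theta^{\rm NC}$ then gives $\tilde\theta\to-\infty$, which yields a conjugate point before proper time $3/|C|$.

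\emph{Case 2.} Under the timelike NCEC, $\tilde\theta$ obeys $d\tilde\theta/d\tau+\tilde\theta^2/3\leq0$. The initial value is $\tilde\theta_0=K+\theta^{\rm NC}\leq C+\theta^{\rm NC}<0$, so a conjugate point appears within $3/|C+\theta^{\rm NC}|$.

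In either case $\gamma$ contains a conjugate point before reaching $p$. This contradicts maximality, so $\lambda$ cannot exist. Incompleteness of all past-directed timelike geodesics follows, since each geodesic from $\Sigma$ would otherwise have to exceed length $L$.

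The main obstacle is the uniformity of the bound in Case 2. The quantity $\theta^{\rm NC}$ is a field, and $C+\theta^{\rm NC}$ must be read at the foot point on $\Sigma$. To obtain a single bound valid for every $p$, I would interpret the hypothesis $K\leq C<-\theta^{\rm NC}$ as holding with $\sup_\Sigma(C+\theta^{\rm NC})<0$, i.e.\ uniformly on $\Sigma$, and state the bound with that supremum. A second, subtler point is justifying that the maximality and conjugacy machinery of Theorems \ref{golbal hyperbolic surface maximum}--\ref{maximise proper time conjugate surface} transfers unchanged. For this I would rely on the first-variation computation above, where the $\theta^{\rm NC}$ term drops by orthogonality, and on the Jacobi-matrix identity for $\tilde B$.
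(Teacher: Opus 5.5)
Your skeleton is exactly what the paper intends: a maximal curve from $\Sigma$ to $p$, which must be an orthogonal geodesic with no conjugate points, contradicted by focusing. The paper's own justification is only the remark that ``the usual maximality argument'' survives with the modified focusing bound. Case 1 is fine, and it is really just the classical theorem of Section~2: the unmodified Raychaudhuri equation gives $\theta\to-\infty$, which is already an ordinary conjugate point, so neither $\tilde\theta$ nor the finiteness of $\theta^{\rm NC}$ plays any role.

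\textbf{The gap is in Case 2.} It sits at your step ``conjugacy is now detected by $\tilde\theta\to-\infty$, so the maximality criterion keeps its form''; the paper's argument has the same hole. The maximal-curve theorems are about the length functional of $g_{\mu\nu}$, which never sees $\theta^{\rm NC}$. Moreover, for any smooth one-parameter family of curves, $T^\nu\nabla_\nu X^\mu=X^\nu\nabla_\nu T^\mu$ holds identically, because $T=\partial_t$ and $X=\partial_\alpha$ are coordinate fields of the variation. So the modified second-variation formula cannot be imposed on a variation, and maximality excludes only ordinary conjugate points, i.e.\ $\theta\to-\infty$. Now $\tilde B=B+\frac{\theta^{\rm NC}}{3}P$, and $P$ is the identity on the spatial subspace. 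Hence the Jacobi matrices obey $\det A_{\rm NC}=\exp\bigl(\int_0^\tau\theta^{\rm NC}d\tau'\bigr)\det A$, where $dA/d\tau=BA$. So $\tilde\theta\to-\infty$ with $\theta$ finite happens exactly when $\theta^{\rm NC}$ diverges to $-\infty$. Case 2, unlike Case 1, does not rule this out.

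\textbf{Case 2 as stated is false, so the step cannot be patched without a new hypothesis.} Take global de Sitter space with $H=1$, which is globally hyperbolic and geodesically complete. Let $\Sigma=\{t=t_0\}$ with $t_0>0$; its past-directed normals have $K=-3\tanh t_0<0$ and are shear- and twist-free. Let $\tilde\theta$ solve $d\tilde\theta/d\tau=-\tilde\theta^2/3$ with $\tilde\theta(0)=K$, and define $\theta^{\rm NC}:=\tilde\theta-\theta$. This blows up at $\tau=\coth t_0$, which Case 2 permits. Then:
\begin{itemize}
\item $\theta^{\rm NC}=0$ on $\Sigma$, so $K\le C:=K<-\theta^{\rm NC}$;
\item the NCEC holds with equality, since by the modified Raychaudhuri equation it is equivalent to $d\tilde\theta/d\tau+\tilde\theta^2/3+\sigma_{\mu\nu}\sigma^{\mu\nu}\le0$;
\item yet every normal geodesic has infinite past length, far beyond $3/|C+\theta^{\rm NC}|=\coth t_0$.
\end{itemize}

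\textbf{What your argument does prove.} Add the assumption that $\theta^{\rm NC}$ stays bounded along each normal geodesic up to $\tau=3/|\tilde\theta_0|$, together with your uniformity condition $\sup_\Sigma(C+\theta^{\rm NC})<0$. Then $\theta=\tilde\theta-\theta^{\rm NC}\to-\infty$ as well, giving an ordinary conjugate point. The unmodified theorems of Section~2 then finish the proof, with no noncommutative second variation needed.
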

\begin{thm}\label{corrected singularity theorem 2}
Let $(\mathcal{M},g_{\mu\nu})$ be a connected, globally hyperbolic spacetime with a noncompact Cauchy surface $\Sigma$. Suppose further that $\mathcal{M}$ contains a trapped surface $T$. Then, at least one inextensible future-directed orthogonal null geodesic from $T$ is incomplete if,
\begin{itemize}
    \item The Strong Energy Condition, $R_{\mu\nu}k^\mu k^\nu\geq0$ holds, the maximum value of $\tilde \theta-\theta^{\rm NC}$ satisfies $\tilde \theta_0-\theta^{\rm NC}<0$, and
    $\theta^{\rm NC}$ remains finite along the geodesic up to the focusing
    point. In this case, at least one inextensible future-directed orthogonal null geodesic from $T$ has affine length no greater than $2/|\tilde \theta_0-\theta_0^{\rm NC}|$
    \item The null-like NCEC,
    \begin{equation}
        R_{\mu\nu}k^\mu k^\nu
        \geq
        \theta^{\rm NC}\tilde{\theta}
        +\frac{d\theta^{\rm NC}}{d\lambda}
        -\frac{(\theta^{\rm NC})^2}{2}
    \end{equation}
    holds, the maximum value of $\tilde \theta$ satisfies $\tilde \theta_0<0$. In this case, at least one inextensible future-directed orthogonal null geodesic from $T$ has affine length no greater than $2/|\tilde \theta_0|$.
\end{itemize}
\end{thm}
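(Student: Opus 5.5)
The plan is to run Penrose's classical argument (the theorem preceding Section \ref{Section 3}) unchanged. The only substitution is the focusing input: Proposition \ref{null conjugate} is replaced by the non-commutative null conjugacy lemma just established. The global causal ingredients are Theorem \ref{null no conjugate surface}, global hyperbolicity, and noncompactness of $\Sigma$. None of these refer to the commutator $[U,V]$, so they transfer verbatim. We argue by contradiction: suppose every inextendible future-directed null geodesic orthogonal to $T$ has affine length exceeding the stated bound $\lambda_*$. Here $\lambda_*=2/|\tilde\theta_0-\theta^{\rm NC}_0|$ in the first case and $\lambda_*=2/|\tilde\theta_0|$ in the second.

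First, the local step. We apply the non-commutative conjugacy lemma for a two-surface $S=T$ to both the ingoing and outgoing families. In the first case $\theta=\tilde\theta-\theta^{\rm NC}$ obeys the ordinary inequality $d\theta/d\lambda+\theta^2/2\le 0$ under $R_{\mu\nu}k^\mu k^\nu\ge0$. With $\theta^{\rm NC}$ finite up to the focusing point, $\tilde\theta\to-\infty$ and hence $\det A\to 0$ within $\lambda_*$. In the second case the null NCEC gives $d\tilde\theta/d\lambda+\tilde\theta^2/2\le0$ directly. In both cases we use $\tilde\theta=\frac{d}{d\lambda}\ln|\det A|$ to turn the divergence of $\tilde\theta$ into a conjugate point. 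Because $T$ is compact and the maximum is taken over both families, the bound $\lambda_*$ is uniform. It follows that every orthogonal null geodesic from $T$ has a point conjugate to $T$ at affine parameter at most $\lambda_*$.

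Second, the global step. By Theorem \ref{null no conjugate surface}, every $p\in\dot I^+(T)$ lies on an orthogonal null generator from $T$ with no conjugate point between $T$ and $p$. Therefore $\dot I^+(T)$ is contained in the image $F(T\times[0,\lambda_*]\times\{1,2\})$, where $F$ is the map sending $(q,\lambda,i)$ to the point at parameter $\lambda$ along the $i$-th null geodesic from $q$. This image is compact, being the continuous image of a compact set; here we use the assumption that the geodesics extend that far. As a closed subset of a compact set, $\dot I^+(T)$ is compact. It is also an achronal topological 3-manifold without boundary. Taking a timelike vector field and projecting along its integral curves onto $\Sigma$ gives a continuous injective map from $\dot I^+(T)$ into $\Sigma$. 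By invariance of domain its image is open, and by compactness it is closed. Since $\Sigma$ is connected, the image is all of $\Sigma$, so $\Sigma$ is compact. This contradicts the noncompactness of $\Sigma$.

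The main obstacle is the local step, not the global one. The conjugacy criterion $\tilde\theta\to-\infty\Leftrightarrow\det A\to0$ needs $\theta^{\rm NC}$ to be finite and smooth enough along each generator, so that $\det A$ remains a solution of a regular linear ODE. In the NCEC case it also requires the inequality to hold uniformly over the compact set $T$, so that a single $\lambda_*$ works. I would first make these regularity and uniformity assumptions explicit, including taking $\tilde\theta_0$ as the maximum over both families on $T$. With those in place, the rest is the standard proof.
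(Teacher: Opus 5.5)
Your proposal follows the paper's route. The paper's proof is only the remark that the modified null focusing/conjugacy lemma replaces Proposition~\ref{null conjugate} inside the unchanged Penrose argument (Theorem~\ref{null no conjugate surface}, compactness of $\dot I^+(T)$, projection onto the noncompact $\Sigma$), and you have simply written that out in detail. The extra hypothesis you flag for the NCEC branch is genuinely needed and is left implicit in both the paper's statement and its argument: without finiteness of $\theta^{\rm NC}$ along the generators, the divergence $\tilde\theta\to-\infty$ can be carried entirely by $\theta^{\rm NC}\to-\infty$ with no conjugate point ($\det A\not\to 0$).
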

The results of this section show that the relaxation of the commutativity condition between the tangent and separation vectors leads to a consistent deformation of the kinematics of geodesic congruences while preserving the underlying smooth spacetime geometry. The relevant expansion is shifted from $\theta$ to $\tilde \theta=\theta+\theta^{\rm NC}$ and the corresponding focusing and conjugacy properties are governed by this modified expansion. In particular, focusing can still follow from the Strong Energy Condition together with suitable finiteness of $\theta^{\rm NC}$ while the timelike and null-like NCEC introduced above provides a more general criterion that can remain sufficient even when the standard strong energy condition is violated. The resulting timelike and null focusing statements consequently lead to corresponding modifications of the usual geodesic incompleteness theorems. Having established these geometric consequences of non-commutativity, we now turn to their implications for homogeneous and isotropic cosmology, where the modified expansion is directly related to the Hubble parameter.
\section{Cosmology and Non-Commutativity} \label{Section 4}
To explore the cosmological implications of the modified congruence, let us now specialize to a homogeneous and isotropic FLRW spacetime. In this case, the expansion scalar is directly related to the Hubble
parameter through
\begin{equation}
    \tilde{\theta}=3H.
\end{equation}
Substituting this relation into the modified Raychaudhuri equation yields the corresponding evolution equation for the Hubble parameter,
\begin{equation}
    3\dot{H}=-3H^2+2H\theta^{\rm{NC}}-\frac{(\theta^{\rm NC})^2}{3}-R_{\mu\nu}U^{\mu}U^{\nu}+\dot{\theta}^{\rm{NC}}
\end{equation}
This equation shows that the non-commutative contribution modifies the evolution of the Hubble parameter through three distinct terms: one proportional to $H\theta_{\rm{NC}}$, a quadratic correction in $\theta_{\rm{NC}}$, and a dynamical contribution arising from $\dot{\theta}_{\rm{NC}}$. In this picture, the expansion of the Universe is no longer governed solely by the matter content and spacetime curvature, but also by the non-commutative deformation of the geodesic congruence.
\\
\\
The modified Hubble evolution equation may be rewritten in the more familiar form,
\begin{equation}
    \frac{\ddot{a}}{a}=-\frac{4\pi G}{3}(\rho+3P)+\frac{2}{3}H\theta^{\rm{NC}}-\frac{(\theta^{\rm{NC}})^2}{9}+\frac{\dot{\theta}^{\rm{NC}}}{3}.
\end{equation}
The additional terms may be interpreted as an effective geometric source of acceleration arising purely from the non-commutative structure of the congruence. Unlike conventional cosmological models, these corrections do not originate from introducing an additional exotic field or modifying Einstein's field equations. Instead, they emerge from relaxing the usual assumption that the tangent vector of a geodesic commutes with the separation vector connecting it to a neighboring geodesic.
\\
\\
The role of non-commutativity can be thought of effectively modifying the pressure as 
\begin{equation}\label{effective presuure}
    P_{\rm eff}=P-\frac{H\theta^{\rm NC}}{6\pi G}+\frac{(\theta^{\rm NC})^2}{36\pi G}-\frac{\dot{\theta}^{\rm NC}}{12\pi G},
\end{equation}
while the energy density (the $00$ component of the stress energy tensor) remains the same if the Friedmann equation is to remain the same.
\\
\\
The modified acceleration equation therefore admits a simple criterion for accelerated expansion:
\begin{equation}
    +\frac{2}{3}H\theta^{\rm{NC}}- \frac{(\theta^{\rm{NC}})^2}{9}+\frac{\dot{\theta}^{\rm{NC}}}{3}>\frac{4\pi G}{3}(\rho+3P)
\end{equation}
Whenever this condition is satisfied, the repulsive contribution arising from the non-commutative deformation exceeds the attractive gravitational effect of ordinary matter and pressure, leading to
\begin{equation}
    \ddot{a}>0
\end{equation}
This raises the intriguing possibility that non-commutativity in the kinematics of geodesic congruences may provide a geometric mechanism capable of driving the observed accelerated expansion of the Universe. In this interpretation, the accelerated expansion is not attributed to an exotic dark-energy component, but rather to the modified evolution of spacetime congruences induced by non-commutativity.
\\
\\
Combining effective pressure in eqn.\eqref{effective presuure}, with the conservation of the stress energy tensor,
\begin{equation}
    \nabla_{\mu}T^{\mu\nu}=0,
\end{equation}
one finds that the modified continuity equation is given by,v
\begin{equation}\label{modified coninutiy}
\dot{\rho}=-3H(\rho+P)+\frac{4\rho}{3}\theta^{\rm{NC}}-\frac{H(\theta^{\rm{NC}})^2}{12\pi G}+\frac{H\dot{\theta}^{\rm{NC}}}{4\pi G}.
\end{equation}
Eqn.\eqref{modified coninutiy} demonstrates that the standard continuity equation is fundamentally modified by the non-commutative structure of spacetime. In conventional FLRW cosmology, the conservation of the cosmic energy-momentum tensor follows directly from the Bianchi identities and governs the dilution of the cosmological fluids solely through the Hubble expansion. In the present framework, however, the relaxation of the commutativity condition between the tangent and separation vectors introduces additional geometric contributions to the conservation law, which can be thought of as arising from the effective pressure defined in eqn.\ref{effective presuure}. Consequently, the evolution of the cosmic energy density is determined not only by the expansion of the Universe but also by the non-commutative expansion scalar $\theta_{\rm{NC}}$.
\\
\\
A notable feature of eqn.\eqref{modified coninutiy} is that the non-commutative corrections naturally separate into two distinct classes. The first is proportional to the energy density itself through the term $4\rho\theta^{\rm NC}/3$, while the remaining contributions is non-linearly proportional to the total energy density (since $H\propto \sqrt{\rho}$) and therefore cannot be decomposed into separate terms proportional to the individual energy density components. Since the former scales with the corresponding fluid density, it is natural to interpret it as a universal geometric modification to the dilution of every cosmological component. The terms that aren't linearly proportional to the density are however fundamentally different and cannot be attributed to any conventional cosmological fluid.
\\
\\
Motivated by this distinction, we assume that the universal correction proportional to $4\rho \theta^{\rm NC}/3$ modifies the evolution of all cosmological fluids, whereas the terms that are non-linear in energy density are absorbed into an effective dark-energy sector. This prescription preserves the standard physical interpretation of matter and radiation while identifying the genuinely new contribution generated by non-commutativity with an effective geometric dark-energy component.
\\
\\
The continuity equation for individual cosmological fluid therefore becomes
\begin{equation}\label{individual continuity}
\dot{\rho}_i+3H(1+\omega_i)\rho_i=\frac{4\theta^{\rm NC}\rho_i}{3},
\end{equation}
where $i=r,m,k$ (radiation, matter and curvature). The solution of eqn.\eqref{individual continuity} is given by 
\begin{equation}\label{individual evolution}
\rho_i(a)=\rho_{i,0}a^{-3(1+\omega_i)}\text{exp}\left[\int_{t_0}^t\frac{4}{3}\theta^{\rm NC}(t')dt'\right]
\end{equation}
Eqn.\eqref{individual evolution} demonstrates that the standard cosmological scaling laws are modified by the non-commutative geometry. In particular, 
\begin{equation}
    \rho_{\rm r}=\rho_{\rm r,0}a^{-4}\text{exp}\left[\int \frac{4}{3}\theta^{\rm NC}dt\right], \quad \rho_{\rm m}=\rho_{\rm m,0}a^{-3}\text{exp}\left[\int \frac{4}{3}\theta^{\rm NC}dt'\right],
\end{equation}
while the curvature contribution evolves according to
\begin{equation}\label{modified curvature evolution}
    \rho_{\rm k}=\rho_{\rm k,0}a^{-2}\text{exp}\left[\int\frac{4}{3}\theta^{\rm NC}dt'\right].
\end{equation}
It is convenint to define 
\begin{equation}
    F=\text{exp}\left(-\int\frac{4}{3}\theta^{\rm NC}dt'\right).
\end{equation}
Now, during radiation domination,
\begin{equation}
    H^2\simeq \frac{8\pi G}{3}\rho_{\rm r},
\end{equation}
so that greater values of $\theta^{\rm NC}$ at early times enhance the radiation energy density, hereby increasing the early-time Hubble parameter. Since the CMB tightly constrains the angular acoustic scale \cite{2020},
\begin{equation}
      \theta_s=\frac{r_s(a_*)}{\chi(a_*)}=\frac{\int_{0}^{a_*}\frac{c_s}{a^2H(a)} da}{\int_{a_*}^{a_0} \frac{da}{a^2H(a)}},
\end{equation}
an enhanced early expansion rate decreases the sound horizon $r_s$. To preserve the observed value of $\theta_s$, this reduction must be accompanied by a corresponding decrease in the comoving distance to last scattering $\chi_s(a_*)$ which generally requires a larger inferred value of the present-day Hubble constant $H_0$. Consequently, values of $F<1$ before recombination provide a mechanism capable of shifting the CMB-inferred value of $H_0$ toward that measured from late-time observations, thereby improving the Hubble tension.
\\
\\
On the other hand, the dark-energy sector, exhibits a rich dynamics. Assuming $\rho=-P$ (although, $\rho\not= P_{\rm eff}$), its continuity equation is given by
\begin{equation}\label{dark energy evolution}
    \dot{\rho}_{\rm DE}-\frac{4\rho_{\rm DE}\theta^{\rm NC}}{3}+\frac{H(\theta^{\rm NC})^2}{12\pi G}-\frac{H\dot{\theta}^{\rm NC}}{4\pi G}=0.
\end{equation}
which differs fundamentally from the evolution equations for matter, radiation and curvature. Whereas these components receive only the universal multiplicative correction proportional to $\theta^{\rm NC}$, the effective dark-energy sector contains additional geometric terms generated directly by the non-commutative modification of the continuity equation. Consequently, the dark-energy evolution is not merely a rescaled version of the standard conservation law but instead acquires genuinely new dynamics determined by the underlying non-commutative geometry.
\\
\\
Eqn.\eqref{dark energy evolution} is a first-order linear differential equation and therefore admits the formal solution
\begin{equation}\label{de}
\begin{split}
    \rho_{\rm DE}=\text{exp}\left(\int_{t_0}^t\frac{4\theta^{\rm NC}}{3}dt'\right)\left[\rho_{\rm DE}(t_0)-\frac{1}{12\pi G}\int_{t_0}^t\text{exp}\left(-\int_{t_0}^{t}\frac{4\theta^{\rm NC}}{3}dt'\right)((\theta^{\rm NC})^2-3\dot{\theta}^{\rm NC})Hdt'\right]\\
    =\frac{1}{F}\left[\rho_{\rm DE}(t_0)+\frac{3}{16\pi G}\int_{t_0}^t H\left(\frac{3\dot{F}^2}{4F}-\ddot{F}\right)dt'\right]
\end{split}
\end{equation}
Unlike the matter, radiation and curvature sectors, whose evolution is determined entirely by the exponential correction arising from the modified dilution law, the dark-energy density depends explicitly upon both the non-commutative expansion scalar and its time derivative through the integral term. Consequently, the effective dark-energy sector possesses a considerably richer phenomenology than the remaining cosmological components. The first exponential factor represents the universal modification of cosmological dilution produced by non-commutativity, whereas the integral term continuously sources or depletes the dark-energy density according to the evolution of the non-commutative geometry itself.
\\
\\
Comparing with $\rho_{\rm DE}=\rho(t_0)_{\rm DE}\exp\left({\int-3(1+\omega_{\rm DE})Hdt}\right)$, one finds that  
\begin{equation}\label{eos}
    \omega_{\rm DE}=-1+\frac{\dot{F}}{3HF}+\frac{4\ddot{F}-\frac{3\dot{F}^2}{F}}{64\pi \rho_{\rm DE}(t_0)+12\int_{t_0}^tH\left(\frac{3\dot{F}^2}{4F}-\ddot{F}\right)dt'}.
\end{equation}
This result demonstrates that dynamical dark energy \cite{Ratra:1987rm,Wetterich_1988,Caldwell_2002,COPELAND_2006} emerges naturally as a purely geometric consequence of the non-commutative structure of spacetime. Unlike conventional implementations of dynamical dark energy, in which the evolution of the equation of state is introduced through some additional field or just some phenomenological parametrization, here the time dependence of $\omega_{\rm DE}$ arises directly from the non-commutative deformation of geodesic congruences. This prediction is also in line with recent observational indications from DESI and complementary cosmological probes, which have found a mild preference for an evolving dark energy equation of state over a cosmological constant \cite{Adame_2025,Gialamas_2025,Giar_2025}. These observations therefore provide an interesting phenomenological context for the present geometric mechanism, in which dynamical dark energy arises completely through the underlying non-commutative nature of spacetime.
\\
\\
In this model, the final fate of the cosmos is entirely dependent on the parametrization of $\theta^{\rm NC}$.
\\
\\
In the epoch where the effective dark-energy density dominates the cosmic energy budget, the Friedmann equation reduces approximately to
\begin{equation}\label{firedmann de}
    H^2\simeq \frac{8\pi G}{3}\rho_{\rm DE}.
\end{equation}
Substituting eqn.\eqref{firedmann de} and solving for the scale factor, one obtains,
\begin{equation}\label{scale factor}
a=a_0\text{exp}\left(H_0\int_{t_0}^t\frac{dt'}{\sqrt{F}}\left[H_0+\frac{1}{4}\int_{t_0}^{t'}\frac{1}{\sqrt{F}}\left(\frac{3\dot{F}^2}{F}-\ddot{F}\right)dt''\right]\right)
\end{equation}
Eqn\eqref{scale factor} together with eqn.\eqref{eos} illustrates that the late-time expansion history is governed entirely by the functional form of the non-commutative expansion scalar. Different parameterizations of $F$ (and hence, $\theta_{\rm NC}$) therefore generate qualitatively different effective dark-energy sectors without requiring additional exotic fields or phenomenological equations of state. However, eqn.\eqref{eos} is manifestly nontrivial, depending on the non-commutative deformation through both its first and second time derivatives, as well as a quadratic dependence on the first derivative. Consequently, the conditions under which the effective dark-energy sector is canonical, phantom-like, or cosmological-constant-like are themselves nontrivial. 
\\
\\
Note during DE domination, that the condition $\omega=-1$ does not require $F(t)=1$, but rather any solution to the following non linear differential equation
\begin{equation}
4\ddot{F}-3\frac{\dot{F}^2}{F}+8H_0\dot{F}=0.
\end{equation}
A corresponding class of solutions is
\begin{equation}
    F(t)=\left[1+\frac{\dot{F}_0}{8H_0} \left(1-e^{-2H(t-t_0)}\right)\right]^4,
\end{equation}
any $F(t)$ belonging to this class yields a dark energy sector that mimicks a cosmological constant evolution despite $F(t)\not= 1$ in general.
\\
\\
More generally, the dark-energy sector exhibits phantom behavior, $\omega<-1$, when 
\begin{equation}
    4F\ddot{F}-3\dot{F}^2+8HF\dot{F}<0,
\end{equation}
whereas it exhibits canonical (non-phantom) behavior, $\omega>-1$ when
\begin{equation}
    4F\ddot{F}-3\dot{F}^2+8HF\dot{F}>0.
\end{equation}
Consequently, a phantom crossing occurs when $  4F\ddot{F}-3\dot{F}^2+8HF\dot{F}=0$.
\\
\\
These results demonstrate that the non-commutative deformation can lead to a rich variety of effective dark-energy behaviors.
\\
\\
Overall, non-commutativity plays a nontrivial role in the resulting cosmological dynamics and in the effective dark-energy sector, with implications for the possible late-time fate of the Universe. Depending on the functional form of the non-commutative deformation and the solutions of non-trivial differential equations such as \eqref{scale factor},\eqref{de}, and the resulting equation of state parameter can exhibit canonical, phantom-like, or cosmological-constant behavior.
\\
\\
Furthermore, one may show that other forms of non-commutativity, that generically modify the geodesic equations, such as the framework explored in \cite{Matt_Gregory_2026}, do not yield a dark energy sector with such rich modifications.  Let us briefly review their approach and analyze the resulting dark energy sector. They considered a non-commutative generalization of the Klein-Gordon equation given by
\begin{equation}
    \left(\hat\square+\frac{m^2c^2}{\hbar^2}\right)\Phi=\frac{1}{\sqrt{g}}\star [\partial_\nu [(\sqrt{g}g^{\mu\nu})\star \partial_\mu \Phi]]+\frac{m^2c^2}{\hbar^2}\Phi,
\end{equation}
where the $\star$-product between any two functions $f$ and $g$ is definied as 
\begin{equation}
  f\star g=fg+\frac{i}{2}\Theta^{\mu\nu}\partial_\mu f \partial_\nu g +\mathcal{O}(\Theta^2). 
\end{equation}
with$\Theta^{\mu\nu}$ being a  constant asymmetric tensor.
\\
\\
Such a non-commutativity modifies the energy-momentum dispersion relation to
\begin{equation}
    g^{\mu\nu}p_\mu p_\nu+M^2(x)=0,
\end{equation}
where the effective position-dependent mass function, $M(x)$, is given by
\begin{equation}
    M^2(x)=m^2c^2 \frac{\sqrt{g}_*}{\sqrt{g}}.
\end{equation}
This position-dependent mass function leads to the modified geodesic equation
\begin{equation}
\ddot{x}^{\mu}+\Gamma^\mu_{\alpha\beta}\dot{x}^\alpha\dot{x}^\beta=-\frac{1}{2}g^{\mu\nu}\frac{\partial M^2}{\partial x^\nu}.
\end{equation}
Expanding the effective mass squared gives
\begin{equation}
    M^2=m^2c^2\left(1+\frac{1}{2}\Theta^{\mu_1\nu_1}\Theta^{\mu_2\nu_2}(\partial_{\mu_1}\partial_{\mu_2}\sqrt{g})\left(\partial_{\nu_1}\partial_{\nu_2}\frac{1}{\sqrt{g}}\right)+\mathcal{O}(\Theta^4)\right).
\end{equation}
so that the geodesic equation becomes
\begin{equation}
    \ddot{x}^\mu+\Gamma^\mu_{\alpha\beta}\dot{x}^\alpha \dot{x}^\beta=-\frac{m^2c^2}{4}g^{\mu\nu}\partial_{\nu}\left[\Theta^{\mu_1\nu_1}\Theta^{\nu_1\nu_2}(\partial_{\mu_1}\partial_{\mu_2}\sqrt{g})\left(\partial_{\nu_1}\partial_{\nu_2}\frac{1}{\sqrt{g}}\right)\right]+\mathcal{O}(\Theta^4).
\end{equation}
In a homogeneous and isotropic FLRW background, the key non-commutative contraction simplifies to
\begin{equation}
    \Theta^{\mu_1\nu_1}\Theta^{\nu_1\nu_2}(\partial_{\mu_1}\partial_{\mu_2}\sqrt{g})\left(\partial_{\nu_1}\partial_{\nu_2}\frac{1}{\sqrt{g}}\right)= \Theta^{00}\Theta^{00}\frac{d^2a^3}{dt^2}\frac{d^2}{dt^2}\left(\frac{1}{a^3}\right).
\end{equation}
However, since $\Theta^{\mu\nu}$ is a constant asymmetric tensor, its temporal component identically vanishes, $\Theta^{00}=0$. Consequently, this form of non-commutativity induces no modifications to the geodesic equations, the Raychaudhuri equation, the continuity equations, or the dark energy equation of state parameter, leaving the entire dark energy sector completely untouched, at least up to order $\mathcal{O}(\Theta^2)$.
\\
\\
This crucial distinction is precisely what makes our kinematic approach to non-commutativity so compelling. By directly generalizing the commutation properties of separation and tangent vectors in geodesic congruences rather than modifying coordinate algebras or particle masses, our framework generates non-trivial geometric terms in the Raychaudhuri and continuity equations. This directly drives the dynamics of dark energy and produces a rich, evolving dark energy sector without requiring exotic matter fields or ad hoc modifications to Einstein's equations.
\section{Conclusions}\label{Section 5}
In this work we have proposed a novel geometric realization of spacetime non-commutativity that preserves the smooth differentiable manifold of General Relativity. Rather than introducing non-commutativity through the spacetime coordinates or replacing the underlying manifold by a non-commutative algebra, we considered the possibility that non-commutativity is encoded in the kinematics of neighboring geodesics. The construction is formulated within the standard differential-geometric framework of geodesic congruences and their associated kinematical quantities, while relaxing the usual commutativity assumption for the tangent and separation vectors \cite{Wald1984,oneill1983semiriemannian,beem2017global}.Motivated by the Frobenius theorem \cite{frobenius1877ueber,warner1983foundations}, the conventional assumption that the tangent and separation vectors commute was relaxed, leading naturally to a generalized deformation tensor and a modified Raychaudhuri equation \cite{1955PhRv...98.1123R,Raychaudhuri1957SingularSI}.
\\
\\
A central consequence of this deformed kinematics, is its effect on the classical singularity theorems \cite{PhysRevLett.14.57,PhysRevLett.17.444,Hawking:1970zqf,Hawking:1967ju,HawkingEllis1973,penrose1972techniques,beem2017global,Witten_2020,Senovilla_2022,Senovilla1998}. We found that the physically relevant expansion is shifted from $\theta$ to $\tilde \theta=\theta+\theta^{\rm NC}$, and that the usual strong and weak energy conditions, while still sufficient for focusing when $\theta^{\rm NC}$ remains finite along the geodesic, are no longer necessary. We introduced the timelike and null-like NCEC as a generalized focusing criterion and argued that it can hold even when the strong energy condition is violated. This yields modified versions of the Raychaudhuri-based focusing lemmas (Lemmas \ref{Focusing within proper time with noncommutative correction timelke} and \ref{Focusing within proper time with noncommutative correction null}), the conjugate-point propositions, and ultimately non-commutative generalizations of the two principal singularity theorems: incompleteness of past-directed timelike geodesics from a Cauchy surface of sufficiently negative extrinsic curvature (Theorem \ref{corrected singularity theorem 1}), and incompleteness of null geodesics orthogonal to a trapped surface (Theorem \ref{corrected singularity theorem 2}). In both cases the structure of the classical theorem is preserved, but the bound on the proper time or affine length to conjugacy, and the energy condition required to guarantee it, are modified by $\theta^{\rm NC}$ and its time derivative. 
\\
\\
Cosmologically, the modified Raychaudhuri equation induces a generalized continuity equation whose non-commutative corrections separate cleanly into two classes: a term proportional to the fluid density itself, which acts as a universal correction to the dilution of matter, radiation, and curvature; another term that is non-linearly proportional to the energy density, which we absorb into the dark-energy sector. Dynamical dark energy \cite{Ratra:1987rm,Wetterich_1988,Caldwell_2002,COPELAND_2006}, whose observational relevance is suggested by DESI \cite{Adame_2025,Gialamas_2025,Giar_2025}, therefore emerges directly from the non-commutative structure of the congruence, without an additional scalar field or modification to Einstein's equations. The entire late-time expansion history, including the equation of state $\omega_{\rm DE}$ is determined once a functional form for $\theta^{\rm NC}$ is specified, so different parameterizations generate qualitatively distinct dark-energy phenomenologies from a single geometric input. We also noted two observational anchor points: a positive $\theta^{\rm nc}$ at early times suppresses the sound horizon at recombination, offering a geometric route to reconciling tensions between early and late-time measurements of $H_0$. Importantly, this feature sharply distinguishes our kinematic approach from alternative field-theoretic realizations of non-commutativity, such as those modifying the Klein Gordon action and energy-momentum relations. In those formulations, position dependent mass terms yield non-commutative corrections governed by an asymmetric constant tensor $\Theta^{\mu\nu}$. In a homogeneous and isotropic FLRW universe, these corrections depend strictly on $\Theta^{00}=0$, causing all non-commutative modifications to the geodesic equations, Raychaudhuri equation, and continuity equations to vanish, thereby leaving the dark energy sector untouched at leading order.
\\
\\
In the framework introduced here, $\theta^{\rm NC}$ is introduced as a free function characterizing the congruence, and we have not specified an equation of motion or microphysical origin for it, whether from an underlying quantum-gravitational structure, or some other source. Fixing this is the most immediate open problem, since it would lead to more concrete and specific predictions on the role of noncommutativity in the fundamental structure and dynamics of spacetime itself.
\section*{Acknowledgments}
We thank the participants of the Quantum Topology Seminar for the stimulating discussion and valuable feedback on this work. We are particularly grateful for the opportunity to present and discuss these results in the seminar.
\bibliographystyle{unsrt}
\bibliography{references}

@article{Carlip_2001,
   title={Quantum gravity: a progress report},
   volume={64},
   ISSN={1361-6633},
   url={http://dx.doi.org/10.1088/0034-4885/64/8/301},
   DOI={10.1088/0034-4885/64/8/301},
   number={8},
   journal={Reports on Progress in Physics},
   publisher={IOP Publishing},
   author={Carlip, S},
   year={2001},
   month=July, pages={885–942} }

@misc{rovelli1998stringsloopsotherscritical,
      title={Strings, loops and others: a critical survey of the present approaches to quantum gravity}, 
      author={Carlo Rovelli},
      year={1998},
      eprint={gr-qc/9803024},
      archivePrefix={arXiv},
      primaryClass={gr-qc},
      url={https://arxiv.org/abs/gr-qc/9803024}, 
}

@article{Blau:2009fzi,
    author = "Blau, Matthias and Theisen, Stefan",
    title = "{String theory as a theory of quantum gravity: A status report}",
    doi = "10.1007/s10714-008-0752-z",
    journal = "Gen. Rel. Grav.",
    volume = "41",
    pages = "743--755",
    year = "2009"
}

@article{Donoghue_1994,
   title={General relativity as an effective field theory: The leading quantum corrections},
   volume={50},
   ISSN={0556-2821},
   url={http://dx.doi.org/10.1103/PhysRevD.50.3874},
   DOI={10.1103/physrevd.50.3874},
   number={6},
   journal={Physical Review D},
   publisher={American Physical Society (APS)},
   author={Donoghue, John F.},
   year={1994},
   month=Sept, pages={3874–3888} }

@article{Seiberg_1999,
   title={String theory and noncommutative geometry},
   volume={1999},
   ISSN={1029-8479},
   url={http://dx.doi.org/10.1088/1126-6708/1999/09/032},
   DOI={10.1088/1126-6708/1999/09/032},
   number={09},
   journal={Journal of High Energy Physics},
   publisher={Springer Science and Business Media LLC},
   author={Seiberg, Nathan and Witten, Edward},
   year={1999},
   month=Sept, pages={032–032} }

@article{Douglas_2001,
   title={Noncommutative field theory},
   volume={73},
   ISSN={1539-0756},
   url={http://dx.doi.org/10.1103/RevModPhys.73.977},
   DOI={10.1103/revmodphys.73.977},
   number={4},
   journal={Reviews of Modern Physics},
   publisher={American Physical Society (APS)},
   author={Douglas, Michael and Nekrasov, Nikita},
   year={2001},
   month=Nov, pages={977–1029} }

@article{Szabo_2003,
   title={Quantum field theory on noncommutative spaces},
   volume={378},
   ISSN={0370-1573},
   url={http://dx.doi.org/10.1016/S0370-1573(03)00059-0},
   DOI={10.1016/s0370-1573(03)00059-0},
   number={4},
   journal={Physics Reports},
   publisher={Elsevier BV},
   author={Szabo, R},
   year={2003},
   month=May, pages={207–299} }

@article{Bayen:1977hb,
    author = "Bayen, F. and Flato, M. and Fronsdal, C. and Lichnerowicz, A. and Sternheimer, D.",
    title = "{Deformation Theory and Quantization. 2. Physical Applications}",
    reportNumber = "UCLA-77-TEP-8",
    doi = "10.1016/0003-4916(78)90225-7",
    journal = "Annals Phys.",
    volume = "111",
    pages = "111",
    year = "1978"
}

@article{Bayen:1977ha,
    author = "Bayen, F. and Flato, M. and Fronsdal, C. and Lichnerowicz, A. and Sternheimer, D.",
    title = "{Deformation Theory and Quantization. 1. Deformations of Symplectic Structures}",
    reportNumber = "UCLA/77/TEP/7",
    doi = "10.1016/0003-4916(78)90224-5",
    journal = "Annals Phys.",
    volume = "111",
    pages = "61",
    year = "1978"
}

@article{10.4310/jdg/1214455536,
author = {Boris V. Fedosov},
title = {{A simple geometrical construction of deformation quantization}},
volume = {40},
journal = {Journal of Differential Geometry},
number = {2},
publisher = {Lehigh University},
pages = {213 -- 238},
year = {1994},
doi = {10.4310/jdg/1214455536},
URL = {https://doi.org/10.4310/jdg/1214455536}
}

@article{Ishibashi_1997,
   title={A large-N reduced model as superstring},
   volume={498},
   ISSN={0550-3213},
   url={http://dx.doi.org/10.1016/S0550-3213(97)00290-3},
   DOI={10.1016/s0550-3213(97)00290-3},
   number={1-2},
   journal={Nuclear Physics B},
   publisher={Elsevier BV},
   author={Ishibashi, Nobuyuki and Kawai, Hikaru and Kitazawa, Yoshihisa and Tsuchiya, Asato},
   year={1997},
   month=Aug, pages={467–491} }

@misc{steinacker2013noncommutativegeometrymatrixmodels,
      title={Non-commutative geometry and matrix models}, 
      author={Harold Steinacker},
      year={2013},
      eprint={1109.5521},
      archivePrefix={arXiv},
      primaryClass={hep-th},
      url={https://arxiv.org/abs/1109.5521}, 
}

@book{Connes1994,
    author    = {Connes, Alain},
    title     = {Noncommutative Geometry},
    publisher = {Academic Press},
    address   = {San Diego},
    year      = {1994},
    isbn      = {9780121858605},
    pagetotal = {661}
}

@article{Chamseddine_1997,
   title={The Spectral Action Principle},
   volume={186},
   ISSN={1432-0916},
   url={http://dx.doi.org/10.1007/s002200050126},
   DOI={10.1007/s002200050126},
   number={3},
   journal={Communications in Mathematical Physics},
   publisher={Springer Science and Business Media LLC},
   author={Chamseddine, A. H. and Connes, A.},
   year={1997},
   month=July, pages={731–750} }

@article{Chamseddine_2007,
   title={Gravity and the standard model with neutrino mixing},
   volume={11},
   ISSN={1095-0753},
   url={http://dx.doi.org/10.4310/ATMP.2007.v11.n6.a3},
   DOI={10.4310/atmp.2007.v11.n6.a3},
   number={6},
   journal={Advances in Theoretical and Mathematical Physics},
   publisher={International Press of Boston},
   author={Chamseddine, Ali H. and Connes, Alain and Marcolli, Matilde},
   year={2007},
   pages={991–1089} }

@article{Groenewold:1946kp,
    author = "Groenewold, H. J.",
    title = "{On the Principles of elementary quantum mechanics}",
    doi = "10.1016/S0031-8914(46)80059-4",
    journal = "Physica",
    volume = "12",
    pages = "405--460",
    year = "1946"
}

@article{Moyal1949,
  author = {Moyal, J. E.},
  title = {Quantum mechanics as a statistical theory},
  journal = {Mathematical Proceedings of the Cambridge Philosophical Society},
  volume = {45},
  number = {1},
  pages = {99--124},
  year = {1949},
  doi = {10.1017/S030500410002482X}
}

@book{Wald1984,
  author    = {Wald, Robert M.},
  title     = {General Relativity},
  publisher = {University of Chicago Press},
  address   = {Chicago},
  year      = {1984},
  isbn      = {978-0-226-87032-8},
  url       = {https://press.uchicago.edu/ucp/books/book/chicago/G/bo5952261.html}
}

@book{HawkingEllis1973,
    author    = {Hawking, Stephen W. and Ellis, George F. R.},
    title     = {The Large Scale Structure of Space-Time},
    publisher = {Cambridge University Press},
    year      = {1973},
    series    = {Cambridge Monographs on Mathematical Physics},
    isbn      = {9780521099066},
    address   = {Cambridge}
}

@book{oneill1983semiriemannian,
  title     = {Semi-Riemannian Geometry: With Applications to Relativity},
  author    = {O'Neill, Barrett},
  series    = {Pure and Applied Mathematics},
  volume    = {103},
  year      = {1983},
  publisher = {Academic Press},
  address   = {New York},
  isbn      = {978-0-12-526740-3}
}

@book{lee2012introduction,
  title={Introduction to Smooth Manifolds},
  author={Lee, John M.},
  year={2012},
  edition={2},
  series={Graduate Texts in Mathematics},
  volume={218},
  publisher={Springer},
  address={New York, NY},
  doi={10.1007/978-1-4419-9982-5},
  isbn={978-1-4419-9981-8}
}

@book{lee2018introduction,
  author    = {Lee, John M.},
  title     = {Introduction to Riemannian Manifolds},
  edition   = {2nd},
  series    = {Graduate Texts in Mathematics},
  volume    = {176},
  publisher = {Springer International Publishing},
  address   = {Cham},
  year      = {2018},
  doi       = {10.1007/978-3-319-91755-9},
  isbn      = {978-3-319-91754-2}
}

@book{doCarmo1992Riemannian,
  author    = {do Carmo, Manfredo P.},
  title     = {Riemannian Geometry},
  publisher = {Birkh{\"a}user},
  address   = {Boston, MA},
  year      = {1992},
  translator = {Flaherty, Francis},
  series    = {Mathematics: Theory \& Applications},
  isbn      = {978-0-8176-3490-2}
}

@book{spivak1999comprehensive,
  title={A Comprehensive Introduction to Differential Geometry},
  author={Spivak, Michael},
  volume={1--5},
  edition={3rd},
  year={1999},
  publisher={Publish or Perish, Inc.},
  address={Houston, Tex.}
}

@book{warner1983foundations,
  title     = {Foundations of Differentiable Manifolds and Lie Groups},
  author    = {Warner, Frank W.},
  series    = {Graduate Texts in Mathematics},
  volume    = {94},
  year      = {1983},
  publisher = {Springer-Verlag},
  address   = {New York, NY},
  isbn      = {978-0-387-90894-6},
  url       = {https://springer.com}
}

@ARTICLE{1955PhRv...98.1123R,
       author = {{Raychaudhuri}, Amalkumar},
        title = "{Relativistic Cosmology. I}",
      journal = {Physical Review},
         year = 1955,
        month = may,
       volume = {98},
       number = {4},
        pages = {1123-1126},
          doi = {10.1103/PhysRev.98.1123},
       adsurl = {https://ui.adsabs.harvard.edu/abs/1955PhRv...98.1123R}
}

@article{frobenius1877ueber,
  author        = {Frobenius, Ferdinand Georg},
  title         = {Ueber das Pfaffsche Problem},
  journal       = {Journal f{\"u}r die reine und angewandte Mathematik},
  volume        = {1877},
  number        = {82},
  pages         = {230--315},
  year          = {1877},
  publisher     = {Walter de Gruyter, Berlin/New York},
  note          = {Published in English as "On the Pfaffian Problem"}
}

@book{mukhanov2005physical,
  title     = {Physical Foundations of Cosmology},
  author    = {Mukhanov, Viatcheslav},
  year      = {2005},
  publisher = {Cambridge University Press},
  address   = {Cambridge},
  isbn      = {9780521563987}
}

@BOOK{2003moco.book.....D,
       author = {{Dodelson}, Scott},
        title = "{Modern Cosmology}",
         year = 2003,
       adsurl = {https://ui.adsabs.harvard.edu/abs/2003moco.book.....D}
}

@book{weinberg2008cosmology,
  author    = {Steven Weinberg},
  title     = {Cosmology},
  publisher = {Oxford University Press},
  year      = {2008},
  address   = {Oxford},
  isbn      = {9780198526827}
}

@article{Perivolaropoulos_2022,
   title={Challenges for $\Lambda$CDM: An update},
   volume={95},
   ISSN={1387-6473},
   url={http://dx.doi.org/10.1016/j.newar.2022.101659},
   DOI={10.1016/j.newar.2022.101659},
   journal={New Astronomy Reviews},
   publisher={Elsevier BV},
   author={Perivolaropoulos, L. and Skara, F.},
   year={2022},
   month=Dec, pages={101659} }

@article{Di_Valentino_2021,
   title={In the realm of the Hubble tension—a review of solutions},
   volume={38},
   ISSN={1361-6382},
   url={http://dx.doi.org/10.1088/1361-6382/ac086d},
   DOI={10.1088/1361-6382/ac086d},
   number={15},
   journal={Classical and Quantum Gravity},
   publisher={IOP Publishing},
   author={Di Valentino, Eleonora and Mena, Olga and Pan, Supriya and Visinelli, Luca and Yang, Weiqiang and Melchiorri, Alessandro and Mota, David F and Riess, Adam G and Silk, Joseph},
   year={2021},
   month=July, pages={153001} }

@article{2020,
   title={Planck 2018 results: VI. Cosmological parameters},
   volume={641},
   ISSN={1432-0746},
   url={http://dx.doi.org/10.1051/0004-6361/201833910},
   DOI={10.1051/0004-6361/201833910},
   journal={Astronomy \& Astrophysics},
   publisher={EDP Sciences},
   author={Planck Collaboration},
   year={2020},
   month=Sept, pages={A6} }

@article{Perlmutter_1999,
   title={Measurements of $\Omega$ and $\Lambda$ from 42 High‐Redshift Supernovae},
   volume={517},
   ISSN={1538-4357},
   url={http://dx.doi.org/10.1086/307221},
   DOI={10.1086/307221},
   number={2},
   journal={The Astrophysical Journal},
   publisher={American Astronomical Society},
   author={Perlmutter et.al, S},
   year={1999},
   month=June, pages={565–586} }

@article{Riess_1998,
   title={Observational Evidence from Supernovae for an Accelerating Universe and a Cosmological Constant},
   volume={116},
   ISSN={0004-6256},
   url={http://dx.doi.org/10.1086/300499},
   DOI={10.1086/300499},
   number={3},
   journal={The Astronomical Journal},
   publisher={American Astronomical Society},
   author={Riess et.al, Adam G. },
   year={1998},
   month=Sept, pages={1009–1038} }

@article{Frieman_2008,
   title={Dark Energy and the Accelerating Universe},
   volume={46},
   ISSN={1545-4282},
   url={http://dx.doi.org/10.1146/annurev.astro.46.060407.145243},
   DOI={10.1146/annurev.astro.46.060407.145243},
   number={1},
   journal={Annual Review of Astronomy and Astrophysics},
   publisher={Annual Reviews},
   author={Frieman, Joshua A. and Turner, Michael S. and Huterer, Dragan},
   year={2008},
   month=Sept, pages={385–432} }

@article{Huterer_2017,
   title={Dark energy two decades after: observables, probes, consistency tests},
   volume={81},
   ISSN={1361-6633},
   url={http://dx.doi.org/10.1088/1361-6633/aa997e},
   DOI={10.1088/1361-6633/aa997e},
   number={1},
   journal={Reports on Progress in Physics},
   publisher={IOP Publishing},
   author={Huterer, Dragan and Shafer, Daniel L},
   year={2017},
   month=Dec, pages={016901} }

@article{Joyce_2015,
   title={Beyond the cosmological standard model},
   volume={568},
   ISSN={0370-1573},
   url={http://dx.doi.org/10.1016/j.physrep.2014.12.002},
   DOI={10.1016/j.physrep.2014.12.002},
   journal={Physics Reports},
   publisher={Elsevier BV},
   author={Joyce, Austin and Jain, Bhuvnesh and Khoury, Justin and Trodden, Mark},
   year={2015},
   month=Mar, pages={1–98} }

@article{Raychaudhuri1957SingularSI,
  title={Singular State in Relativistic Cosmology},
  author={Amalkumar Raychaudhuri},
  journal={Physical Review},
  year={1957},
  volume={106},
  pages={172-173},
  url={https://api.semanticscholar.org/CorpusID:120003457}
}

@article{PhysRevLett.14.57,
  title = {Gravitational Collapse and Space-Time Singularities},
  author = {Penrose, Roger},
  journal = {Phys. Rev. Lett.},
  volume = {14},
  issue = {3},
  pages = {57-59},
  numpages = {0},
  year = {1965},
  month = {Jan},
  publisher = {American Physical Society},
  doi = {10.1103/PhysRevLett.14.57},
  url = {https://link.aps.org/doi/10.1103/PhysRevLett.14.57}
}

@article{PhysRevLett.17.444,
  title = {Singularities in the Universe},
  author = {Hawking, S. W.},
  journal = {Phys. Rev. Lett.},
  volume = {17},
  issue = {8},
  pages = {444--445},
  numpages = {0},
  year = {1966},
  month = {Aug},
  publisher = {American Physical Society},
  doi = {10.1103/PhysRevLett.17.444},
  url = {https://link.aps.org/doi/10.1103/PhysRevLett.17.444}
}

@article{Hawking:1970zqf,
    author = "Hawking, S. W. and Penrose, R.",
    title = "{The Singularities of gravitational collapse and cosmology}",
    doi = "10.1098/rspa.1970.0021",
    journal = "Proc. Roy. Soc. Lond. A",
    volume = "314",
    pages = "529--548",
    year = "1970"
}

@article{Witten_2020,
   title={Light rays, singularities, and all that},
   volume={92},
   ISSN={1539-0756},
   url={http://dx.doi.org/10.1103/RevModPhys.92.045004},
   DOI={10.1103/revmodphys.92.045004},
   number={4},
   journal={Reviews of Modern Physics},
   publisher={American Physical Society (APS)},
   author={Witten, Edward},
   year={2020},
   month=Nov }

@article{Senovilla_2022,
   title={A critical appraisal of the singularity theorems},
   volume={380},
   ISSN={1471-2962},
   url={http://dx.doi.org/10.1098/rsta.2021.0174},
   DOI={10.1098/rsta.2021.0174},
   number={2222},
   journal={Philosophical Transactions of the Royal Society A: Mathematical, Physical and Engineering Sciences},
   publisher={The Royal Society},
   author={Senovilla, José M. M.},
   year={2022},
   month=Mar }

@article{Hawking:1967ju,
    author = "Hawking, Stephen",
    title = "{The occurrence of singularities in cosmology. III. Causality and singularities}",
    doi = "10.1098/rspa.1967.0164",
    journal = "Proc. Roy. Soc. Lond. A",
    volume = "300",
    pages = "187--201",
    year = "1967"
}

@book{joshi1993global,
  title={Global Aspects in Gravitation and Cosmology},
  author={Joshi, Pankaj S.},
  series={International Series of Monographs on Physics},
  volume={87},
  year={1993},
  publisher={Clarendon Press},
  address={Oxford},
  isbn={9780198500797}
}

@book{ellis2012relativistic,
  title     = {Relativistic Cosmology},
  author    = {Ellis, George F. R. and Maartens, Roy and MacCallum, Malcolm A. H.},
  publisher = {Cambridge University Press},
  year      = {2012},
  isbn      = {9781139014403}
}

@book{carroll2003spacetime,
  author    = {Carroll, Sean M.},
  title     = {Spacetime and Geometry: An Introduction to General Relativity},
  publisher = {Benjamin Cummings},
  year      = {2003},
  isbn      = {978-0805387322},
  address   = {San Francisco},
  edition   = {1st}
}

@book{MisnerThorneWheeler1973,
    title     = {Gravitation},
    author    = {Misner, Charles W. and Thorne, Kip S. and Wheeler, John Archibald},
    year      = {1973},
    publisher = {W. H. Freeman},
    address   = {San Francisco},
    isbn      = {978-0716703440}
}

@book{schutz1980geometrical,
  title     = {Geometrical Methods of Mathematical Physics},
  author    = {Schutz, Bernard F.},
  year      = {1980},
  publisher = {Cambridge University Press},
  address   = {Cambridge},
  isbn      = {9780521298872}
}

@book{lovelock1989tensors,
  title={Tensors, Differential Forms, and Variational Principles},
  author={Lovelock, David and Rund, Hanno},
  year={1989},
  publisher={Dover Publications},
  address={New York},
  isbn={978-0486658407}
}

@inbook{Curiel_2017,
   title={A Primer on Energy Conditions},
   ISBN={9781493932108},
   ISSN={2381-5841},
   url={http://dx.doi.org/10.1007/978-1-4939-3210-8_3},
   DOI={10.1007/978-1-4939-3210-8_3},
   booktitle={Towards a Theory of Spacetime Theories},
   publisher={Springer New York},
   author={Curiel, Erik},
   year={2017},
   pages={43–104} }

@book{beem2017global,
  title     = {Global Lorentzian Geometry},
  author    = {Beem, John K. and Ehrlich, Paul E. and Easley, Kevin L.},
  year      = {2017},
  edition   = {2nd},
  publisher = {CRC Press},
  address   = {Boca Raton},
  series    = {Chapman \& Hall/CRC Pure and Applied Mathematics},
  volume    = {202},
  isbn      = {9780824793241}
}

@article{Sachs1961_VI,
  author    = {Sachs, R. K.},
  title     = {Gravitational Waves in General Relativity. VI. The Outgoing Radiation Condition},
  journal   = {Proceedings of the Royal Society of London. Series A. Mathematical and Physical Sciences},
  volume    = {264},
  number    = {1318},
  pages     = {309--338},
  year      = {1961},
  publisher = {The Royal Society},
  doi       = {10.1098/rspa.1961.0202},
  url       = {https://royalsocietypublishing.org}
}

@article{Sachs:1962wk,
    author = "Sachs, R. K.",
    title = "{Gravitational waves in general relativity. VIII. Waves in asymptotically flat space-time}",
    journal = "Proceedings of the Royal Society of London. Series A. Mathematical and Physical Sciences",
    volume = "270",
    number = "1340",
    pages = "103--126",
    year = "1962",
    doi = "10.1098/rspa.1962.0206",
    url = "https://doi.org"
}

@book{penrose1972techniques,
  title     = {Techniques of Differential Topology in Relativity},
  author    = {Penrose, Roger},
  series    = {CBMS-NSF Regional Conference Series in Applied Mathematics},
  volume    = {7},
  year      = {1972},
  publisher = {Society for Industrial and Applied Mathematics},
  address   = {Philadelphia, PA},
  isbn      = {978-0-89871-005-2},
  doi       = {10.1137/1.9781611970609}
}

@article{Minguzzi2019Lorentzian,
  author    = {Minguzzi, Ettore},
  title     = {Lorentzian causality theory},
  journal   = {Living Reviews in Relativity},
  volume    = {22},
  number    = {1},
  pages     = {3},
  year      = {2019},
  doi       = {10.1007/s41114-019-0019-x},
  url       = {https://doi.org/10.1007/s41114-019-0019-x}
}

@incollection{Minguzzi2008Causal,
  author    = {Minguzzi, Ettore and S{\'a}nchez, Miguel},
  title     = {The causal hierarchy of spacetimes},
  booktitle = {Recent developments in pseudo-{R}iemannian geometry},
  series    = {ESI Lect. Math. Phys.},
  pages     = {299--358},
  publisher = {Eur. Math. Soc. Publ. House},
  address   = {Z{\"u}rich},
  year      = {2008},
  note      = {arXiv:gr-qc/0609119}
}

@article{Bernal:2004gm,
    author = "Bernal, Antonio N. and Sanchez, Miguel",
    title = "{Smoothness of time functions and the metric splitting of globally hyperbolic spacetimes}",
    journal = "Commun. Math. Phys.",
    volume = "257",
    pages = "43--50",
    year = "2005",
    eprint = "gr-qc/0401112",
    archivePrefix = "arXiv",
    primaryClass = "gr-qc",
    doi = "10.1007/s00220-005-1346-1"
}

@article{BernalSanchez2003,
  author    = {Bernal, Antonio N. and S{\'{a}}nchez, Miguel},
  title     = {On Smooth Cauchy Hypersurfaces and Geroch's Splitting Theorem},
  journal   = {Communications in Mathematical Physics},
  volume    = {243},
  number    = {3},
  pages     = {461--470},
  year      = {2003},
  doi       = {10.1007/s00220-003-0982-6},
  archivePrefix = {arXiv},
  eprint    = {gr-qc/0306108},
  primaryClass = {gr-qc}
}

@article{Bernal_2006,
  author = {Bernal, Antonio N. and S{\'{a}}nchez, Miguel},
  title = {Further results on the smoothability of Cauchy hypersurfaces and Cauchy time functions},
  journal = {Letters in Mathematical Physics},
  volume = {77},
  number = {2},
  pages = {183--197},
  year = {2006},
  month = {jul},
  publisher = {Springer Science and Business Media {LLC}},
  volume = {77},
  doi = {10.1007/s11005-006-0091-5},
  url = {https://doi.org/10.1007/s11005-006-0091-5},
  archivePrefix = {arXiv},
  eprint = {gr-qc/0512095},
  primaryClass = {gr-qc}
}

@article{Hayward:1993hg,
    author = "Hayward, Sean A.",
    title = "{General laws of black-hole dynamics}",
    eprint = "gr-qc/9303006",
    archivePrefix = "arXiv",
    primaryClass = "gr-qc",
    doi = "10.1103/PhysRevD.49.6467",
    journal = "Phys. Rev. D",
    volume = "49",
    pages = "6467--6480",
    year = "1994"
}

@article{Hayward:1997jp,
    author = "Hayward, Sean A.",
    title = "{Unified first law of black-hole dynamics and relativistic thermodynamics}",
    journal = "Class. Quant. Grav.",
    volume = "15",
    pages = "3147--3162",
    year = "1998",
    eprint = "gr-qc/9710089",
    archivePrefix = "arXiv",
    primaryClass = "gr-qc",
    doi = "10.1088/0264-9381/15/10/017"
}

@article{Hayward:2004md,
    author = "Hayward, Sean A.",
    title = "{Energy conservation for dynamical black holes}",
    journal = "Phys. Rev. Lett.",
    volume = "93",
    pages = "251101",
    year = "2004",
    eprint = "gr-qc/0404077",
    archivePrefix = "arXiv",
    primaryClass = "gr-qc",
    doi = "10.1103/PhysRevLett.93.251101"
}

@article{Galloway2000MaxPrinciples,
  author    = {Galloway, Gregory J.},
  title     = {Maximum Principles for Null Hypersurfaces and Null Splitting Theorems},
  journal   = {Annales Henri Poincar{\'e}},
  volume    = {1},
  number    = {3},
  pages     = {543--567},
  year      = {2000},
  doi       = {10.1007/s000230050006},
  archivePrefix = {arXiv},
  eprint    = {math/9909158},
  primaryClass = {math.DG}
}

@article{Beran2022Splitting,
  title={The splitting theorem for globally hyperbolic Lorentzian length spaces with non-negative timelike curvature},
  author={Beran, Tobias and Ohanyan, Argam and Rott, Felix and Solis, Didier A.},
  journal={Letters in Mathematical Physics},
  volume={113},
  number={48},
  year={2023},
  publisher={Springer},
  note={arXiv:2209.14724}
}

@article{Galloway:2005mf,
    author = "Galloway, Gregory J. and Schoen, Richard",
    title = "{A generalization of Hawking's black hole topology theorem to higher dimensions}",
    eprint = "gr-qc/0509107",
    archivePrefix = "arXiv",
    primaryClass = "gr-qc",
    doi = "10.1007/s00220-006-0019-z",
    journal = "Commun. Math. Phys.",
    volume = "266",
    pages = "571--576",
    year = "2006"
}

@article{Senovilla1998,
  author    = {Senovilla, Jos{\'e} M. M.},
  title     = {Singularity Theorems and Their Consequences},
  journal   = {General Relativity and Gravitation},
  volume    = {30},
  number    = {5},
  pages     = {701--848},
  year      = {1998},
  doi       = {10.1023/A:1018801101244},
  url       = {https://doi.org}
}

@article{Ratra:1987rm,
    author = "Ratra, Bharat and Peebles, P. J. E.",
    title = "{Cosmological Consequences of a Rolling Homogeneous Scalar Field}",
    reportNumber = "PUPT-1072",
    doi = "10.1103/PhysRevD.37.3406",
    journal = "Phys. Rev. D",
    volume = "37",
    pages = "3406",
    year = "1988"
}

@article{Wetterich_1988,
   title={Cosmology and the fate of dilatation symmetry},
   volume={302},
   ISSN={0550-3213},
   url={http://dx.doi.org/10.1016/0550-3213(88)90193-9},
   DOI={10.1016/0550-3213(88)90193-9},
   number={4},
   journal={Nuclear Physics B},
   publisher={Elsevier BV},
   author={Wetterich, C.},
   year={1988},
   month=June, pages={668–696} }

@article{Caldwell_2002,
   title={A phantom menace? Cosmological consequences of a dark energy component with super-negative equation of state},
   volume={545},
   ISSN={0370-2693},
   url={http://dx.doi.org/10.1016/S0370-2693(02)02589-3},
   DOI={10.1016/s0370-2693(02)02589-3},
   number={1-2},
   journal={Physics Letters B},
   publisher={Elsevier BV},
   author={Caldwell, R.R},
   year={2002},
   month=Oct, pages={23–29} }

@article{COPELAND_2006,
   title={DYNAMICS OF DARK ENERGY},
   volume={15},
   ISSN={1793-6594},
   url={http://dx.doi.org/10.1142/S021827180600942X},
   DOI={10.1142/s021827180600942x},
   number={11},
   journal={International Journal of Modern Physics D},
   publisher={World Scientific Pub Co Pte Ltd},
   author={COPELAND, EDMUND J. and SAMI, M. and TSUJIKAWA, SHINJI},
   year={2006},
   month=Nov, pages={1753–1935} }

@article{Adame_2025,
   title={DESI 2024 VI:  cosmological constraints from the measurements of baryon acoustic oscillations},
   volume={2025},
   ISSN={1475-7516},
   url={http://dx.doi.org/10.1088/1475-7516/2025/02/021},
   DOI={10.1088/1475-7516/2025/02/021},
   number={02},
   journal={Journal of Cosmology and Astroparticle Physics},
   publisher={IOP Publishing},
   author={DESI Collaboration},
   year={2025},
   month=Feb, pages={021} }

@article{Giar_2025,
   title={Dynamical dark energy beyond Planck? Constraints from multiple CMB probes, DESI BAO, and type-Ia supernovae},
   volume={112},
   ISSN={2470-0029},
   url={http://dx.doi.org/10.1103/ss37-cxhn},
   DOI={10.1103/ss37-cxhn},
   number={2},
   journal={Physical Review D},
   publisher={American Physical Society (APS)},
   author={Giarè, William},
   year={2025},
   month=July }

@article{Gialamas_2025,
   title={Interpreting DESI 2024 BAO: Late-time dynamical dark energy or a local effect?},
   volume={111},
   ISSN={2470-0029},
   url={http://dx.doi.org/10.1103/PhysRevD.111.043540},
   DOI={10.1103/physrevd.111.043540},
   number={4},
   journal={Physical Review D},
   publisher={American Physical Society (APS)},
   author={Gialamas, Ioannis D. and Hütsi, Gert and Kannike, Kristjan and Racioppi, Antonio and Raidal, Martti and Vasar, Martin and Veermäe, Hardi},
   year={2025},
   month=Feb }

@article{Snyder:1946qz,
    author = "Snyder, Hartland S.",
    title = "{Quantized space-time}",
    doi = "10.1103/PhysRev.71.38",
    journal = "Phys. Rev.",
    volume = "71",
    pages = "38--41",
    year = "1947"
}

@article{Matt_Gregory_2026,
   title={Geodesic Equation in Noncommutative Space: A Field Theory Perspective},
   volume={18},
   ISSN={2073-8994},
   url={http://dx.doi.org/10.3390/sym18071138},
   DOI={10.3390/sym18071138},
   number={7},
   journal={Symmetry},
   publisher={MDPI AG},
   author={Matté Gregory, Carolina and Jurić, Tajron and Pinzul, Aleksandr},
   year={2026},
   month=July, pages={1138} }
\end{document}